\definecolor{mg}{rgb}  {0.85, 0.,  0.85}
\def\a{\mathfrak a}
\def\AA{\mathcal A}
\def\B{\mathscr B}
\def\C{\mathbb C}
\def\CC{\mathcal C}
\def\d{\mathrm d}
\def\Dt{\mathfrak D}
\def\EE{\mathcal E}
\def\f{\mathfrak f}
\def\F{\mathscr F}
\def\G{\mathcal G}
\def\H{\mathcal H}
\def\Hrond{\mathscr H}
\def\h{\mathfrak h}
\def\K{\mathscr K}
\def\Kt{\mathfrak K}
\def\KK{\mathcal K}
\def\N{\mathbb N}
\def\NN{\mathrm N}
\def\P{\mathcal P}
\def\QQ{\mathcal Q}
\def\R{\mathbb R}
\def\S{\mathbb S}
\def\s{\mathfrak s}
\def\T{\mathcal T}
\def\u{\mathfrak u}
\def\v{\mathfrak v}
\def\V{\mathcal V}
\def\Xt{\mathfrak X}
\def\Z{\mathbb Z}
\newcommand{\ox}{\otimes}
\def\Tr{\mathrm{Tr}}
\def\diag{\mathop{\mathrm{diag}}\nolimits}
\def\e{\mathop{\mathrm{e}}\nolimits}
\def\Ran{\mathop{\mathrm{Ran}}\nolimits}
\DeclareMathOperator*{\slim}{s\hspace{0.1pt}-\hspace{0.1pt}lim}
\def\supp{\mathop{\mathrm{supp}}\nolimits}
\def\ltwo{\mathop{\mathrm{L}^2}\nolimits}
\def\arctanh{\mathop{\mathrm{arctanh}}\nolimits}
\def\sp{\mathop{\mathrm{span}}\nolimits}
\def\tl{\tilde \lambda^\theta}
\def\tP{\tilde{\mathcal{P}}^\theta}
\def\tH{{{\tilde \Hrond}^\theta}}
\def\Arg{\mathrm{Arg}}
\def\Var{\mathrm{Var}}
\def\ind{\mathop{\mathrm{ind}}\nolimits}
\newtheorem{Theorem}{Theorem}[section]
\newtheorem{Remark}[Theorem]{Remark}
\newtheorem{Lemma}[Theorem]{Lemma}
\newtheorem{Corollary}[Theorem]{Corollary}
\newtheorem{Proposition}[Theorem]{Proposition}
\newtheorem{Definition}[Theorem]{Definition}
\begin{document}

%--------------------------------------------------------------------------------------
% Title
%--------------------------------------------------------------------------------------

\title{Topological Levinson's theorem in presence of embedded thresholds
and discontinuities of the scattering matrix}

\author{V. Austen${}^1$, D. Parra${}^2$, A. Rennie${}^3$\footnote{Supported by ARC Discovery grant DP220101196.}, S. Richard${}^4$\footnote{Supported by JSPS Grant-in-Aid for scientific research C no 21K03292.}}

\date{\small}
\maketitle
\vspace{-1cm}

\begin{quote}
\begin{itemize}
\item[1] Department of chemistry, Graduate School of Science, Nagoya University, Furo-cho, Chikusa-ku, Nagoya, 464-8602, Japan
\item[2] 
Departamento de Matem\'atica y Estad\'istica, Universidad de La Frontera, \\ Av. Francisco Salazar 01145, Temuco, Chile
\item[3] School of Mathematics and Applied Statistics, University of Wollongong,\\
	Wollongong, Australia
\item[4] Institute for Liberal Arts and Sciences \& Graduate School of Mathematics, Nagoya University, Furo-cho, Chikusa-ku, Nagoya, 464-8601, Japan
\item[] E-mail:  austen.vic.a9@s.thers.ac.jp, daniel.parra.v@usach.cl, renniea@uow.edu.au, richard@math.nagoya-u.ac.jp
\end{itemize}
\end{quote}

%\date{\small}
\maketitle
%\vspace{-1cm}

%--------------------------------------------------------------------------------------

\begin{abstract}
A family of discrete Schr\"odinger operators is investigated through scattering theory.
The continuous spectrum of these operators exhibit changes of multiplicity, and some
of these operators possess resonances at thresholds.
It is shown that the corresponding wave operators belong to an
explicitly constructed $C^*$-algebra, whose K-theory is carefully analysed.
An index theorem is deduced from these investigations, which corresponds
to a topological version of Levinson's theorem in presence of embedded thresholds, resonances, and  changes of multiplicity of the scattering matrices.
In the second half of the paper, very detailed computations for the simplest realisation of this
family of operators are provided. In particular, a surface of resonances is  
exhibited, probably for the first time.
For Levinson's theorem, it is shown that contributions due to resonances at the lowest value and 
at the highest value of the continuous spectrum play an essential role.
\end{abstract}

\textbf{2010 Mathematics Subject Classification:} 81U10, 47A40

\smallskip

\textbf{Keywords:} Scattering matrix, wave operators, thresholds, index theorem, Levinson's theorem

%--------------------------------------------------------------------------------------
\tableofcontents
%--------------------------------------------------------------------------------------

%--------------------------------------------------------------------------------------
\section{Introduction}\label{sec:intro}
\setcounter{equation}{0}
%--------------------------------------------------------------------------------------

In this paper we prove a version of Levinson's theorem for scattering systems with 
non-constant multiplicity of the continuous spectrum. These systems appeared first in \cite{NRT}.
Levinson’s theorem relates the number of bound states of a quantum
mechanical system to the scattering part of the system.
The original formulation was by N. Levinson in \cite{Lev} in the context of
a Schr\"odinger operator with a spherically symmetric potential.
In \cite{KR06, KR07, KR08}, a topological interpretation of Levinson's theorem was
proposed. The topological approach consists in constructing a $C^*$-algebraic framework for the scattering system involving not 
only the scattering operator, but also new operators that describe the system at thresholds energies. The number of bound states is the Fredholm 
index of one wave operator, evaluated via a winding number.

The review paper \cite{Ri} describes the topological approach and applications to several classical models, see also \cite{AR23,BSB,IT19,SB16} for similar investigations. 
A common feature of the examples considered in \cite{Ri}, and all previous examples,
is that the continuous spectrum of the underlying self-adjoint operators is made of one connected set,  with no change of multiplicity.
The topological approach has also been applied to more exotic situations, for example to Dirac
operators (with an underlying spectrum made of two disjoint connected parts) \cite{PR}, 
to systems with embedded or complex eigenvalues \cite{NPR,RT10}, or to systems with an infinite number of eigenvalues \cite{IR1,IR2}. 

The non-constancy of the multiplicity of continuous spectrum incorporated here has various drastic effects: the spectral representation of the 
underlying unperturbed operator takes place in a direct integral with fibres
of non-constant dimension; the scattering matrices are acting on different Hilbert spaces with
jumps in their dimension; the  $C^*$-algebras containing the wave operators and  the scattering operator  have
a non-trivial internal structure. The notion of continuity of the scattering operator, 
which has played a crucial role in all investigations on Levinson's theorem, has
also to be suitably adapted.

These interesting new features arise in
a family of systems that were introduced and partially studied in \cite{NRT} as parts of a discrete model 
of scattering theory in a half-space, see also \cite{Cha00,CS00,Fra03,Fra04,JL00,RT16,RT17}. 
Here we consider each system individually, meaning
that we have at hand a family of scattering systems indexed by a parameter $\theta \in (0,\pi)$.
Each system carries changes of multiplicity in the continuous spectrum, 
and accordingly embedded thresholds. The resulting scattering matrices
act on some finite dimensional Hilbert spaces
with their dimension depending on the energy parameter. Even if a global notion of continuity
can not be defined in this setting, it has been shown in \cite[Sec.~3.3]{NRT} that a local notion of continuity holds, together with the existence of left or right limits at the points corresponding
to a change of multiplicity (or equivalently to a change of dimension).

The $C^*$-algebra $\EE^\theta$ containing the wave operator is built from functions of energy (with values in matrices of varying dimension) as well as functions of the appropriate  derivative on the energy spectrum. The bound states are computed as the winding number of the image of the wave operator in the quotient of $\EE^\theta$ by the compact operators. The quotient algebra consists of continuous matrix-valued functions on non-trivially overlapping loops (topologically speaking), where the overlaps happen along the energy spectrum of the system. The additional non-overlapping parts of the loops appear at opening\;\!/\;\!closing of channels, and restore continuity so that the winding number can be meaningfully discussed.
The proof that the quotient algebra can be viewed in this way is the first appearance of  ``twisted commutators'' \cite{CoM} in scattering theory.
As a result, we  obtain a numerical equality between the number of eigenvalues of the interacting Hamiltonian and  the winding number of the operator made of the scattering operator together with the opening\;\!/\;\!closing operators. Resonances at thresholds (embedded or not) and possibly embedded eigenvalues are automatically included in our computation. 
At the $C^*$-algebraic level, note that the direct integrals of $C^*$-algebras arising here can be studied using the formalism of pullback algebras, see for example \cite[Ex.~4.10.22]{HR}, allowing the $K$-theory to be computed.

Let us now be more precise about the content of this paper.
In Section \ref{sec:themodel} we introduce the model we shall consider
and recall the main properties exhibited in \cite{NRT}. Roughly speaking, the unperturbed 
system $H_0^\theta$ is a discrete magnetic Schr\"odinger operator acting
on a cylinder of the form $\N \times (\Z/N\Z)$ for any integer $N\geq 2$, with a constant magnetic
field in the $\N$ direction encoded by $\theta\in (0,\pi)$. The perturbed system $H^\theta$ is  a finite rank perturbation with perturbation supported on $\{0\}\times (\Z/N\Z)$. 
The full spectral and scattering theory for the pair of operators $(H^\theta,H^\theta_0)$
has been investigated in \cite{NRT}, and in particular an explicit formula
for the wave operator is recalled in Theorem \ref{thm_formula_theta}. 
Precise formulas for the spectral representation of $H_0^\theta$ are also presented, and  the changes of multiplicity  already appear in the spectral
decomposition of $H_0^\theta$, and hence persist throughout the subsequent analysis. 

In Section \ref{sec:C*} we  define two $C^*$-algebras: the $C^*$-algebra
$\AA^\theta$ which contains (after a unitary transformation) the scattering
operator $S^\theta$; and the $C^*$-algebra $\EE^\theta$ which contains (after a unitary transformation) the wave operator $W_-^\theta$. These algebras are constructed in the spectral representation of $H_0^\theta$, which explains their dependence on $\theta \in (0,\pi)$.
This section also contains a first description of the quotient algebra $\QQ^\theta$ of $\EE^\theta$ 
by the ideal of compact operators.  In particular, the proof that $\QQ^\theta$ consists of matrix-valued functions on a compact Hausdorff space uses techniques to analyse commutators that have not appeared before in the scattering literature.

The quotient  algebra $\QQ^\theta$ can be thought of as \emph{an upside down comb} with $2N$ teeth: the first $N$ of them support the operators corresponding to the opening of new channels of scattering, while the last $N$ teeth support the operators closing channels of scattering. 
The structure of the quotient algebra allows
us to analyse the various possible behaviours of the system at embedded thresholds. 
The $K$-theory for the algebra $\QQ^\theta$ is investigated in Section \ref{sec:Kth}. The algebra is efficiently described by assembling matrix-valued functions on intervals by gluing them at boundaries. This allows the Mayer-Vietoris theorem to be used to compute the $K$-theory inductively.

Section \ref{sec:Lev} contains the newly developed topological Levinson's theorem, 
in presence of embedded thresholds and discontinuities of the scattering matrix.
The number of bound states $\# \sigma_{\rm p}(H^\theta)$ is given by
\[
\# \sigma_{\rm p}(H^\theta)=N-\frac{\#\big\{j\mid \s_{jj}(\lambda^\theta_j\pm 2)=1\big\}}2+ \Var \big(\lambda \to \det S^\theta(\lambda)\big),
\] 
where $\s_{jj}(\lambda^\theta_j\pm 2)$ corresponds to a distinguished entry of the scattering matrix at the threshold energy $\lambda^\theta_j\pm 2$, and where $\Var \big(\lambda \mapsto \det S^\theta(\lambda)\big)$ 
is  the total variation of the argument of the piecewise continuous function $\lambda\mapsto\det S^\theta(\lambda)$. 
We refer to Theorem \ref{thm:Adam} for the details and for more explanations.
The proof is based on the usual winding number argument.
Again, its striking feature is its topological nature, despite the non-continuity of the 
scattering matrix, and even the change of dimension of these matrices.
In this section, we also describe the behaviour of the scattering matrix 
at the opening or at the closing of a new channel of scattering. 

Finally, Section \ref{sec:N=2} illustrates
our findings for $N=2$. In this case, all computations can be explicitly performed. 
The importance of the operators related to the opening or the closing of channels is highlighted. The section also contains an illustration of a surface of resonances.
For each set of parameters (which define the finite rank perturbation) in an open set, there
exists a unique $\theta\in (0,\pi)$ for which the scattering system exhibit a resonance at 
the lowest value of its continuous spectrum. By collecting these exceptional values of $\theta$ one gets the surface 
of resonances presented in Figure \ref{fig:3D}. 
For a fixed set of parameters, we also provide an illustration of the non-trivial dependence
on $\theta$ for the number of bound states of $H^\theta$, see Figure \ref{fig_Vic}.
Note that a reader interested only in topological results can skip this rather long section, 
but we hope that a reader interested in explicitly solvable models will enjoy it.

As a conclusion, this work illustrates the flexibility of the algebraic approach in proving index theorems in scattering theory. The success in dealing with
change of multiplicity opens the door for various new applications, such as  the 
$N$-body problem or highly anisotropic systems. 

\vspace{5mm}
{\bf Acknowledgements} 
\newline
A.~R.~acknowledges the support of the ARC Discovery grant DP220101196.

%--------------------------------------------------------------------------------------
\section{The model and the existing results}\label{sec:themodel}
\setcounter{equation}{0}
%--------------------------------------------------------------------------------------

In this section, we introduce the quantum system and give a rather complete description of the results obtained in \cite{NRT}, to which we refer for the details. The following exposition is partially based on the short review paper \cite{RiRIMS}.

We set $\N:=\{0,1,2,\dots\}$. In the Hilbert space $\H:=\ell^2(\N)$
we consider  the discrete Neumann
adjacency operator whose action on $\phi\in\ell^2(\N)$ is described by
$$
\big(\Delta_\NN\;\!\phi\big)(n)=
\begin{cases}
2^{1/2}\;\!\phi(1) &\hbox{if $n=0$}\\
2^{1/2}\;\!\phi(0)+\phi(2) &\hbox{if $n=1$}\\
\phi(n+1)+\phi(n-1) &\hbox{if $n\ge2$.}
\end{cases}
$$
For any fixed $N\in \N$ with $N\geq 2$ and for any fixed $\theta \in (0,\pi)$ 
we also consider the $N\times N$ Hermitian matrix
\begin{equation*}
A^\theta:=
\begin{pmatrix}
0 & 1 & 0 &\cdots & 0 &\e^{-i\theta}\\
1 & 0 & 1 &\ddots &  & 0\\
0 & 1 &\ddots &\ddots &\ddots &\vdots\\
\vdots &\ddots &\ddots &\ddots & 1 & 0\\
0 & &\ddots & 1 & 0 & 1\\
\e^{i\theta} & 0 &\cdots & 0 & 1 & 0
\end{pmatrix}.
\end{equation*}
These two ingredients lead to the self-adjoint operator $H_0(\theta)$ 
acting on $\ell^2\big(\N;\C^N\big)$ as
$$
H_0(\theta):=\Delta_\NN \otimes 1_N+ A^\theta
$$
with $1_N$ the $N \times N $ identity matrix.

This operator can be viewed as a discrete magnetic adjacency operator, see Remark \ref{rem:magnetic}. It is also the one which appears in a direct integral decomposition of 
an operator acting on $\ell^2(\N\times \Z)$ through a Bloch-Floquet
transformation, see \cite[Sec.~2]{NRT}. We do not emphasise this decomposition 
in the present work, but concentrate on each individual operator living on the fibers
of a direct integral.

The perturbed operator $H(\theta)$ describing the discrete quantum model is then given by
\begin{equation*}
H(\theta):=H_0(\theta)+V,
\end{equation*}
where $V$ is the multiplication operator by a nonzero, matrix-valued function
with support on $\{0\}\in \N$. In other words, there exists a nonzero function
$v:\{1,\dots,N\}\to\R$ such that for $\psi\in \ell^2\big(\N;\C^N\big)$, $n\in\N$, and $j\in \{1, \dots, N\}$
one has
$$
\big(H(\theta)\psi\big)_j(n)=\big(H_0(\theta)\psi\big)_j(n)+\delta_{0,n}\;\!v(j)\;\!\psi_j(0),
$$
with $\delta_{0,n}$ the Kronecker delta function.

\begin{Remark}\label{rem:magnetic}
$H_0(\theta)$ can be interpreted as a magnetic adjacency operator on the Cayley graph of the semi-group $\N\times (\Z/N \Z)$ with respect to a magnetic field constant in the $\N$ direction. In particular, it correspond to choosing a magnetic potential supported only on edges of the form $\big((n,N),(n,1)\big)$ for any $n\in \N$. In this representation, the perturbation $V$ corresponds
to a multiplicative perturbation with support on $\{0\}\times (\Z/N \Z)$.
\end{Remark}

We now set
$$
\h:=\ltwo\big([0,\pi),\tfrac{\d\omega}\pi;\C^N\big)
$$
and define a transformation in the $\N$-variable
$$
\G: \ell^2\big(\N;\C^N\big) \to\h
$$
given for $\psi \in \ell^2\big(\N;\C^N\big)_{\rm fin}$ (the finitely supported functions), $\omega\in [0,\pi)$, and $j\in \{1,\dots,N\}$ by
$$
\big(\G \psi\big)_j(\omega)
:=2^{1/2}\sum_{n\ge1}\cos(n\;\!\omega)\psi_j(n)+\psi_j(0).
$$
The transformation $\G$ extends to a unitary operator from
$\ell^2\big(\N;\C^N\big)$ to $\h$, which we denote by the same
symbol, and a direct computation gives the equality
$$
\G\;\!H_0(\theta)\;\!\G^*= H^\theta_0
\quad\hbox{with}\quad
H^\theta_0:=2\cos(\Omega)\otimes 1_N+A^\theta
$$
with the operator $2\cos(\Omega)$ of multiplication by the function
$[0,\pi)\ni\omega\mapsto2\cos(\omega)\in\R$.

Through the same unitary transformation, the operator $H(\theta)$ 
is unitarily equivalent to 
\begin{equation*}
H^\theta:=2\cos(\Omega)\otimes 1_N+A^\theta+\diag(v)P_0
\end{equation*}
with
\begin{equation*}
\big(\diag(v)\;\!\f\big)_j
:=v(j)\;\!\f_j
\quad\hbox{and}\quad
\big(P_0\;\!\f\big)_j
:=\int_0^\pi\f_j(\omega)\tfrac{\d\omega}\pi
\end{equation*}
for $\f\in\h$, $j\in\{1,\dots,N\}$.
Observe that the term $\diag(v)P_0$ corresponds to a finite rank
perturbation, and therefore $H^\theta$ and $H_0^\theta$
differ only by a finite rank operator.

Let us now move to spectral results. 
A direct inspection shows that the matrix $A^\theta$ has eigenvalues
$$
\lambda_j^\theta:=2\cos\left(\frac{\theta+2\pi\;\!j}N\right),\quad j\in\{1,\dots,N\},
$$
with corresponding eigenvectors $\xi_j^\theta\in\C^N$ having components
$\big(\xi_j^\theta\big)_k:=\e^{i(\theta+2\pi j)k/N}$, $j,k\in\{1,\dots,N\}$. Using the
notation $\P_j^\theta\equiv \tfrac{1}{N}|\xi^\theta_j\rangle \langle \xi^\theta_{j}|$ for the orthogonal projection associated to $\xi_j^\theta$, one has 
$A^\theta=\sum_{j=1}^N\lambda_j^\theta\;\!\P_j^\theta$.

The next step consists in exhibiting the spectral representation of $H_0^\theta$.
For that purpose, we first define for
$\theta\in(0,\pi)$ and $j\in\{1,\dots,N\}$ the sets
\begin{equation*}
I_j^\theta:=\big(\lambda_j^\theta-2,\lambda_j^\theta+2\big)
\quad\hbox{and}\quad
I^\theta:=\cup_{j=1}^NI^\theta_j,
\end{equation*}
with $\lambda_j^\theta$ the eigenvalues of $A^\theta$. 
Also, we consider for $\lambda\in I^\theta$ the fiber Hilbert space
$$
\Hrond^\theta(\lambda)
:=\sp\big\{\P^\theta_j\C^N\mid\hbox{$j\in\{1,\dots,N\}$ such that
$\lambda\in I_j^\theta$}\big\}\subset\C^N,
$$
and the corresponding direct integral Hilbert space
$$
\Hrond^\theta:=\int_{I^\theta}^\oplus\Hrond^\theta(\lambda)\;\!\d\lambda.
$$
Then, the map $\F^\theta:\h\to\Hrond^\theta$ 
acting on $\f\in\h$ and for a.e.~$\lambda\in I^\theta$ as
$$
\big(\F^\theta \f\big)(\lambda)
:=\pi^{-1/2}\sum_{\{j\mid\lambda\in I_j^\theta\}}
\big(4-\big(\lambda-\lambda_j^\theta\big)^2\big)^{-1/4}\;\!\P^\theta_j
\f\left(\arccos\left(\tfrac{\lambda-\lambda_j^\theta}2\right)\right).
$$
is unitary.
In addition, $\F^\theta$ diagonalises the Hamiltonian $H_0^\theta$, namely for all
$\zeta\in\Hrond^\theta$ and a.e. $\lambda\in I^\theta$ one has
\begin{equation*}
\big(\F^\theta H_0^\theta\;\!(\F^{\theta})^*\zeta\big)(\lambda)
=\lambda\;\!\zeta(\lambda)
=\big(X^\theta\zeta\big)(\lambda),
\end{equation*}
with $X^\theta$ the (bounded) operator of multiplication by the variable in
$\Hrond^\theta$.
One infers  that
$H_0^\theta$ has purely absolutely continuous spectrum equal to
\begin{equation*}
\textstyle\sigma(H_0^\theta)
=\overline{\Ran(X^\theta)}
=\overline{I^\theta}
=\big[\big(\min_j\lambda_j^\theta\big)-2,\big(\max_j\lambda_j^\theta\big)+2\big]
\subset[-4,4].
\end{equation*}
Note that we use the notation  $\overline{I}$ for the closure of a set $I\subset \R$.

The spectral representation of $H^\theta_0$ leads also naturally to the
notion of thresholds: these real values correspond to a change
of multiplicity of the spectrum.
Clearly, the set $\T^\theta$ of thresholds for the operator $H_0^\theta$
is given by
\begin{equation}\label{eq:thresh}
\T^\theta:=\big\{\lambda_j^\theta\pm2\mid j\in\{1,\dots,N\}\big\}.
\end{equation}

\begin{Remark}\label{rem:0pi}
By looking at these thresholds, one observes that there is a qualitative change between
these thresholds for $\theta \in (0,\pi)$ and for $\theta=0$ or $\theta = \pi$.
Indeed, for $\theta \in \{0,\pi\}$, some of these thresholds coincide, while it is not the case for
$\theta \in (0,\pi)$. This multiplicity different from $1$ has
several implications in the following constructions, and for simplicity we have decided
not to consider these pathological cases here. 
Note also that the matrices obtained for $\theta \in (\pi,2\pi)$ are unitarily equivalent to some matrices with $\theta\in (0,\pi)$. Again for simplicity, we restrict our attention
to $\theta \in (0,\pi)$ only.
\end{Remark}

The main spectral result for $H^\theta$ has been presented in \cite[Prop.~1.2]{NRT},
which we recall when necessary, see Proposition \ref{proposition_kernel}. About scattering theory, since the difference $H^\theta-H^\theta_0$ is a finite rank
operator we observe that the wave operators
$$
W_{\pm}^\theta:=\slim_{t\to\pm\infty}\e^{itH^\theta}\e^{-itH^\theta_0}
$$
exist and are complete, see  \cite[Thm~X.4.4]{Kat95}. As a consequence, the scattering operator
$$
S^\theta:=(W_+^\theta)^*W_-^\theta
$$
is a unitary operator in $\h$ commuting with $H^\theta_0$, and thus $S^\theta$ is
decomposable in the spectral representation of $H^\theta_0$, that is
for $\zeta\in\Hrond^\theta$ and  a.e.~$\lambda\in I^\theta$, one has
$$
\big(\F^\theta S^\theta (\F^\theta)^*\zeta\big)(\lambda)=S^\theta(\lambda)\zeta(\lambda),
$$
with the scattering matrix $S^\theta(\lambda)$ a unitary operator in $\Hrond^\theta(\lambda)$.

For $j,j'\in\{1,\dots,N\}$ and for
$
\lambda\in\big(I^\theta_j\cap I^\theta_{j'}\big)
\setminus\big(\T^\theta\cup\sigma_{\rm p}(H^\theta)\big)
$
let us define the channel scattering matrix
$S^\theta(\lambda)_{jj'}:=\P^\theta_jS^\theta(\lambda)\P^\theta_{j'}$
and consider the map
$$
\big(I^\theta_j\cap I^\theta_{j'}\big)
\setminus\big(\T^\theta\cup\sigma_{\rm p}(H^\theta)\big)
\ni\lambda\mapsto S^\theta(\lambda)_{jj'}
\in\B\big(\P_{j'}^\theta\C^N;\P_j^\theta\C^N\big).
$$
The continuity of the scattering matrix at embedded eigenvalues has been shown in 
\cite[Thm~3.10]{NRT}, while its behaviour at thresholds has been studied in 
\cite[Thm~3.9]{NRT}.
This latter result can be summarised as follows: For each $\lambda\in\T^\theta$, 
a channel can already be opened at the
energy $\lambda$ (in which case the existence and the equality of the
limits from the right and from the left is proved), it can open at the energy $\lambda$ (in
which case only the existence of the limit from the right is proved), or it can
close at the energy $\lambda$ (in which case only the existence of the
limit from the left is proved).

Let us finally return to the wave operator $W_-^\theta$, which will be further investigated in the sequel. By using the stationary approach of scattering theory
and by looking at the representation of the wave operator inside
the spectral representation of $H_0^\theta$, the operator $W_-^\theta$
can be explicitely computed, modulo compact operators. More precisely
if we set $\K(\h)$ for the set of compact operators on $\h$, then
the following statement has been proved in \cite[Thm~1.5]{NRT}~:

\begin{Theorem}\label{thm_formula_theta}
For any $\theta\in(0,\pi)$, one has the equality
$$
W_-^\theta-1
=\tfrac12\big(1-\tanh(\pi\Dt)-i\cosh(\pi\Dt)^{-1}\tanh(\Xt)\big)(S^\theta-1)
+\Kt^\theta,
$$
with $\Kt^\theta \in\K(\h)$ and where  $\Xt$ and $\Dt$ are representations
of the canonical position and momentum operators in the Hilbert space $\h$.
\end{Theorem}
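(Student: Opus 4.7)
The plan is to derive the stated equality from the stationary representation of the wave operator and then identify the resulting universal kernel in terms of $\Xt$ and $\Dt$ using the explicit spectral transformation $\F^\theta$. I would begin from the usual stationary formula expressing $W_-^\theta-1$ through the $T$-operator $T^\theta(z):=V-V(H^\theta-z)^{-1}V$ and the boundary values of the free resolvent. Since $V$ is of finite rank, and since the limiting absorption principle for $H^\theta$ has already been established in \cite{NRT}, the boundary values $T^\theta(\lambda+i0)$ exist and are locally H\"older continuous on $I^\theta\setminus(\T^\theta\cup\sigma_{\rm p}(H^\theta))$. In the spectral representation provided by $\F^\theta$, the stationary formula then gives an integral operator between the fibers of $\Hrond^\theta$ whose kernel carries the universal singularity $(\lambda-\mu-i0)^{-1}$ times a matrix factor directly related to $S^\theta(\mu)-1$.

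Next, I would apply the Sokhotski--Plemelj identity to split this singular kernel into a delta piece, which produces the pointwise factor $\tfrac12(S^\theta(\lambda)-1)$, and a principal value piece. Transporting the latter through the arccos change of variables $\lambda=\lambda_j^\theta+2\cos\omega$ that underlies $\F^\theta$ on each channel, the principal value operator becomes a convolution in the $\omega$ variable. The classical Fourier identities
\[
\tfrac{1}{\pi}\pv\!\int_{\R}\tfrac{\e^{is\xi}}{s}\,\d s=i\,\sgn(\xi),\qquad \int_{\R}\tfrac{\e^{is\xi}}{\cosh(s/2)}\,\d s=\tfrac{2\pi}{\cosh(\pi\xi)},
\]
combined with the definitions of $\Xt$ and $\Dt$ as the canonical position and momentum operators in $\h$, then identify this convolution with $-\tfrac12\tanh(\pi\Dt)-\tfrac{i}{2}\cosh(\pi\Dt)^{-1}\tanh(\Xt)$ applied to $(S^\theta-1)$. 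This is the same mechanism producing the universal half-line formula of \cite{KR06,KR07}, now adapted to the present discrete cylinder geometry.

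The main obstacle is the non-constant multiplicity of the fibers $\Hrond^\theta(\lambda)$. The off-diagonal blocks $\P_j^\theta T^\theta(\lambda+i0)\P_{j'}^\theta$ with $j\neq j'$ are supported only on $I_j^\theta\cap I_{j'}^\theta$, so pulling back through $\F^\theta$ produces operators with sharp cutoffs at the thresholds $\lambda_j^\theta\pm 2$, which would destroy the universal convolution structure if handled naively. I would show that these cutoff contributions, together with the pieces coming from the weight $(4-(\lambda-\lambda_j^\theta)^2)^{-1/4}$ near the threshold endpoints, are Hilbert--Schmidt by exploiting the local integrability of this weight and the uniform boundedness of $T^\theta(\lambda+i0)$ on compact subsets of $I^\theta\setminus\T^\theta$. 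The residual singular contributions localised at the finitely many thresholds and embedded eigenvalues are then absorbed into the compact remainder $\Kt^\theta$ via standard compactness criteria for integral operators whose kernels are continuous and decay sufficiently in the spectral variable.
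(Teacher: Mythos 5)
The paper does not prove Theorem \ref{thm_formula_theta} at all: it is imported verbatim from \cite[Thm~1.5]{NRT}, and your outline follows the same stationary-method strategy used there (stationary formula for $W_-^\theta-1$, Sokhotski--Plemelj splitting, identification of a universal kernel after rescaling each band to $\R$, compact remainder). The first genuine gap is your treatment of the thresholds, which is where all the real work of \cite{NRT} lies. You invoke uniform boundedness of $T^\theta(\lambda+i0)$ on compact subsets of $I^\theta\setminus\T^\theta$ and then ``standard compactness criteria'' to absorb the leftover threshold contributions into $\Kt^\theta$. But near the $2N$ thresholds the boundary values need not stay bounded -- in the resonant situation $M^\theta(\lambda+i0)$ has a pole of order one in the variable $\kappa$, see Corollary \ref{cor:resonantcase} and condition \eqref{eq:k1} -- and the scattering matrix is discontinuous there. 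Moreover, since $S^\theta-1$ does not vanish at thresholds (generically $S^\theta=-1$ there), an error term of the form $\eta(D^\theta)(S^\theta-1)$ with $\eta\in C_0(\R)$ is \emph{not} compact: its image under the quotient map is non-zero by \eqref{eq:q2}, and Lemma \ref{lem:compact} requires the multiplication operator to vanish at the band endpoints. So the threshold contributions cannot be declared compact by soft arguments; one needs the detailed resolvent expansions at every threshold (resonant and generic cases separately) that occupy the first half of \cite{NRT}.

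The second gap is the kernel identification itself. The universal functions $\tanh(\pi\Dt)$ and $\cosh(\pi\Dt)^{-1}$, together with the factor $\tanh(\Xt)$, arise after rescaling each band $I^\theta_j$ to $\R$ by $\lambda=\lambda^\theta_j+2\tanh(s)$ (the maps $\V^\theta_j$ of \eqref{eq:Vjt}; compare Lemma \ref{lem_tanh}, which shows that $\tanh(X)$ pulls back to multiplication by $(\lambda-\lambda^\theta_j)/2$). Under this change of variables the singular kernel becomes, up to bounded factors, $\pv\,1/\sinh(s-t)$ plus a $1/\cosh(s-t)$ piece, and the relevant Fourier identities are those of $1/\sinh$ and $1/\cosh$; the identity $\tfrac1\pi\pv\int\e^{is\xi}s^{-1}\,\d s=i\sgn(\xi)$ that you quote is not the one that appears. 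This is not cosmetic: replacing $\tanh(\pi\Dt)$ by $\sgn(\Dt)$ changes the candidate formula by $\big(\sgn(\Dt)-\tanh(\pi\Dt)\big)(S^\theta-1)$, which again fails to be compact because $S^\theta-1$ is non-zero at the thresholds, so the route as written would not produce the exact universal function $\tfrac12\big(1-\tanh(\pi\Dt)-i\cosh(\pi\Dt)^{-1}\tanh(\Xt)\big)$ on which the algebraic analysis of Sections \ref{sec:C*}--\ref{sec:Lev} depends. Likewise, the arccos substitution you mention leads to the $\omega$-representation on $[0,\pi)$, not to the convolution structure on $\R$; it is the arctanh rescaling that must be used.
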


Let us emphasise that this result is at the root of the subsequent investigations.

%--------------------------------------------------------------------------------------
\section{The \texorpdfstring{$C^*$}{C*}-algebraic framework}\label{sec:C*}
\setcounter{equation}{0}
%--------------------------------------------------------------------------------------

In the section we construct the algebraic framework and a $C^*$-algebra
which is going to contain the operator $W_-^\theta$ for any fixed $\theta \in (0,\pi)$. 
We also compute the quotient of this $C^*$-algebra by the ideal of compact operators.
Most of the notations are directly borrowed from Section \ref{sec:themodel}. 
 
We firstly exhibit an algebra $\AA^\theta$ of multiplication operators acting on $\Hrond^\theta$.

\begin{Definition}
Let $\AA^\theta\subset \B\big(\Hrond^\theta\big)$ with  $a\in \AA^\theta$ if for any $\zeta \in \Hrond^\theta$ and $\lambda \in I^\theta$ one has
$$
\big[a\zeta\big](\lambda)\equiv \big[a(X^\theta)\zeta\big](\lambda)= a(\lambda) \zeta(\lambda)
$$
and for $j,j'\in \{1,\dots,N\}$ the maps
\begin{equation}\label{eq:mult}
I_j^\theta \cap I_{j'}^\theta \ni \lambda \mapsto a_{jj'}(\lambda) :=
\P_j^\theta a(\lambda)\P_{j'}^\theta \in \B\big(\Hrond^\theta(\lambda)\big) \subset \B(\C^N)
\end{equation}
belongs to $C_0\left(I_j^\theta \cap I_{j'}^\theta; \B(\C^N)\right)$ if $j\neq j'$ while it belongs to $C\left(\overline{I_j^\theta}; \B(\C^N)\right)$ if $j=j'$.
\end{Definition}

In other words, the function defined in \eqref{eq:mult}
is a continuous function vanishing at the boundaries of the intersection $I_j^\theta \cap I_{j'}^\theta$ when these two intervals are not equal, and has limits at the boundaries of these intervals when they coincide. 

\begin{Remark}\label{rem:misleading}
In the sequel, we shall use the notation $a_{jj'}$ for $\P_j^\theta a \P_{j'}^\theta$
even if this notation is slightly misleading. Indeed, $a_{jj'}(\lambda)$ exists if and only if 
$\lambda \in  \overline{I_j^\theta \cap I_{j'}^\theta}$. 
If $\lambda \not \in 
\overline{I_j^\theta}$ or if $\lambda \not \in\overline{I^\theta_{j'}}$, then the expression containing
$a_{jj'}(\lambda)$ should be interpreted as $0$.
More precisely, with the standard bra-ket notation, we shall write
$a_{jj'}=\a_{jj'}\otimes \;\!\tfrac{1}{N}|\xi^\theta_j\rangle \langle \xi^\theta_{j'}|$,
where $\a_{jj'}$ denotes the multiplication operator by a function belonging to  
$C_0\left(I_j^\theta \cap I_{j'}^\theta; \C\right)$ if $j\neq j'$, while it belongs to $C\big(\overline{I_j^\theta}; \C\big)$ if $j=j'$.
Thus, with these notations and for any $\lambda \in I^\theta$ one has
$$
a(\lambda)=\sum_{\{j,j'\mid \lambda\in I^\theta_j, \lambda \in I^\theta_{j'}\}}
\a_{jj'}(\lambda) \otimes \;\!\tfrac{1}{N}|\xi^\theta_j\rangle \langle \xi^\theta_{j'}|
$$
By the above abuse of notation, we shall conveniently write
$a=\sum_{j,j'}
\a_{jj'}\otimes \;\!\frac{1}{N}|\xi^\theta_j\rangle \langle \xi^\theta_{j'}|$.
\end{Remark}

The interest of the algebra $\AA^\theta$ comes from the following statement, which can be directly inferred from \cite[Thms 3.9 \& 3.10]{NRT}\;\!:

\begin{Proposition}\label{prop_S}
The multiplication operator defined by the map $\lambda \to S^\theta(\lambda)$ belongs to $\AA^\theta$. 
\end{Proposition}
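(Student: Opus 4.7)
The plan is to verify directly the two continuity properties defining membership of the fibre-decomposable operator $\lambda\mapsto S^\theta(\lambda)$ in $\AA^\theta$. Since $S^\theta$ commutes with $H_0^\theta$ and is diagonalised by $\F^\theta$, every entry $S^\theta(\lambda)_{jj'}=\P_j^\theta S^\theta(\lambda)\P_{j'}^\theta$ is well defined on $I_j^\theta\cap I_{j'}^\theta$, so the task reduces to showing that $\lambda\mapsto S^\theta(\lambda)_{jj}$ extends to an element of $C(\overline{I_j^\theta};\B(\C^N))$ and that $\lambda\mapsto S^\theta(\lambda)_{jj'}$ lies in $C_0(I_j^\theta\cap I_{j'}^\theta;\B(\C^N))$ for $j\neq j'$.

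First, continuity on $(I_j^\theta\cap I_{j'}^\theta)\setminus(\T^\theta\cup\sigma_{\rm p}(H^\theta))$ follows from the stationary formula for $S^\theta(\lambda)$ established in \cite{NRT}, and Theorem~3.10 of \cite{NRT} removes the remaining potential discontinuities at embedded eigenvalues. For a threshold $\mu$ lying in the interior of $I_j^\theta\cap I_{j'}^\theta$, both channels $j$ and $j'$ are already open at $\mu$; the ``already opened'' case of Theorem~3.9 of \cite{NRT} then gives the equality of the left and right limits of $S^\theta(\lambda)_{jj'}$ at $\mu$, yielding continuity through $\mu$. Combining these facts establishes continuity on the entire interior of $I_j^\theta\cap I_{j'}^\theta$.

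It remains to examine the boundary points $\mu\in\partial(I_j^\theta\cap I_{j'}^\theta)\cap\T^\theta$, at which one of the labels $k\in\{j,j'\}$ satisfies $\mu=\lambda_k^\theta\pm 2$, so that channel $k$ opens just to the right of $\mu$ or closes just to the left of $\mu$. When $j=j'$, these boundary points are precisely the endpoints $\lambda_j^\theta\pm 2$ of $\overline{I_j^\theta}$, and the ``opening/closing'' part of Theorem~3.9 of \cite{NRT} supplies the one-sided limit of $S^\theta(\lambda)_{jj}$ needed to extend it continuously to $\overline{I_j^\theta}$. When $j\neq j'$, the same theorem, combined with the fact that the limiting scattering matrix must act unitarily on the lower-dimensional fibre $\Hrond^\theta(\mu)$ (which no longer contains $\P_k^\theta\C^N$), forces the one-sided limit of $S^\theta(\lambda)_{jj'}$ at $\mu$ to vanish, which is exactly the $C_0$-condition. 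The main obstacle in converting this sketch into a full proof is this last vanishing: one has to read the ``opening/closing'' half of Theorem~3.9 of \cite{NRT} with care to confirm that inter-channel amplitudes involving a degenerating channel actually tend to $0$, and not merely to a finite value, in the relevant one-sided limit. The unitarity of $S^\theta(\mu)$ on $\Hrond^\theta(\mu)$ ensures this automatically, but the argument must be matched to the precise formulation of that theorem.
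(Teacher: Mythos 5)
Your overall route coincides with the paper's: Proposition \ref{prop_S} is obtained there by direct appeal to Theorems 3.9 and 3.10 of \cite{NRT}, and your handling of continuity away from $\T^\theta\cup\sigma_{\rm p}(H^\theta)$, of embedded eigenvalues (Thm 3.10), of interior thresholds via the ``already opened'' case, and of the one-sided limits of the diagonal entries at $\lambda_j^\theta\pm2$ is exactly what that citation covers.

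The gap is the step you flag as ``the main obstacle'' and then declare automatic: the vanishing of $S^\theta(\lambda)_{jj'}$, $j\neq j'$, at the endpoints of $I_j^\theta\cap I_{j'}^\theta$. Unitarity cannot deliver this. The scattering operator is decomposable, so $S^\theta(\lambda)$ is only defined for a.e.\ $\lambda$; whatever fibre one assigns at the single point $\mu=\lambda_k^\theta\pm2$ (a null set) places no constraint on the one-sided limit taken from the side on which channel $k$ is open. That limit is a limit of unitaries on the \emph{larger} fibre, hence unitary on the larger fibre, and nothing abstract forces it to be block diagonal with respect to $\P_k^\theta\C^N$: a constant, non-diagonal unitary family already shows that fibrewise unitarity and existence of the limit are compatible with a nonzero coupling at the threshold. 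The vanishing is a genuinely analytic fact. In the stationary representation (see \eqref{eq_S_lambda} for $N=2$) the entry $S^\theta(\lambda)_{jj'}$ carries the factor $\beta_{k}^\theta(\lambda)^{-1}$, which \emph{diverges} as $\lambda$ approaches the threshold of channel $k$, so one needs the threshold expansion of $M^\theta(\lambda+i0)$ to see that it is compensated and indeed overcompensated; this is precisely the content of Theorem 3.9(b) of \cite{NRT} (obtained from the resolvent expansions of \cite[Prop.~3.4]{NRT}), and the present paper itself invokes exactly this conclusion in Section \ref{sec:N=2} (``the off diagonal terms vanish, as shown in [Thm 3.9(b)]''). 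The fix is therefore to cite that conclusion of Theorem 3.9(b) for the $C_0$-condition rather than to derive it from unitarity of the limiting matrix on the smaller fibre; with that substitution your argument matches the paper's.
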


For subsequent constructions, we need to know that some specific functions also belong
to $\AA^\theta$. For this, we recall the definition of a useful unitary map, introduced in
\cite[Sec.~4.1]{NRT}.
For $j\in\{1,\dots,N\}$ we define the unitary map
$\V_j^\theta:\ltwo(I_j^\theta)\to\ltwo(\R)$ given on $\zeta\in\ltwo(I_j^\theta)$ and for $s\in\R$ by
\begin{equation}\label{eq:Vjt}
\big[\V_j^\theta\zeta\big](s)
:=\tfrac{2^{1/2}}{\cosh(s)}\;\!\zeta\big(\lambda_j^\theta+2\tanh(s)\big).
\end{equation}
Its adjoint $(\V^{\theta}_j)^*:\ltwo(\R)\to\ltwo(I_j^\theta)$ is then given on $f\in\ltwo(\R)$ and for  $\lambda\in I_j^\theta$ by
\begin{equation}\label{eq:Vjt*}
\big[(\V^{\theta}_j)^*f\big](\lambda)
=\left(\tfrac2{4-(\lambda-\lambda_j^\theta)^2}\right)^{1/2}
f\left(\arctanh\left(\tfrac{\lambda-\lambda_j^\theta}2\right)\right).
\end{equation}
Based on this, we define the unitary map
$\V^\theta:\Hrond^\theta\to\ltwo(\R;\C^N)$  acting on $\zeta\in\Hrond^\theta$ as
$$
\V^\theta\zeta
:=\sum_{j=1}^N\big(\V^\theta_j\otimes\P_j^\theta\big)\zeta|_{I^\theta_j},
$$
with adjoint $(\V^{\theta})^*:\ltwo(\R;\C^N)\to\Hrond^\theta$ acting on 
$f\in\ltwo(\R;\C^N)$ and for $\lambda\in I^\theta$ as
$$
\big[(\V^{\theta})^*f\big](\lambda)
=\sum_{\{j\mid\lambda\in I^\theta_j\}}
\left[\big((\V^{\theta}_j)^*\otimes\P_j^\theta\big)f\right](\lambda).
$$

In the sequel, let $X$ denote the self-adjoint operator of multiplication by the variable in 
$\ltwo(\R)$.
We are now ready to prove the following lemma:

\begin{Lemma}\label{lem_tanh}
For  $\zeta\in\Hrond^\theta$ and $\lambda \in I^\theta$ one has
\begin{equation*}
\big[\big(\V^\theta\big)^* \big(\tanh(X)\otimes 1_N \big)\V^\theta \zeta\big](\lambda)
= \sum_{\{j\mid\lambda\in I^\theta_j\}}\left[\tfrac{X^\theta-\lambda^\theta_j}{2}\otimes \P^\theta_j \zeta\right](\lambda).
\end{equation*}
\end{Lemma}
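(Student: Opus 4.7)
The proof is essentially a direct computation: expand both $\V^\theta$ and $(\V^\theta)^*$, use orthogonality of the spectral projections $\P^\theta_j$ to collapse the resulting double sum, and then reduce the problem to a one-variable change of variables showing that $(\V^\theta_j)^* \tanh(X)\V^\theta_j$ equals multiplication by $\tfrac{\lambda-\lambda^\theta_j}{2}$.

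First I would expand the left-hand side. Since $\tanh(X)\otimes 1_N$ commutes with $1\otimes \P^\theta_j$ for each $j$, applying it to $\V^\theta\zeta=\sum_{j=1}^N (\V^\theta_j\otimes \P^\theta_j)\zeta|_{I^\theta_j}$ gives
\[
\bigl(\tanh(X)\otimes 1_N\bigr)\V^\theta\zeta = \sum_{j=1}^N \bigl(\tanh(X)\V^\theta_j \otimes \P^\theta_j\bigr)\zeta|_{I^\theta_j}.
\]
Composing on the left with $(\V^\theta)^*$ and evaluating at $\lambda\in I^\theta$ produces a double sum over $j$ and $k$ (with $\lambda\in I^\theta_k$) in which the product $\P^\theta_k \P^\theta_j$ appears on the matrix factor. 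Because $\theta\in(0,\pi)$ the eigenvalues $\lambda^\theta_j$ are pairwise distinct (Remark \ref{rem:0pi}), so the eigenvectors $\xi^\theta_j$ are orthogonal and $\P^\theta_k\P^\theta_j=\delta_{jk}\P^\theta_j$. This collapses the double sum to a single sum over those $j$ with $\lambda\in I^\theta_j$:
\[
\bigl[(\V^\theta)^*(\tanh(X)\otimes 1_N)\V^\theta\zeta\bigr](\lambda)=\sum_{\{j\mid \lambda\in I^\theta_j\}} \bigl[(\V^\theta_j)^*\tanh(X)\V^\theta_j\otimes \P^\theta_j\bigr]\zeta|_{I^\theta_j}(\lambda).
\]

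Second I would compute the scalar operator $(\V^\theta_j)^*\tanh(X)\V^\theta_j$ on $\ltwo(I^\theta_j)$. The formulas \eqref{eq:Vjt} and \eqref{eq:Vjt*} identify $\V^\theta_j$ as the unitary implementing the change of variables $\lambda=\lambda^\theta_j+2\tanh(s)$, i.e.\ $s=\arctanh\!\bigl(\tfrac{\lambda-\lambda^\theta_j}{2}\bigr)$. Hence for any bounded Borel function $h$ on $\R$, $(\V^\theta_j)^*h(X)\V^\theta_j$ is the operator of multiplication on $\ltwo(I^\theta_j)$ by $h\!\bigl(\arctanh\!\bigl(\tfrac{\lambda-\lambda^\theta_j}{2}\bigr)\bigr)$. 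Taking $h=\tanh$ yields multiplication by $\tfrac{\lambda-\lambda^\theta_j}{2}$, which is precisely $\tfrac{X^\theta-\lambda^\theta_j}{2}$ restricted to $I^\theta_j$. Substituting back gives the announced identity
\[
\bigl[(\V^\theta)^*(\tanh(X)\otimes 1_N)\V^\theta\zeta\bigr](\lambda)=\sum_{\{j\mid \lambda\in I^\theta_j\}} \left[\tfrac{X^\theta-\lambda^\theta_j}{2}\otimes \P^\theta_j \zeta\right](\lambda).
\]

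The step requiring the most care is the bookkeeping of which channels contribute at a given $\lambda$, together with the interaction of the restriction $\zeta|_{I^\theta_j}$ with the orthogonality relation $\P^\theta_k\P^\theta_j=\delta_{jk}\P^\theta_j$. Once these are correctly tracked, the content of the lemma reduces to the elementary identity $\tanh\circ\arctanh=\mathrm{id}$ realised by the change of variables built into $\V^\theta_j$.
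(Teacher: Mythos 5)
Your proposal is correct and follows essentially the same route as the paper's proof: a direct computation with the explicit formulas \eqref{eq:Vjt} and \eqref{eq:Vjt*}, the (implicit or explicit) orthogonality $\P^\theta_k\P^\theta_j=\delta_{jk}\P^\theta_j$, and the change-of-variables identity $\tanh\circ\arctanh=\mathrm{id}$ together with $\cosh(s)^{-2}=1-\tanh(s)^2$ to cancel the weights. Your phrasing of the key step as ``$(\V^\theta_j)^*h(X)\V^\theta_j$ is multiplication by $h\bigl(\arctanh\bigl(\tfrac{\lambda-\lambda^\theta_j}{2}\bigr)\bigr)$'' is just a mildly more general packaging of the same computation carried out in the paper for $h=\tanh$.
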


\begin{proof}
From the definitions of the unitary maps one infers that
\begin{align*}
&\big[\big(\V^\theta\big)^* \big(\tanh(X)\otimes 1_N \big)\V^\theta \zeta\big](\lambda) \\
& = \sum_{\{j\mid\lambda\in I^\theta_j\}}
\left[\big((\V^{\theta}_j)^*\otimes\P_j^\theta\big)\big(\tanh(X)\otimes 1_N \big)\V^\theta \zeta\right](\lambda) \\
& = \sum_{\{j\mid\lambda\in I^\theta_j\}}
\left(\tfrac2{4-(\lambda-\lambda_j^\theta)^2}\right)^{1/2}
\left(\tfrac{\lambda-\lambda_j^\theta}2\right) \P_j^\theta
\big[\V^\theta \zeta\big]\left(\arctanh\left(\tfrac{\lambda-\lambda_j^\theta}2\right)\right) \\
& = \sum_{\{j\mid\lambda\in I^\theta_j\}}
\left(\tfrac2{4-(\lambda-\lambda_j^\theta)^2}\right)^{1/2}
\left(\tfrac{\lambda-\lambda_j^\theta}2\right) 
\big[\V^\theta_j \otimes \P_j^\theta \zeta\big]\left(\arctanh\left(\tfrac{\lambda-\lambda_j^\theta}2\right)\right) \\
& = \sum_{\{j\mid\lambda\in I^\theta_j\}}
\tfrac{2^{-1/2}}{\left(1-\Big(\tfrac{\lambda-\lambda_j^\theta}{2}\Big)^2\right)^{1/2}}
\left(\tfrac{\lambda-\lambda_j^\theta}2\right) \frac{2^{1/2}}{\cosh\left(\arctanh\left(\tfrac{\lambda-\lambda_j^\theta}2\right)\right)}
\big[\P_j^\theta \zeta\big](\lambda)  \\
& =  \sum_{\{j\mid\lambda\in I^\theta_j\}}
\left(\tfrac{\lambda-\lambda_j^\theta}2\right) 
\big[\P_j^\theta \zeta\big](\lambda), 
\end{align*}
where the relation $\frac{1}{\cosh(s)^2}=1-\tanh(s)^2$ has been used for the last equality.
This leads directly to the statement.
\end{proof}

As a consequence of this lemma one easily infers that
$\big(\V^\theta\big)^* \big(\tanh(X)\otimes 1_N \big)\V^\theta$ belongs to the algebra $\AA^\theta$ introduced above. In fact, in the matrix formulation mentioned in Remark 
\ref{rem:misleading}, this operator corresponds to a diagonal multiplication operator.

Our next aim is to introduce a $C^*$-algebra containing the wave operators $W_\pm^\theta$.
For that purpose, we introduce in the Hilbert space $\ltwo(\R;\C^N)$ a specific family of convolution operators. More precisely, let $D$ stand for the self-adjoint realization of the operator $-i\frac{\d}{\d x}$ in $\ltwo(\R)$. Clearly, $X$ and $D$ satisfy the canonical commutation relation. By functional calculus, we then define the bounded convolution operator $\eta(D)$ with $\eta\in C_0\big([-\infty,+\infty)\big)$, the algebra of continuous functions on $\R$ having a limit at $-\infty$ and vanishing at $+\infty$.
In $\ltwo(\R;\C^N)$ we finally introduce the family of operators 
$\eta(D)\otimes 1_N$
with $\eta\in  C_0\big([-\infty,+\infty)\big)$. 
These operators naturally generate a $C^*$-algebra $\CC$ isomorphic to  $C_0\big([-\infty,+\infty)\big)$.
Clearly, the operators $\frac{1}{2}\big(1-\tanh(\pi D)\big)\otimes 1_N$ and $\cosh(\pi D)^{-1}\otimes 1_N$ belong to 
$\CC$. These two operators will subsequently play an important role.

For $\eta\in  C_0\big([-\infty,+\infty)\big)$, let us now look at the image of 
$\eta(D)\otimes 1_N$ in $\Hrond^\theta$. 
For this, we recall that $\H^1(\R)$ denotes the first Sobolev space on $\R$. 

\begin{Lemma}
\begin{enumerate}
\item[(i)] For $j\in \{1,\dots,N\}$ the operator
$D^\theta_j:=\big(\V_j^\theta\big)^* D\V^\theta_j$ is self-adjoint on $(\V_j^\theta\big)^* \H^1(\R)$, 
and the following equality holds:
$$
D^\theta_j=2\bigg(1-\Big(\tfrac{X^\theta_j-\lambda_j^\theta}{2}\Big)^2\bigg)\left(-i\tfrac{\d}{\d \lambda}\right) 
+i\Big(\tfrac{X^\theta_j-\lambda^\theta_j}{2}\Big)
$$
where $X^\theta_j$ denotes the operator of multiplication by the variable in $\ltwo(I_j^\theta)$,
\item[(ii)] For any $\eta\in  C_0\big([-\infty,+\infty)\big)$ we set
\begin{equation}\label{eq_Dtheta}
\eta(D^\theta):=\big(\V^\theta\big)^* \big(\eta(D)\otimes 1_N \big)\V^\theta, 
\end{equation}
and for $\zeta \in \Hrond^\theta$ and for $\lambda\in I^\theta$ one has
$$
\big[\eta(D^\theta) \zeta\big](\lambda) 
= \sum_{\{j\mid\lambda\in I^\theta_j\}} \big[\big(\eta(D^\theta_j)\otimes \P_j^\theta\big)\zeta\big](\lambda).
$$
\end{enumerate}
\end{Lemma}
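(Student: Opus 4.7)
For part (i), self-adjointness is immediate from unitary conjugation: since $D$ is self-adjoint on $\H^1(\R)$ and $\V_j^\theta$ is unitary, the operator $D_j^\theta = (\V_j^\theta)^* D \V_j^\theta$ is self-adjoint with domain $(\V_j^\theta)^* \H^1(\R)$. To derive the explicit formula, I would differentiate $[\V_j^\theta\zeta](s) = 2^{1/2}\,\sech(s)\,\zeta(\lambda_j^\theta+2\tanh(s))$ via the chain rule, using $\frac{\d}{\d s}\sech(s) = -\sech(s)\tanh(s)$ and $\frac{\d}{\d s}\tanh(s) = \sech^2(s)$. After multiplying by $-i$, I would apply $(\V_j^\theta)^*$ as in \eqref{eq:Vjt*} and convert the hyperbolic expressions back via $\tanh(s) = (\lambda-\lambda_j^\theta)/2$ and $\sech^2(s) = 1 - \big((\lambda-\lambda_j^\theta)/2\big)^2$. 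The weight $2^{1/2}\sech(s)$ produced by $\V_j^\theta$ cancels exactly against the weight $1/(2^{1/2}\sech(s))$ from $(\V_j^\theta)^*$, leaving $i\tanh(s)\zeta(\lambda) - 2i\sech^2(s)\zeta'(\lambda)$, which is precisely the stated first-order differential expression.

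For part (ii), the key algebraic input is the orthogonality of the rank-one projections $\P_j^\theta = \tfrac{1}{N}|\xi_j^\theta\rangle\langle\xi_j^\theta|$: for $\theta \in (0,\pi)$ the eigenvalues $\lambda_j^\theta$ of $A^\theta$ are all distinct (see Remark~\ref{rem:0pi}), and hence $\P_j^\theta \P_{j'}^\theta = \delta_{jj'}\,\P_j^\theta$. Substituting the definition of $\V^\theta$ and noting that $\eta(D)\otimes 1_N$ acts trivially on the $\C^N$ factor, I would expand
\[
\eta(D^\theta) = (\V^\theta)^*\big(\eta(D)\otimes 1_N\big)\V^\theta = \sum_{j,j'}\big((\V_j^\theta)^*\eta(D)\V_{j'}^\theta\big)\otimes \P_j^\theta\P_{j'}^\theta.
\]
Orthogonality collapses the double sum to its diagonal, and by unitary functional calculus each diagonal term equals $\eta(D_j^\theta)\otimes\P_j^\theta$. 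Evaluating at $\lambda \in I^\theta$, only the summands with $\lambda \in I_j^\theta$ contribute, since $\eta(D_j^\theta)$ lives on $\ltwo(I_j^\theta)$, which recovers exactly the sum in the statement.

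The main difficulty is not an intrinsic obstacle but a matter of careful bookkeeping: the summands $\V_j^\theta\otimes \P_j^\theta$ have distinct natural domains $\ltwo(I_j^\theta;\C^N)$, so the tensor decomposition above must be read fibrewise over $\lambda$, with the restriction to $I_j^\theta$ implicit in each $\V_j^\theta$. Once this is tracked, both parts reduce to the two routine inputs above — a change-of-variables computation for (i), and the orthogonality of rank-one spectral projections combined with the identity $(\V_j^\theta)^*\eta(D)\V_j^\theta = \eta(D_j^\theta)$ for (ii).
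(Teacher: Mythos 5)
Your proposal is correct and follows essentially the same route as the paper: self-adjointness by unitary conjugation plus the chain-rule/change-of-variables computation with $\tanh(s)=(\lambda-\lambda_j^\theta)/2$ and $\sech^2(s)=1-\big(\tfrac{\lambda-\lambda_j^\theta}{2}\big)^2$ for (i), and for (ii) the orthogonality $\P_j^\theta\P_{j'}^\theta=\delta_{jj'}\P_j^\theta$ together with $(\V_j^\theta)^*\eta(D)\V_j^\theta=\eta(D_j^\theta)$, which is exactly how the paper's fibrewise computation proceeds. No gaps.
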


\begin{proof}
i) The self-adjointness of $D^\theta_j$ on $(\V_j^\theta\big)^* \H^1(\R)$ directly follows from the self-adjointness of $D$ on $\H^1(\R)$.
Then, for $\zeta\in C^\infty_{\rm c}\big(I_j^\theta\big)$ one has
\begin{align*}
&\big[\big(\V_j^\theta\big)^* D\V^\theta_j \zeta \big](\lambda) \\
& = \left(\tfrac2{4-(\lambda-\lambda_j^\theta)^2}\right)^{1/2} \big[ D\V^\theta_j \zeta \big]
\left(\arctanh\left(\tfrac{\lambda-\lambda_j^\theta}2\right)\right) \\
& = -i \left(\tfrac2{4-(\lambda-\lambda_j^\theta)^2}\right)^{1/2} \big[ \V^\theta_j \zeta \big]'
\left(\arctanh\left(\tfrac{\lambda-\lambda_j^\theta}2\right)\right) \\
& =  -i \left(\tfrac2{4-(\lambda-\lambda_j^\theta)^2}\right)^{1/2}\left[\tfrac{2^{1/2}}{\cosh(\cdot)}\zeta \big(\lambda_j^\theta+2 \tanh(\cdot)\big)\right]' \left(\arctanh\left(\tfrac{\lambda-\lambda_j^\theta}2\right)\right) \\
& = -i \tfrac{1}{\left(1-\Big(\tfrac{\lambda-\lambda_j^\theta}{2}\Big)^2\right)^{1/2}}
\Big[\tfrac{-\tanh(\cdot)}{\cosh(\cdot)}\zeta \big(\lambda_j^\theta+2 \tanh(\cdot)\big)
\\
& \qquad \qquad  \qquad + 2\tfrac{1-\tanh(\cdot)^2}{\cosh(\cdot)}\zeta'\big(\lambda_j^\theta +2 \tanh(\cdot)\big) \Big]\left(\arctanh\left(\tfrac{\lambda-\lambda_j^\theta}2\right)\right) \\
& = i \tfrac{\lambda-\lambda_j^\theta}{2} \zeta(\lambda) 
-2i\bigg(1-\left(\tfrac{\lambda-\lambda_j^\theta}{2}\right)^2\bigg)\;\!\zeta'(\lambda),
\end{align*}
which leads directly to the statement.

ii) For $\zeta\in \Hrond^\theta$ and $\lambda \in I^\theta$ one has
\begin{align*}
\big[\big(\V^\theta\big)^* \big(\eta(D)\otimes 1_N \big)\V^\theta \zeta\big](\lambda) 
& = \sum_{\{j\mid\lambda\in I^\theta_j\}}
\left[\big((\V^{\theta}_j)^*\otimes\P_j^\theta\big)\big(\eta(D)\otimes 1_N \big)\V^\theta \zeta\right](\lambda) \\
& = \sum_{\{j\mid\lambda\in I^\theta_j\}}
\left[\big(\V^{\theta}_j)^*\eta(D)\otimes \P_j^\theta \big)\V^\theta \zeta\right](\lambda) \\
& = \sum_{\{j\mid\lambda\in I^\theta_j\}}
\left[\big(\V^{\theta}_j)^*\eta(D)\V^\theta_j \otimes \P_j^\theta \big) \zeta\right](\lambda) \\
& = \sum_{\{j\mid\lambda\in I^\theta_j\}}
\left[\big(\eta(D_j^\theta) \otimes \P_j^\theta \big) \zeta\right](\lambda),
\end{align*}
which corresponds to the statement.
\end{proof}

Let us now introduce our main $C^*$-algebra. We recall that $\eta(D^\theta)$ has been introduced in \eqref{eq_Dtheta}. 
We set
\begin{equation*}
\EE^\theta:=C^*\Big(\eta(D^\theta)\;\!a+b\mid \eta \in   C_0\big([-\infty,+\infty)\big),  a\in \AA^\theta, b\in \K\big(\Hrond^\theta\big)\Big)^+
\end{equation*}
which is a $C^*$-subalgebra of $\B\big(\Hrond^\theta\big)$
containing the ideal $\K\big(\Hrond^\theta\big)$ of compact operators on $\Hrond^\theta$.
Here, the exponent $+$ means that $\C$ times the identity of $\B\big(\Hrond^\theta\big)$ have been added to the algebra, turning it into a unital $C^*$-algebra.
Observe that for a typical element
$\eta(D^\theta)\;\!a$ with $\eta \in C_0\big([-\infty,+\infty)\big)$ and $a\in \AA^\theta$,
and for $\zeta\in \Hrond^\theta$ and $\lambda \in I^\theta$, one has
\begin{align}
\nonumber \big[\eta(D^\theta)\;\!a \zeta\big](\lambda) 
&=  \sum_{\{j\mid\lambda\in I^\theta_j\}} \big[\big(\eta(D^\theta_j)\otimes \P_j^\theta\big)a \zeta\big](\lambda) \\
\nonumber  & = \sum_{\{j\mid\lambda\in I^\theta_j\}}
\sum_{j'=1}^N \big[\eta(D^\theta_j) \;\!a_{jj'} \P_{j'}^\theta\zeta\big](\lambda) \\
\nonumber &=  \sum_{\{j\mid\lambda\in I^\theta_j\}} \sum_{j'=1}^N
\big[\big(\eta(D^\theta_{j})\;\!\a_{jj'}\big)\otimes \;\!\tfrac{1}{N}|\xi^\theta_j\rangle \langle \xi^\theta_{j'}|\zeta\big](\lambda)\\
\label{eq:sum} &=  \sum_{j,j'}
\big[\big(\eta(D^\theta_{j})\;\!\a_{jj'}\big)\otimes \;\!\tfrac{1}{N}|\xi^\theta_j\rangle \langle \xi^\theta_{j'}|\zeta\big](\lambda),
\end{align}
where the notation introduced in Remark \ref{rem:misleading} has been used.

Our main interest in this $C^*$-algebra is based on the following result, which is a direct
consequence of Theorem \ref{thm_formula_theta}.

\begin{Proposition}\label{prop:affiliation}
For any $\theta \in (0,\pi)$, the wave operators $\F^\theta W_-^\theta(\F^\theta)^*$ belongs to $\EE^\theta$.
\end{Proposition}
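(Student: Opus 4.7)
I would start from the explicit formula in Theorem \ref{thm_formula_theta}, rewriting it as
\begin{equation*}
W_-^\theta - 1 = \tfrac{1}{2}\big(1-\tanh(\pi\Dt)\big)(S^\theta-1) - \tfrac{i}{2}\cosh(\pi\Dt)^{-1}\tanh(\Xt)(S^\theta-1) + \Kt^\theta,
\end{equation*}
conjugate by $\F^\theta$, and show that each resulting summand has the form $\eta(D^\theta)\,a$ with $\eta\in C_0([-\infty,+\infty))$ and $a\in\AA^\theta$, or is compact. Since $1$ is available via the unitization, this will exhibit $\F^\theta W_-^\theta(\F^\theta)^*$ as an element of $\EE^\theta$.

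\textbf{Identification of the factors.} The canonical position and momentum $\Xt,\Dt$ on $\h$ are defined in \cite{NRT} so that $\V^\theta\F^\theta\Dt(\V^\theta\F^\theta)^* = D\otimes 1_N$ and $\V^\theta\F^\theta\Xt(\V^\theta\F^\theta)^* = X\otimes 1_N$. Combined with the defining formula \eqref{eq_Dtheta}, this yields $\F^\theta f(\Dt)(\F^\theta)^* = f(D^\theta)$ for any $f\in C_0([-\infty,+\infty))$. The functions $\eta_1(s):=\tfrac{1}{2}\big(1-\tanh(\pi s)\big)$ and $\eta_2(s):=\cosh(\pi s)^{-1}$ both belong to $C_0([-\infty,+\infty))$, since $\eta_1(-\infty)=1$, $\eta_1(+\infty)=0$, and $\eta_2(\pm\infty)=0$. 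Hence $\F^\theta \tfrac{1}{2}(1-\tanh(\pi\Dt))(\F^\theta)^* = \eta_1(D^\theta)$ and $\F^\theta\cosh(\pi\Dt)^{-1}(\F^\theta)^* = \eta_2(D^\theta)$. For the remaining factor $\tanh(\Xt)$, Lemma \ref{lem_tanh} identifies
\begin{equation*}
\F^\theta\tanh(\Xt)(\F^\theta)^* = \sum_{j=1}^N \tfrac{X^\theta-\lambda_j^\theta}{2}\otimes\P_j^\theta,
\end{equation*}
which is a \emph{diagonal} multiplication operator in $\AA^\theta$: each block $\lambda\mapsto(\lambda-\lambda_j^\theta)/2$ is polynomial, hence continuous on $\overline{I_j^\theta}$, and the off-diagonal blocks vanish.

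\textbf{Assembly.} Proposition \ref{prop_S} places $\F^\theta S^\theta(\F^\theta)^*$ in $\AA^\theta$, and $1\in\AA^\theta$ is immediate, so $a_1:=\F^\theta(S^\theta-1)(\F^\theta)^*\in\AA^\theta$. The product $a_2$ of the diagonal $\tanh$-operator above with $a_1$ also belongs to $\AA^\theta$: in the notation of Remark \ref{rem:misleading}, left multiplication by a diagonal element preserves both the $C_0$-vanishing condition on $I_j^\theta\cap I_{j'}^\theta$ for $j\neq j'$ and the continuity on $\overline{I_j^\theta}$ for $j=j'$. Collecting,
\begin{equation*}
\F^\theta(W_-^\theta-1)(\F^\theta)^* = \eta_1(D^\theta)\,a_1 - i\,\eta_2(D^\theta)\,a_2 + \F^\theta\Kt^\theta(\F^\theta)^*,
\end{equation*}
which is a sum of two generators of $\EE^\theta$ and a compact remainder; adding back the identity yields $\F^\theta W_-^\theta(\F^\theta)^*\in\EE^\theta$.

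\textbf{Main obstacle.} The only step not immediately readable off the statements in this excerpt is the identification of $\Xt,\Dt$ on $\h$ with the operators that become $X\otimes 1_N$ and $D\otimes 1_N$ under $\V^\theta\F^\theta$. This identification is implicit in the stationary derivation of Theorem \ref{thm_formula_theta} in \cite{NRT}; once granted, the remainder is a direct matching of the formula above with the defining generators of $\EE^\theta$.
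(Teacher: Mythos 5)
Your proposal is correct and follows essentially the same route as the paper: conjugate the formula of Theorem \ref{thm_formula_theta} by $\F^\theta$, recognise $1-\tanh(\pi\cdot)$ and $\cosh(\pi\cdot)^{-1}$ as elements of $C_0\big([-\infty,+\infty)\big)$ applied to $D^\theta$, use Proposition \ref{prop_S} and Lemma \ref{lem_tanh} to place $S^\theta(X^\theta)-1$ and the conjugate of $\tanh(\Xt)$ in $\AA^\theta$, and conclude by inspection of the factors plus the compact remainder. Your explicit splitting into two generator-type terms $\eta_1(D^\theta)a_1-i\,\eta_2(D^\theta)a_2$, and the identification $\F^\theta f(\Dt)(\F^\theta)^*=f(D^\theta)$ implicit in \cite{NRT}, match the paper's argument, which treats these points at the same level of detail.
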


\begin{proof}
Let us start by recalling the main formula for the wave operator valid for $\theta \in (0,\pi)$\;\!:
$$
W_-^\theta-1
=\tfrac12\big(1-\tanh(\pi\Dt)-i\cosh(\pi\Dt)^{-1}\tanh(\Xt)\big)(S^\theta-1)
+\Kt^\theta,
$$
with $\Kt^\theta \in\K(\h)$. By looking at the r.h.s.~through the unitary conjugation by $\F^\theta$, one ends up with the expression in $\Hrond^\theta$\;\!:
\begin{equation}\label{eq_W_Hrond}
\tfrac12\Big(1-\tanh(\pi D^\theta)-i\cosh(\pi D^\theta)^{-1}
\big(\V^\theta\big)^* \big(\tanh(X)\otimes 1_N \big)\V^\theta
\Big)(S^\theta(X^\theta)-1) +k^\theta,
\end{equation}
with $k^\theta\in \K\big(\Hrond^\theta\big)$.

Clearly, the functions $1-\tanh(\pi\cdot)$ and $\cosh(\pi\cdot)^{-1}$ belong to
$C_0\big([-\infty,+\infty)\big)$.
As already mentioned in Proposition \ref{prop_S}, the map $\lambda \mapsto S^\theta(\lambda)$ belongs $\AA^\theta$, and the same is true for the constant function $1$. As a consequence, the map $\lambda \mapsto S^\theta(\lambda)-1$ belongs to $\AA^\theta$, or equivalently the multiplication operator $S^\theta(X^\theta)-1$ belongs to $\AA^\theta$.
Furthermore and as mentioned after Lemma \ref{lem_tanh}, the operator
$\big(\V^\theta\big)^* \big(\tanh(X)\otimes 1_N \big)\V^\theta$ also belongs to the 
algebra $\AA^\theta$.
Thus, the leading term in \eqref{eq_W_Hrond} belongs to $\EE^\theta$ by inspection of its different factors, and the remainder term $k^\theta$ also belongs to $\EE^\theta$ because it is a compact contribution.
\end{proof}

Our next task is to compute the quotient of the algebra $\EE^\theta$ by the set of compact operators. 
For that purpose, we firstly show that if some restrictions are imposed on the functions $\a_{jj'}$ and $\eta$ appearing in \eqref{eq:sum}, 
then the corresponding operator is compact. For this we introduce an ideal in $\AA^\theta$, namely
$$
\AA^\theta_0 :=\Big\{ a\in \AA^\theta \mid
a_{jj}\in C_0\left(I_j^\theta; \B(\C^N)\right)\hbox{ for all } j\in \{1,\dots,N\}\Big\}.
$$

\begin{Lemma}\label{lem:compact}
For $a\in \AA^\theta_0$ and $\eta \in C_0(\R)$ the operator $\eta(D^\theta)a$ belongs
to $\K\big(\Hrond^\theta\big)$.
\end{Lemma}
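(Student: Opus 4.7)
The plan is to exploit the explicit expansion \eqref{eq:sum} to write
$$\eta(D^\theta)\,a \;=\; \sum_{j,j'=1}^{N} \bigl(\eta(D^\theta_j)\a_{jj'}\bigr)\otimes\tfrac{1}{N}|\xi^\theta_j\rangle\langle\xi^\theta_{j'}|$$
as a finite sum. With respect to the orthogonal decomposition $\Hrond^\theta\cong\bigoplus_{j=1}^{N}\ltwo(I^\theta_j)$ induced by the basis $\{\xi^\theta_j\}$, the $(j,j')$-block of this operator is precisely $\eta(D^\theta_j)\a_{jj'}\colon\ltwo(I^\theta_{j'})\to\ltwo(I^\theta_j)$, so it suffices to show that each of these finitely many blocks is compact. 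I would pull each block back to $\ltwo(\R)$ through the unitary $\V^\theta_j$ of \eqref{eq:Vjt} and appeal to the classical fact that $\eta(D)f(X)$ is compact on $\ltwo(\R)$ whenever $\eta,f\in C_0(\R)$ (which is proved by Hilbert--Schmidt approximation through Schwartz functions, using that the kernel $f(x)\check g(x-y)$ lies in $\ltwo(\R^2)$).

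The key technical observation is that, for every pair $j,j'$, the function $\a_{jj'}$ extended by zero to $I^\theta_j$ belongs to $C_0(I^\theta_j)$. For $j=j'$ this is exactly the defining condition of the ideal $\AA^\theta_0$. For $j\neq j'$ with $I^\theta_j\cap I^\theta_{j'}\neq\emptyset$, the intersection is an open interval sharing exactly one endpoint with $I^\theta_j$ (as is clear by comparing $\lambda^\theta_j\pm 2$ with $\lambda^\theta_{j'}\pm 2$); since $\a_{jj'}\in C_0(I^\theta_j\cap I^\theta_{j'})$ vanishes at both boundary points of the intersection and is extended by zero past the interior-to-$I^\theta_j$ endpoint, the extension $\tilde\a_{jj'}$ is continuous on $I^\theta_j$ and vanishes at both of its endpoints. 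The change of variables $\phi_j(s):=\lambda^\theta_j+2\tanh(s)$ is a homeomorphism $\R\to I^\theta_j$ sending $\pm\infty$ to $\lambda^\theta_j\pm 2$, so $\tilde\a_{jj'}\circ\phi_j\in C_0(\R)$, and a short computation analogous to the one in the proof of Lemma \ref{lem_tanh} gives $\V^\theta_j\, M_{\tilde\a_{jj'}}(\V^\theta_j)^{*}=(\tilde\a_{jj'}\circ\phi_j)(X)$ on $\ltwo(\R)$.

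The classical compactness result then yields $\eta(D)\,(\tilde\a_{jj'}\circ\phi_j)(X)\in\K(\ltwo(\R))$, and conjugating back one obtains $\eta(D^\theta_j)\,M_{\tilde\a_{jj'}}\in\K(\ltwo(I^\theta_j))$. For the diagonal block $j=j'$ this is already $\eta(D^\theta_j)\a_{jj}$ and we are done. For $j\neq j'$, I factor $\a_{jj'}=M_{\tilde\a_{jj'}}\circ R_{j'j}$, where $R_{j'j}\colon\ltwo(I^\theta_{j'})\to\ltwo(I^\theta_j)$ is the bounded operator of restriction to $I^\theta_j\cap I^\theta_{j'}$ followed by extension by zero; then $\eta(D^\theta_j)\a_{jj'}=\bigl(\eta(D^\theta_j)M_{\tilde\a_{jj'}}\bigr)\circ R_{j'j}$ is a compact operator composed on the right with a bounded one, hence compact. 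The main obstacle I anticipate is purely notational: one must keep careful track of which operator acts on which $\ltwo(I^\theta_j)$ and correctly interpret the off-diagonal multiplication $\a_{jj'}$ as a map between two distinct $L^{2}$-spaces before one can cleanly apply the standard $f(X)g(D)\in\K$ argument on $\ltwo(\R)$.
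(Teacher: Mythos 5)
Your proposal is correct and follows essentially the same route as the paper: decompose $\eta(D^\theta)a$ via \eqref{eq:sum} into the finitely many blocks $\eta(D^\theta_j)\a_{jj'}$, view each as an operator on $\ltwo(I_j^\theta)$ (using that $\supp\a_{jj'}\subset\overline{I_j^\theta\cap I_{j'}^\theta}$, which is your extension-by-zero and $R_{j'j}$ factorisation), conjugate by $\V_j^\theta$ to obtain $\eta(D)\,\a_{jj'}\big(\lambda_j^\theta+2\tanh(X)\big)$ on $\ltwo(\R)$, and invoke the standard compactness of $\eta(D)\varphi(X)$ for $\eta,\varphi\in C_0(\R)$. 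Your treatment merely makes explicit some bookkeeping (continuity of the zero extension, the restriction operator between the two $\ltwo$-spaces) that the paper handles implicitly.
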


\begin{proof}
Let us first observe that the condition $a\in \AA^\theta_0$ means that all multiplication operators
$\a_{jj'}$ introduced in Remark \ref{rem:misleading} are defined by functions in $C_0\left(I_j^\theta \cap I_{j'}^\theta; \C\right)$, for all $j, j'\in \{1,\dots,N\}$. 
Thus, according to \eqref{eq:sum} it is enough to show that each operator
$\eta(D^\theta_{j})\,\!\a_{jj'}: \ltwo(I_{j'}^\theta)\to \ltwo(I_j^\theta)$ is compact. However, since $\supp \a_{jj'}\subset \overline{I_j^\theta \cap I_{j'}^\theta}$, 
it is simpler to show that $\eta(D^\theta_{j})\,\!\a_{jj'}\in \K\big(\ltwo(I_{j}^\theta)\big)$.
Now, if we apply the unitary transforms defined in \eqref{eq:Vjt} and \eqref{eq:Vjt*}
one gets
\begin{equation}\label{eq:compact}
\V_{j}^\theta\;\! \eta(D^\theta_{j})\,\!\a_{jj'}\;\!\big(\V_{j}^\theta\big)^*
= \eta(D)\;\!\a_{jj'}\big(\lambda_{j}^\theta + 2\tanh(X)\big).
\end{equation}
Since $\eta$ and $\a_{jj'}\big(\lambda_{j}^\theta + 2\tanh(\cdot)\big)$ belong to $C_0(\R)$,
the r.h.s.~of \eqref{eq:compact} is known to be a compact operator in $\ltwo(\R)$.
Thus, each summand in \eqref{eq:sum} is unitarily equivalent to a compact operator, which means
that $\eta(D^\theta)a$ is compact.
\end{proof}

Let us supplement the previous result with a compactness result about commutators.

\begin{Lemma}
For any $\eta\in C_0\big([-\infty,\infty)\big)$ and $a\in \AA^\theta$, the commutator
$[\eta(D^\theta),a]$ belongs to $\K\big(\Hrond^\theta\big)$.
\end{Lemma}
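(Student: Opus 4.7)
The plan is to transport the commutator to $\ltwo(\R;\C^N)$ via the unitary $\V^\theta$, where $\eta(D^\theta)$ becomes $\eta(D)\otimes 1_N$ and the matrix structure of $a$ is made explicit. Using the decomposition $a=\sum_{j,j'}\a_{jj'}\otimes \tfrac{1}{N}|\xi^\theta_j\rangle\langle\xi^\theta_{j'}|$ from Remark \ref{rem:misleading} together with the diagonal form of $\eta(D^\theta)$ associated with \eqref{eq_Dtheta}, a direct computation yields
\[
[\eta(D^\theta),a]=\sum_{j,j'}\bigl(\eta(D_j^\theta)\a_{jj'}-\a_{jj'}\eta(D_{j'}^\theta)\bigr)\otimes \tfrac{1}{N}|\xi^\theta_j\rangle\langle\xi^\theta_{j'}|.
\]
Conjugating each operator factor by $\V_j^\theta$ on the left and $(\V_{j'}^\theta)^*$ on the right, the problem reduces to showing the compactness on $\ltwo(\R)$ of $[\eta(D),B_{jj'}]$ for every $j,j'$, where $B_{jj'}:=\V_j^\theta\,\a_{jj'}\,(\V_{j'}^\theta)^*$.

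An explicit calculation based on \eqref{eq:Vjt}--\eqref{eq:Vjt*} identifies $B_{jj'}$. When $j=j'$, it is multiplication by $\tilde\a_{jj}(s):=\a_{jj}(\lambda_j^\theta+2\tanh s)$, a function in $C(\overline\R)$; the commutator $[\eta(D),\tilde\a_{jj}(X)]$ is then compact by the classical Cordes-type result that $[f(X),g(D)]\in\K(\ltwo(\R))$ whenever $f,g\in C(\overline\R)$ (applicable since $\eta\in C_0([-\infty,+\infty))\subset C(\overline\R)$). When $j\neq j'$, one obtains the factorisation $B_{jj'}=M_{\tilde\a_{jj'}}\,C_{\phi_{jj'}}$, with $\tilde\a_{jj'}(s):=\a_{jj'}(\lambda_j^\theta+2\tanh s)$ of compact support in $\R$ (because $\supp\a_{jj'}\subset\overline{I_j^\theta\cap I_{j'}^\theta}$ is a compact subset of the open interval $I_j^\theta$) and $C_{\phi_{jj'}}$ the unitary composition operator associated with the smooth diffeomorphism $\phi_{jj'}(s):=\arctanh\bigl(\tfrac{\lambda_j^\theta-\lambda_{j'}^\theta}{2}+\tanh s\bigr)$.

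For the off-diagonal commutator I would use the identity
\[
[\eta(D),B_{jj'}]=[\eta(D),M_{\tilde\a_{jj'}}]\,C_{\phi_{jj'}}+M_{\tilde\a_{jj'}}\,[\eta(D),C_{\phi_{jj'}}],
\]
so that the first summand is compact by the same argument as for the diagonal case. The second summand would be shown Hilbert--Schmidt by writing its integral kernel
\[
K(s,t)=\tilde\a_{jj'}(s)\Bigl[\hat\eta\bigl(s-\phi_{jj'}^{-1}(t)\bigr)\bigl((\phi_{jj'}^{-1})'(t)\bigr)^{1/2}-(\phi_{jj'}'(s))^{1/2}\,\hat\eta\bigl(\phi_{jj'}(s)-t\bigr)\Bigr]
\]
and expanding $\phi_{jj'}$ to first order near the diagonal $t=\phi_{jj'}(s)$. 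The expansion reveals that the bracketed difference vanishes at leading order, neutralising the $\mathrm{p.v.}\,(1/r)$-type singularity of $\hat\eta$ at $r=0$ that arises whenever $\eta(-\infty)\neq 0$. Combined with the compact support of $\tilde\a_{jj'}$ and the decay of $\hat\eta$ away from the origin, this places $K$ in $\ltwo(\R^2)$.

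The delicate point is this off-diagonal case: for a general $\eta\in C_0([-\infty,+\infty))$ neither kernel term is separately square-integrable, so the argument rests on the near-diagonal cancellation obtained by careful bookkeeping of the Jacobian factors $(\phi_{jj'}'(s))^{1/2}$ and $((\phi_{jj'}^{-1})'(t))^{1/2}$.
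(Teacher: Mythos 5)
Your route is genuinely different from the paper's: the paper proves this lemma by a pseudodifferential argument adapted from Connes--Moscovici, replacing the commutator by one with $F^\theta=D^\theta\big(1+(D^\theta)^2\big)^{-1/2}$ and comparing $D^\theta a$ with $\rho^\theta(a)D^\theta$ for a non-$*$ homomorphism $\rho^\theta$, whereas you transport everything to $\ltwo(\R)$ channel by channel. Your reduction to the blocks $\eta(D)B_{jj'}-B_{jj'}\eta(D)$ is correct, the identification $B_{jj'}=M_{\tilde\a_{jj'}}C_{\phi_{jj'}}$ (with the Jacobian $|\phi_{jj'}'|^{1/2}$ absorbed into $C_{\phi_{jj'}}$) is exact, the diagonal case via the Cordes commutator theorem is fine, and the near-diagonal cancellation you invoke is the right mechanism: the $\mathrm{p.v.}$ singularities of the two kernel terms match precisely because of the Jacobian factors, which is the kernel-level expression of diffeomorphism invariance of the principal symbol.

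There is, however, a genuine gap in the off-diagonal case as written. The claim that $\tilde\a_{jj'}$ has compact support is false: $\overline{I_j^\theta\cap I_{j'}^\theta}$ is \emph{not} a compact subset of the open interval $I_j^\theta$, since it touches one endpoint of $I_j^\theta$ (for instance $\lambda_j^\theta+2$ when $\lambda_{j'}^\theta>\lambda_j^\theta$); hence $\tilde\a_{jj'}(s)=\a_{jj'}(\lambda_j^\theta+2\tanh s)$ only belongs to $C_0(\R)$, with no rate of decay at the corresponding end. This is exactly the dangerous regime: at that end $\phi_{jj'}'(s)$ decays exponentially and $\phi_{jj'}(s)$ tends to the finite endpoint of its range ($\phi_{jj'}$ maps a half-line onto a half-line, so $C_{\phi_{jj'}}$ is only a partial isometry, and for $t$ outside $\Ran\phi_{jj'}$ the kernel consists of the single term $-\tilde\a_{jj'}(s)(\phi_{jj'}'(s))^{1/2}\hat\eta(\phi_{jj'}(s)-t)$ with no cancellation partner). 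A direct estimate there gives $\int|K(s,t)|^2\,\d t$ of order $|\tilde\a_{jj'}(s)|^2$ when $\eta(-\infty)\neq0$, so $K\in\ltwo(\R^2)$ would force $\tilde\a_{jj'}\in\ltwo(\R)$, which fails for suitable $\a_{jj'}\in C_0\big(I_j^\theta\cap I_{j'}^\theta\big)$ (take logarithmic decay at the common endpoint). A similar caveat concerns treating $\hat\eta$ as a function with a $\mathrm{p.v.}$ singularity and decay, which is not available for a general continuous $\eta\in C_0\big([-\infty,+\infty)\big)$. Both defects are repairable by standard density arguments: since the commutator depends norm-continuously on $a$ and on $\eta$, and $\K\big(\Hrond^\theta\big)$ is norm closed, you may assume $\supp\a_{jj'}$ compact in the open intersection and $\eta$ equal to $\eta(-\infty)\tfrac12\big(1-\tanh(\pi\cdot)\big)$ plus a Schwartz function; with these reductions stated, your kernel argument does close, but without them the Hilbert--Schmidt claim is not correct.
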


\begin{proof}
It suffices to show that $F^\theta:=D^\theta\big(1+(D^\theta)^2\big)^{-1/2}$ has compact commutator with elements of $\AA^\theta$. For if this is true, then any polynomial in $F^\theta$ compactly commutes with $\AA^\theta$.
If $\eta\in C_0\big([-\infty,\infty)\big)$ then there exists a unique $h\in C_0\big([-1,1)\big)$ such that $\eta(D^\theta)=h(F^\theta)$. By Stone-Weierstrass theorem, any $h \in C_0\big([-1,1)\big)$ can be approximated by a polynomial on $(-1,1)$. Then, one concludes the proof by recalling that the set of compact operators is norm closed.

Now, with the notation of Remark \ref{rem:misleading} and for any $a\in \AA^\theta$, 
let us define $\diag(a)$ by
$$
\big[\diag(a)\big](\lambda):=
\sum_{\{j\mid \lambda\in I^\theta_j\}}
\a_{jj}(\lambda)\otimes \;\!\P^\theta_j, \qquad \forall \lambda \in I^\theta.
$$
We also consider the diagonal multiplication operator $\Lambda\in \AA^\theta_0$ 
given for any $\lambda \in I^\theta$ by 
$$
\Lambda^\theta(\lambda):=\sum_{\{j\mid \lambda\in I^\theta_j\}}
\Lambda^\theta_j(\lambda)\otimes \;\!\P^\theta_j
$$
with $\Lambda^\theta_j(\lambda):=1-\Big(\tfrac{\lambda-\lambda^\theta_j}{2}\Big)^2$.
For $\epsilon>0$ small, let $\T^\theta_\epsilon$ be an open $\epsilon$-neighbourhood of $\T^\theta$, where the set of thresholds $\T^\theta$ has been defined in \eqref{eq:thresh}. 
We then set $\AA^\theta_\epsilon$ for the set of $a\in \AA^\theta$ 
satisfying $\supp\big(a-\diag(a)\big) \subset I^\theta\setminus \T^\theta_\epsilon$.
For $a\in \AA^\theta_\epsilon$ we define
$$
\rho^\theta(a):={\rm diag}(a)+\Lambda^\theta\big(a-\diag(a)\big)\big(\Lambda^\theta\big)^{-1}.
$$
With the notations introduced before it means that for any $\lambda \in I^\theta$ one has
\begin{align*}
\big[\rho^\theta(a)\big](\lambda)=&
\sum_{\{j\mid \lambda\in I^\theta_j\}}
\a_{jj}(\lambda)\otimes \;\!\P^\theta_j \\
& + \sum_{\{j\neq j'\mid \lambda\in I^\theta_j, \lambda \in I^\theta_{j'}\}}
\Lambda^\theta_j(\lambda)\a_{jj'}(\lambda)\big(\Lambda^\theta_{j'}(\lambda)\big)^{-1}\otimes \;\!\tfrac{1}{N}|\xi^\theta_j\rangle \langle \xi^\theta_{j'}|.
\end{align*}
One observes that $\rho^\theta$ is an algebra homomorphism on the subalgebra $\AA^\theta_\epsilon$, 
but it is not a $*$-homomorphism. With a grain of salt, the expression $\rho^\theta(a)$
could simply be written $\Lambda^\theta a \big(\Lambda^\theta\big)^{-1}$, but the diagonal
elements have to be suitably interpreted (the cancellation of two factors before the evaluation
at thresholds).

Let us also consider the subalgebra $\AA^\theta_{\epsilon,1}$ of $\AA^\theta_\epsilon$ of those $a$ whose derivatives  $a'$ exist, are continuous and satisfy
$\supp(a') \subset I^\theta\setminus \T^\theta_\epsilon$.
Our interest in this algebra is that any element of $\AA^\theta$ is a norm limit of elements in $\AA^\theta_{\epsilon,1}$ as $\epsilon \searrow 0$.
If we define the additional diagonal multiplication operator
$$
\Omega^\theta(\lambda)=
\sum_{\{j\mid \lambda\in I^\theta_j\}}
i\Big(\tfrac{\lambda-\lambda^\theta_j}{2}\Big)\otimes \;\!\P^\theta_j, \qquad \forall \lambda \in I^\theta,
$$
then one readily checks that for $a\in\AA^\theta_{\epsilon,1}$ one has in the form sense
on the domain of $D^\theta$
\begin{align*}
D^\theta a - \rho^\theta(a) D^\theta
&=-2i\Lambda^\theta a'+\Omega^\theta a-\rho^\theta(a)\Omega^\theta\\
&=-2i\Lambda^\theta a'+\Omega^\theta a-\rho^\theta\big(a-\diag(a)\big)\Omega^\theta- \diag(a)\Omega^\theta\\
&=-2i\Lambda^\theta a'+\Omega^\theta \big(a-\diag(a)\big)-\rho^\theta\big(a-\diag(a)\big)\Omega^\theta.
\end{align*}
Clearly, the resulting expression is bounded, and the same notation is kept
for the bounded operator extending it continuously. One also observes that
this operator corresponds to a multiplication operator with compact support.
By adapting the argument of Connes-Moscovici \cite[Prop.~3.2]{CoM} we finally write
\begin{align}
\nonumber \big[F^\theta,a\big]=&\big(1+(D^\theta)^2\big)^{-1/2}D^\theta a
-a\big(1+(D^\theta)^2\big)^{-1/2}D^\theta\\
\label{eqt1}=&\big(1+(D^\theta)^2\big)^{-1/2}\big(D^\theta a-\rho^\theta(a)D^\theta\big)\\
\label{eqt2}&+\big(1+(D^\theta)^2\big)^{-1/2}
\Big(\rho^\theta(a)-\big(1+(D^\theta)^2\big)^{1/2} a\big(1+(D^\theta)^2\big)^{-1/2}\Big)D^\theta.
\end{align}

In order to analyse these terms, let us firstly observe that $(D^\theta)^2$ is a second order uniformly elliptic pseudodifferential operator on $I^\theta\setminus \T^\theta_\epsilon$, and so the inverse of $1+(D^\theta)^2$ is a pseudodifferential operator of order $-2$.
By using the formula
$$
\big(1+(D^\theta)^2\big)^{-1/2}
=\frac{1}{\pi}\int_0^\infty\lambda^{-1/2}\big(1+\lambda+(D^\theta)^2\big)^{-1}\,\d\lambda
$$
one then infers that $\big(1+(D^\theta)^2\big)^{-1/2}$ is a pseudodifferential operator of order $-1$, and accordingly that $\big(1+(D^\theta)^2\big)^{1/2}$ is a pseudodifferential operator of order $1$. 
Note that these properties hold locally on the domain $I^\theta\setminus \T^\theta_\epsilon$.

Now, the term in \eqref{eqt1} is a compact operator since the second factor is a multiplication operator with support in $I^\theta\setminus \T^\theta_\epsilon$ and since the first factor is the pseudodifferential operator of order $-1$ studied above. 
For the term \eqref{eqt2}, we show below that 
$$
\big(1+(D^\theta)^2\big)^{1/2} a\big(1+(D^\theta)^2\big)^{-1/2}=\rho(a)+ R,
$$
where $R$ is a pseudodifferential operator of order $-1$ with compact support. 
The compactness of \eqref{eqt2} follows then directly. 
One finishes the proof with the density of $\AA^\theta_{\epsilon,1}$ in $\AA^\theta$ as $\epsilon \searrow 0$  and the fact that the set of compact operators is norm closed. 

For the last argument, observe that the principal symbols of the operators
$\big(1+(D^\theta)^2\big)^{1/2}$ and $\big(1+(D^\theta)^2\big)^{-1/2}$
are 
$I^\theta\times\R\ni(\lambda,\xi)\mapsto  2\Lambda^\theta(\lambda)|\xi|$ and
$I^\theta\times\R\ni(\lambda,\xi)\mapsto \frac{1}{2}\big(\Lambda^\theta\big)^{-1}(\lambda)|\xi|^{-1}$
respectively. It then follows that
\begin{align*}
(1+(D^\theta)^2)^{1/2} a(1+(D^\theta)^2)^{-1/2}= &\Lambda^\theta \big(a-\diag(a)\big)\big(\Lambda^\theta\big)^{-1}+\diag(a)+R \\
= & \rho^\theta(a) + R
\end{align*}
with $R$ a pseudodifferential operator of order $-1$ with compact support. 
\end{proof}

We are now interested in computing the quotient algebra $\QQ^\theta:=\EE^\theta / \K\big(\Hrond^\theta\big)$. 
Thanks to the previous result, we can focus on elements of the form $\eta(D^\theta)a$
with $\eta \in C_0\big([-\infty,+\infty)\big)$ and $a\in \AA^\theta$.
The starting point is again the decomposition of such elements provided in \eqref{eq:sum}.
As introduced in \eqref{eq:sum} and as explained in the proof of Lemma \ref{lem:compact}, 
it is enough to study the operator  $\eta(D^\theta_{j})\,\!\a_{jj'}$ acting on $\ltwo(I_{j}^\theta)$. If we set $q_{j}:\B\big(\ltwo(I_{j}^\theta)\big)\to \B\big(\ltwo(I_{j}^\theta) \big)/ \K\big(\ltwo(I_{j}^\theta) \big)$ then the image of  $\eta(D^\theta_{j})\,\!\a_{jj'}$ through
$q_{j}$ falls into two distinct situations.
\begin{enumerate}
\item If $j\neq j'$, then 
\begin{equation}\label{eq:q1}
q_{j}\big(\eta(D^\theta_{j})\;\!\a_{jj'}\big) 
=  \eta(-\infty)\;\!\a_{jj'} \in   C_0\big(I^\theta_{j}\cap I^\theta_{j'}\big),
\end{equation}
\item   If $j=j'$, then 
\begin{align}\label{eq:q2}
\begin{split}
q_{j}\big(\eta(D^\theta_{j})\;\!\a_{jj}\big)
& = \Big( \eta\;\!\a_{jj}(\lambda_{j}^\theta-2),\ \eta(-\infty)\;\!\a_{jj},\ 
 \eta\;\!\a_{jj}(\lambda_{j}^\theta+2) \Big) \\ 
& \qquad \in
C_0\big((+\infty,-\infty]\big) \oplus C\big(\overline {I^\theta_{j}}\big) \oplus 
C_0\big([-\infty,+\infty)\big).
\end{split}
\end{align}
\end{enumerate}
These statement can be obtained as in the proof of Lemma \ref{lem:compact} by looking
at these operators in $\ltwo(\R)$ through the conjugation by $\V_{j}^\theta$. Then, one ends up with operators of the form $\eta(D)\varphi(X)$ for $\eta \in C_0\big([-\infty,+\infty)\big)$
and for $\varphi\in C_0(\R)$ in the first case, and $\varphi \in C\big([-\infty,+\infty]\big)$
in the second case. The image of such operators by the quotient map (defined by the compact operators) have been extensively studied in \cite[Sec.~4.4]{Ri}, from which we infer the results presented above. 

\begin{Remark}\label{rem_identification}
Observe that in the first component of \eqref{eq:q2}, the interval $[-\infty,+\infty)$ has been oriented in the reverse direction. The reason is that the function
introduced in \eqref{eq:q2} can be seen as a continuous function on the union of the three intervals
$$
(+\infty,-\infty]\cup \overline {I^\theta_{j}} \cup [-\infty,+\infty)
$$
once their endpoints are correctly identified. This observation and this trick will be used several times in the sequel.
Note also that \eqref{eq:q1} could be expressed as \eqref{eq:q2} by considering
the triple $\big(0, \;\! \eta(-\infty)\a_{jj'},\;\! 0\big)$.
\end{Remark}

In the next statement we collect the various results obtained so far. 
However, in order to provide a unified statement for all $\theta\in (0,\pi)$ and for arbitrary $N$,  some notations have to be slightly updated.
More precisely, recall that for $j\in\{1,\dots,N\}$ one has
$\lambda_j^\theta:=2\cos\left(\frac{\theta+2\pi\;\!j}N\right)$.
For the following statement it will be useful to have a better understanding of the sets $\{\lambda^\theta_j\}_{j=1}^N$. 
Namely, let us observe that for fixed $\theta\in (0,\pi)$ and for $N$ even, we have 
$$
\lambda^\theta_{\frac{N}{2}}<\lambda^\theta_{\frac{N}{2}-1}<\lambda^\theta_{\frac{N}{2}+1}<\lambda^\theta_{\frac{N}{2}-2}<\lambda^\theta_{\frac{N}{2}+2}<\ldots < \lambda^\theta_1<\lambda^\theta_{N-1}<\lambda^\theta_{N},
$$
while for $N$ odd we have
$$
\lambda^\theta_{\frac{N-1}{2}}<\lambda^\theta_{\frac{N+1}{2}}<\lambda^\theta_{\frac{N-1}{2}-1}<\lambda^\theta_{\frac{N+1}{2}+1}<\lambda^\theta_{\frac{N-1}{2}-2}<\ldots<\lambda^\theta_1<\lambda^\theta_{N-1}<\lambda^\theta_{N}.
$$
We now rename these eigenvalues and set $\tl_1<\tl_2<\ldots<\tl_N$ for 
these $N$ distinct  and ordered values.
Accordingly, we define the eigenvector $\tilde \xi^\theta_j$ and the orthogonal projection $\tP_j$, based on the eigenvector $\xi^\theta_k$ and the projection $\P^\theta_k$ corresponding to the eigenvalue $\lambda^\theta_k=\tl_j$.  Finally we set
$$
\tH_j:=\sp\big\{\tP_i\C^N\mid i\leq j\big\}.
$$

\begin{Proposition}\label{prop_quotient}
The $\QQ^\theta:=\EE^\theta / \K\big(\Hrond^\theta\big)$ has the shape
of an upside down comb, with $2N$ teeth, and more precisely:
\begin{equation}\label{eq:comb}
\QQ^\theta \subset C\left(\Big(\bigoplus_{j=1}^{N-1}  \downarrow_j\oplus \rightarrow_j\Big)
\oplus \Big(\downarrow_N \oplus \rightarrow_N \oplus \uparrow^1 \Big) \oplus  \Big( \bigoplus_{j=2}^{N}  \rightarrow^j\oplus \uparrow^j\Big);\B(\C^N)\right)^+,
\end{equation}
with
\begin{align*}
\downarrow_j & := [+\infty,-\infty] \qquad \forall j \in \{1, \ldots,N\}\\
\rightarrow_j & := [\tl_j-2, \tl_{j+1}-2)  \qquad \forall j \in \{1, \ldots,N-1\} \\
\rightarrow_N & :=  [\tl_N-2, \tl_{1}+2], \\
\rightarrow^j & :=  (\tl_{j-1}+2, \tl_{j}+2]  \qquad \forall j \in \{2, \ldots,N\} \\
\uparrow^j & := [-\infty,+\infty]  \qquad \forall j \in \{1, \ldots,N\}.
\end{align*}
Moreover, if $\phi_*$ denotes the restriction to the edge $*$ of any $\phi \in \QQ^\theta$, 
then these restrictions satisfy the conditions:
\begin{align*}
\phi_{\rightarrow_j}  & \in C\big( [\tl_j-2, \tl_{j+1}-2);\B(\tH_j)\big) \ \ \forall j\in \{1, \ldots, N-1\}\\
\phi_{\rightarrow_N}  & \in C\big( [\tl_N-2, \tl_{1}+2]; \B(\C^N)\big) \\
\phi_{\rightarrow^j}  & \in C\big( (\tl_{j-1}+2, \tl_{j}+2]; \B((\tH_{j-1})^\bot)\big) \ \ \forall j\in \{2, \ldots, N\},
\end{align*}
together with
\begin{align*}
&\phi_{\downarrow_j} \in C\big([+\infty,-\infty]; \B(\tP_j \C^N) \big)  \\
&\phi_{\uparrow^{j}} \in C\big([-\infty,+\infty];\B(\tP_j\C^N)\big),
\end{align*}
for all $j\in \{1, \ldots, N\}$.
In addition,  the following continuity properties hold:
\begin{align}\label{eq:cont}
\begin{split}
\phi_{\rightarrow_1}(\tl_1-2) &= \phi_{\downarrow_1}(-\infty), \\
\phi_{\rightarrow_j}(\tl_j-2) &= \lim_{\lambda \nearrow \tl_j-2}\phi_{\rightarrow_{j-1}}(\lambda)
\oplus \phi_{\downarrow_j}(-\infty) \ \ \forall j\in \{2, \ldots, N\} \\
\phi_{\rightarrow_N}(\tl_1+2)  &=   \phi_{\uparrow^{1}}(-\infty) \oplus \lim_{\lambda \searrow \tl_1+2}\phi_{\rightarrow^{2}}(\lambda) \\
\phi_{\rightarrow_j}(\tl_j+2)  &=  \phi_{\uparrow^{j}}(-\infty) \oplus \lim_{\lambda \searrow \tl_j+2}\phi_{\rightarrow^{j+1}}(\lambda) \ \ \forall j\in \{2, \ldots, N-1\} \\
\phi_{\rightarrow^N}(\tl_N+2)  &= \phi_{\uparrow^{N}}(-\infty),
\end{split}
\end{align}
and there exists $c\in \C$ such that for all $j\in \{1, \dots,N\}$,
\begin{equation}\label{eq:unit}
c=\Tr\big(\phi_j(+\infty)\big)=\Tr\big(\phi^j(+\infty)\big).
\end{equation}
\end{Proposition}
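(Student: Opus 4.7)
The plan is to determine, for a dense set of generators, the form of their classes in $\QQ^\theta=\EE^\theta/\K\big(\Hrond^\theta\big)$, and then describe the comb into which these classes embed. By the preceding commutator lemma, every element of $\EE^\theta$ is a norm limit of finite sums of the form $\eta(D^\theta)\;\!a+\mu\cdot\mathrm{id}$ with $\eta\in C_0\big([-\infty,+\infty)\big)$, $a\in\AA^\theta$ and $\mu\in\C$, modulo a compact remainder. It therefore suffices to track the image of $\eta(D^\theta)\;\!a$ under the quotient map, noting that the unitisation will contribute the constant $c=\mu$.

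First, I would expand $\eta(D^\theta)\;\!a$ via \eqref{eq:sum} and analyse each summand $\eta(D^\theta_j)\;\!\a_{jj'}$ on $\ltwo(I_j^\theta)$ by conjugating with $\V_j^\theta$, exactly as in the proof of Lemma \ref{lem:compact}. This produces an operator $\eta(D)\;\!\varphi(X)$ on $\ltwo(\R)$ with $\varphi(\cdot)=\a_{jj'}\big(\lambda_j^\theta+2\tanh(\cdot)\big)$, which lies in $C_0(\R)$ when $j\neq j'$ and in $C\big([-\infty,+\infty]\big)$ when $j=j'$. The images of such operators modulo compacts are precisely those recalled in \eqref{eq:q1}--\eqref{eq:q2} (from \cite[Sec.~4.4]{Ri}). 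Pulling these back to the original fibers gives, for $j\neq j'$, a $\B(\C^N)$-valued function on $I_j^\theta\cap I_{j'}^\theta$ vanishing at the endpoints; and for $j=j'$, a triple consisting of a function on $\overline{I_j^\theta}$ together with two "threshold traces'' $\eta\cdot\a_{jj}(\lambda_j^\theta\mp 2)$ living on teeth oriented as $(+\infty,-\infty]$ and $[-\infty,+\infty)$ respectively.

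The second step is to assemble these data into the comb \eqref{eq:comb}. On a horizontal edge, the set of active channels $\{i:\lambda\in I_i^\theta\}$ is constant, and this produces the block structure $\B(\tH_j)$ on $\rightarrow_j$, $\B(\C^N)$ on $\rightarrow_N$, and $\B(\tH_{j-1}^\bot)$ on $\rightarrow^j$. The $C_0$-vanishing imposed on the off-diagonal $\a_{jj'}$ (with $j\neq j'$) at the boundaries of $I_j^\theta\cap I_{j'}^\theta$ is exactly what is needed so that the block dimension can jump across a threshold without producing a discontinuity. The only nontrivial contribution at the threshold $\tl_j-2$ then comes from the diagonal term $\a_{jj}$ and is carried by the tooth $\downarrow_j$, with value at $-\infty$ equal to $\eta(-\infty)\;\!\a_{jj}(\tl_j-2)$; the closing thresholds behave symmetrically via $\uparrow^j$. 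Reading this off yields the matching conditions \eqref{eq:cont}. Finally, since $\eta$ vanishes at $+\infty$, every tooth of $q_\theta\big(\eta(D^\theta)\;\!a\big)$ takes value $0$ at $+\infty$; adding $\mu\cdot\mathrm{id}$ then contributes $\mu$ times the identity on the one-dimensional space $\tP_j\C^N$, so that \eqref{eq:unit} holds with $c=\mu$.

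The main obstacle is the combinatorial bookkeeping at the thresholds: one must verify that the strict ordering $\tl_1<\tl_2<\dots<\tl_N$ (which holds precisely because $\theta\in(0,\pi)$, cf.~Remark \ref{rem:0pi}) makes every threshold correspond to a single opening or closing channel, so that the tooth description is accurate, and that the off-diagonal summands vanish at exactly the right endpoints so as not to spoil the block-diagonal matching \eqref{eq:cont}. Once this structural check is in place, the density of the generators used above together with the norm-closedness of $\K\big(\Hrond^\theta\big)$ upgrade the calculation to all of $\EE^\theta$, yielding the claimed inclusion.
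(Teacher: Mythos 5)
Your proposal is correct and follows essentially the same route as the paper: reduce to generic elements $\eta(D^\theta)a+c1$ via the compactness and commutator lemmas, decompose by \eqref{eq:sum}, apply the quotient computations \eqref{eq:q1}--\eqref{eq:q2} channel by channel, and reassemble using the fiber structure of $\Hrond^\theta$ and the strict ordering of the thresholds $\tl_1<\dots<\tl_N$ to obtain the comb, the matching conditions \eqref{eq:cont}, and the trace condition \eqref{eq:unit} from the unit and the vanishing of $\eta$ at $+\infty$.
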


A representation for the support of the quotient algebra $\QQ^\theta$ is provided in Figure \ref{fig_Rt}. In the previous description of the quotient algebra, note that the condition \eqref{eq:unit} is related to the addition of the unit to $\EE^\theta$. Indeed, if no unit is added to $\EE^\theta$, then one has $c=0$. 
Let us also denote by $q^\theta$ the quotient map 
$$
q^\theta: \EE^\theta \to \QQ^\theta\equiv \EE^\theta/\K\big(\Hrond^\theta\big).
$$

\begin{figure}
    \centering
    \includegraphics[width=15cm]{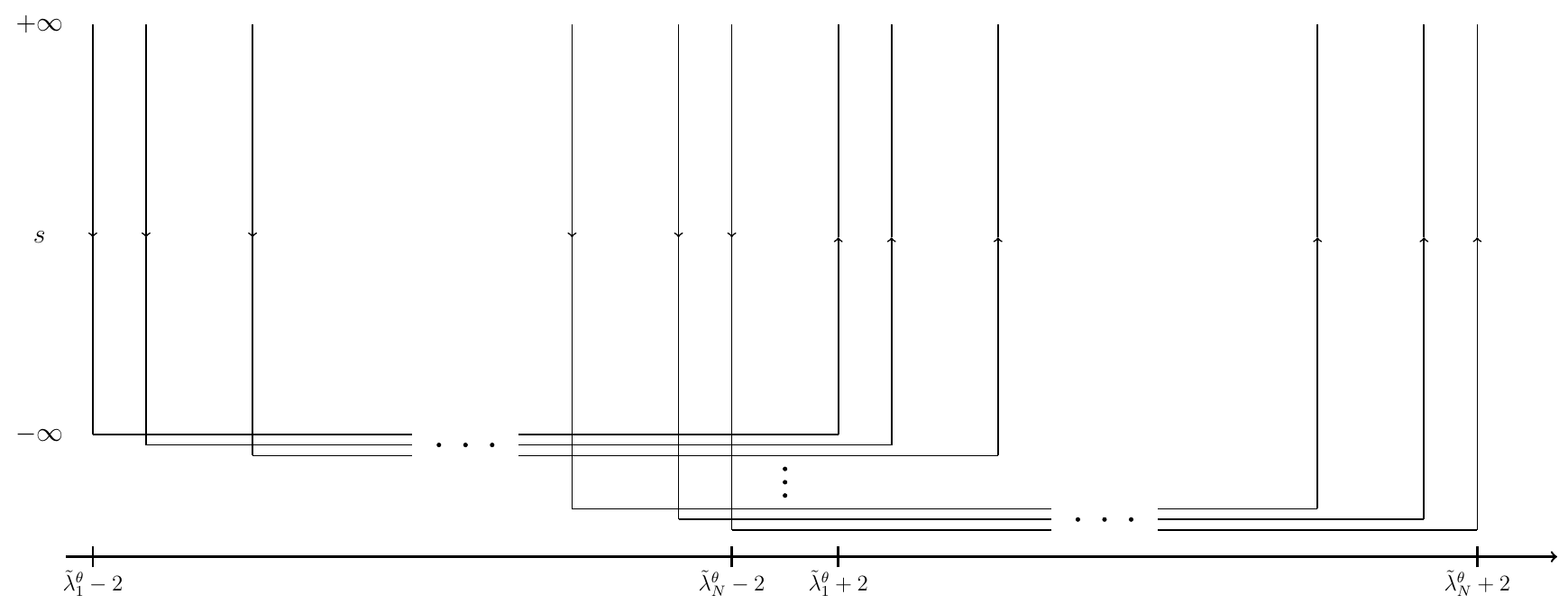}
    \caption{A representation of the quotient algebra $\QQ^\theta$.}
    \label{fig_Rt}
\end{figure}

\begin{proof}
The proof consists in looking carefully at the expressions provided in \eqref{eq:q1} and \eqref{eq:q2}, and in keeping track of the matricial form of the matrix-valued function $a$.
Let us consider a generic element of $\EE^\theta$ given by
$\eta(D^\theta)a + c1$ for $\eta\in C_0\big([-\infty,+\infty)\big)$, $a\in \AA^\theta$,
and $c\in \C$. 
By taking \eqref{eq:sum} into account, and the specific form of $\Hrond^\theta$,
one has
\begin{equation*}
\eta(D^\theta)a + c1
=  \sum_{j,j'}
\big(\eta(\tilde D^\theta_{j})\;\!\tilde\a_{jj'}+c \delta_{jj'}\big)\otimes 
\;\!\tfrac{1}{N}|\tilde\xi^\theta_j\rangle \langle \tilde\xi^\theta_{j'}|
\end{equation*}
with $\delta_{jj'}$ the Kronecker delta function.
According to \eqref{eq:q1} and \eqref{eq:q2} one has
\begin{align*}
&q^\theta\big(\eta(D^\theta)a + c1\big) \\
& = \sum_{\{j,j'\mid j'\neq j\}}
\eta(-\infty)\;\!\tilde \a_{jj'}\otimes 
\;\!\tfrac{1}{N}|\tilde \xi^\theta_j\rangle \langle \tilde \xi^\theta_{j'}| \\
&\quad +
\sum_{j}
\Big( \eta\;\!\tilde \a_{jj}(\tilde \lambda_{j}^\theta-2)+c,\ \eta(-\infty)\;\!\tilde \a_{jj}+c, 
\ \eta\;\!\tilde \a_{jj}(\tilde \lambda_{j}^\theta+2)+c \Big)\otimes 
\;\!\tfrac{1}{N}|\tilde \xi^\theta_j\rangle \langle \tilde \xi^\theta_{j}| \\
& = \sum_{j,j'}
\Big( \big(\eta\;\!\tilde \a_{jj}(\tilde \lambda_{j}^\theta-2)+c\big)\delta_{jj'},\  
\eta(-\infty)\;\!\tilde \a_{jj'}+c\delta_{jj'}, \
\big(\eta\;\!\tilde \a_{jj}\tilde \lambda_{j}^\theta+2)+c\big)\delta_{jj'} \Big)\otimes 
\;\!\tfrac{1}{N}|\tilde \xi^\theta_j\rangle \langle \tilde \xi^\theta_{j'}|.
\end{align*}

In order to fully understand the previous expression, it is necessary to remember
the special structure of the underlying Hilbert space 
$\Hrond^\theta:=\int_{I^\theta}^\oplus\Hrond^\theta(\lambda)\;\!\d\lambda$
with 
\begin{align*}
\Hrond^\theta(\lambda)
& =
\begin{cases}  \tilde \Hrond^\theta_j  & \hbox{ if } \tilde\lambda_j^\theta-2\leq \lambda< \tilde\lambda_{j+1}^\theta-2 \\
\C^N & \hbox{ if } \tilde \lambda^\theta_N-2 \leq \lambda \leq \tilde \lambda^\theta_1+2\\
(\tilde \Hrond^\theta_j)^\bot  & \hbox{ if } \tilde\lambda_j^\theta+2<\lambda\leq \tilde\lambda_{j+1}^\theta+2 
\end{cases}. 
\end{align*}
Note that compared with the original definition of $\Hrond^\theta(\lambda)$ we have changed the fiber at a finite number of points, which does 
not impact the direct integral, but simplify our argument subsequently.
Thus, the changes of dimension of the fibers take place at all $\tilde \lambda_j^\theta-2$ 
and $\tilde \lambda_j^\theta+2$, for $j\in \{1, \dots,N\}$.
By taking this into account, the interval $I^\theta$ has to be divided into $2N-1$ subintervals, 
firstly of the form $[\tilde \lambda_j^\theta-2, \tilde \lambda_{j+1}^\theta-2)$ for
$j\in \{1,\dots,N-1\}$, then
the special interval $[\tilde \lambda_N^\theta-2,\tilde \lambda_1^\theta+2]$, and finally
the intervals $(\tilde \lambda_{j-1}^\theta+2,\tilde \lambda_{j}^\theta+2]$ for $j\in \{2, \dots, N\}$.
By using this partition of $I^\theta$, one gets
\begin{align*}
&q^\theta\big(\eta(D^\theta)a + c1\big) \\
&=\sum_{j=1}^{N-1}
\Big(\big(\eta\;\!\tilde \a_{jj}(\tilde \lambda_{j}^\theta-2)+c\big)\otimes
\tilde \P_j^\theta,\  \chi_{[\tilde \lambda_j^\theta-2, \tilde \lambda_{j+1}^\theta-2)}\big(
\eta(-\infty)\;\!a+c1_{\tilde{\Hrond}^\theta_j}\big)\Big) \\
& \quad +
\Big(\big(\eta\;\!\tilde \a_{NN}(\tilde \lambda_{N}^\theta-2)+c\big)\otimes\tilde \P_N^\theta,
\  \chi_{[\tilde \lambda_N^\theta-2,\tilde \lambda_1^\theta+2]}\big(\eta(-\infty)\;\!a+c1\big),\  
\big(\eta\;\!\tilde \a_{11}(\tilde \lambda_{1}^\theta+2)+c\big)\otimes \tilde \P_1^\theta \Big) \\
& \quad + \sum_{j=2}^{N} 
\Big(\chi_{(\tilde \lambda_{j-1}^\theta+2, \tilde \lambda_{j}^\theta+2]}\big(\eta(-\infty)\;\!a+c1_{(\tilde{\Hrond}^\theta_{j-1})^\bot}\big), \ 
\big(\eta\;\!\tilde \a_{jj}(\tilde \lambda_{j}^\theta+2)+c\big)\otimes \tilde \P_{j}^\theta \Big).
\end{align*}
The description obtained so far leads directly to the structure of \eqref{eq:comb}.

The properties stated in \eqref{eq:cont} follow from the continuity of $a\in \AA^\theta$
and from its properties at thresholds. 
The final property \eqref{eq:unit} comes from the unit and the fact that 
$\eta$ vanishes at $+\infty$.
\end{proof}

Since $\F^\theta W_-^\theta(\F^\theta)^*\in \EE^\theta$, by Proposition \ref{prop:affiliation},
one can look at the image of this operator in the quotient algebra.
The following statement contains a description of this image, using the notations introduced
in Proposition \ref{prop_quotient}. 
The functions $\eta_\pm:\R\to \C$ defined for any $s\in \R$ by
\begin{equation}\label{eq_def_eta}
\eta_\pm(s):=\tanh(\pi s)\pm i \cosh(\pi s)^{-1}.
\end{equation}
will also be used.
Finally, based on the projections $\{\tilde \P^\theta_j\}_{j=1}^N$
introduced before Proposition \ref{prop_quotient}, we define the channel
scattering matrix $\tilde S^\theta_{jj}(\lambda):=\tilde \P_j^\theta S^\theta(\lambda)\tilde \P_j^\theta$. 

\begin{Lemma}\label{lem:restrictions}
Let $\phi:=q^\theta \big(\F^\theta W_-^\theta(\F^\theta)^*\big)$
denote the image of $\F^\theta W_-^\theta(\F^\theta)^*$ in the quotient algebra. 
Then, the restrictions of $\phi$ on the various parts of $\QQ^\theta$ are given by:
\begin{align*}
& \hbox{ for }  j\in \{1, \ldots, N-1\} \hbox{ and } \lambda \in [\tl_j-2, \tl_{j+1}-2), \quad & \phi_{\rightarrow_j}(\lambda) = S^\theta(\lambda) \ \in \B(\tH_j)\\
&\hbox{ for } \lambda \in  [\tl_N-2, \tl_{1}+2], \quad & \phi_{\rightarrow_N}(\lambda) 
= S^\theta(\lambda)\  \in \B(\C^N)\\
&\hbox{ for }  j\in \{2, \ldots, N\} \hbox{ and }  \lambda \in  (\tl_{j-1}+2, \tl_{j}+2], \quad & \phi_{\rightarrow^j}(\lambda) = S^\theta(\lambda) \ \in  \B\big((\tH_{j-1})^\bot\big),
\end{align*}
and for $s\in \R$
\begin{align}
\label{eq:res1} &\phi_{\downarrow_j}(s)= 1+ \tfrac12\big(1-\eta_-(s)\big)\big(\tilde S^\theta_{jj}(\tilde\lambda^\theta_j-2)-1\big) \ \in \B(\tP_j \C^N) \\
\label{eq:res2} &\phi_{\uparrow^{j}}(s)= 1+ \tfrac12\big(1-\eta_+(s)\big)\big(\tilde S^\theta_{jj}(\tilde\lambda^\theta_j+2)-1\big) \ \in\B(\tP_j\C^N).
\end{align}
\end{Lemma}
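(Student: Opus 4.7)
The plan is to start from the explicit formula
\[
\F^\theta W_-^\theta(\F^\theta)^* - 1 = \eta_1(D^\theta)(S^\theta(X^\theta)-1) + \eta_2(D^\theta)\,T\,(S^\theta(X^\theta)-1) + k^\theta,
\]
obtained from Theorem \ref{thm_formula_theta} by conjugation with $\F^\theta$ as in \eqref{eq_W_Hrond}, where $\eta_1(s):=\tfrac12(1-\tanh(\pi s))$ and $\eta_2(s):=-\tfrac{i}{2}\cosh(\pi s)^{-1}$ both belong to $C_0([-\infty,+\infty))$, the operator $T:=(\V^\theta)^*(\tanh(X)\otimes 1_N)\V^\theta$ lies in $\AA^\theta$ by Lemma \ref{lem_tanh}, and $k^\theta\in\K(\Hrond^\theta)$. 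Since $S^\theta-1$ and $T(S^\theta-1)$ both belong to $\AA^\theta$, the two main summands are of the form $\eta(D^\theta)a$ decomposed in \eqref{eq:sum}, and I will compute their images under the quotient map $q^\theta$ edge by edge, using the formulas \eqref{eq:q1}--\eqref{eq:q2} established in the proof of Proposition \ref{prop_quotient}.

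On the horizontal edges $\rightarrow_j$, $\rightarrow_N$, and $\rightarrow^j$, only the ``middle'' component of \eqref{eq:q1}--\eqref{eq:q2} contributes, and it is proportional to $\eta(-\infty)$. Since $\eta_1(-\infty)=1$ and $\eta_2(-\infty)=0$, only the first summand survives, giving $1+(S^\theta(\lambda)-1)=S^\theta(\lambda)$ restricted to the correct fiber $\tH_j$, $\C^N$, or $(\tH_{j-1})^\bot$ imposed by the structure of $\Hrond^\theta(\lambda)$ recalled in the proof of Proposition \ref{prop_quotient}.

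On the vertical edges $\downarrow_j$ at the lower threshold $\tl_j-2$ and $\uparrow^j$ at the upper threshold $\tl_j+2$, the contribution comes from the first or third component of the diagonal $j=j'$ term in \eqref{eq:q2} tensored with the rank-one projection $\tilde\P_j$, so the fiber collapses to $\tilde\P_j\C^N$ and only the channel entry $\tilde S^\theta_{jj}(\tl_j\mp 2)$ survives. Lemma \ref{lem_tanh} identifies the diagonal of $T$ with the multiplication operator $\lambda\mapsto\tfrac{\lambda-\tl_j}{2}\,\tilde\P_j$, whose value at $\tl_j\mp 2$ is $\mp\tilde\P_j$. Collecting the $\eta_1$ and $\eta_2$ contributions yields
\[
\phi_{\downarrow_j}(s) = \tilde\P_j + \bigl(\eta_1(s)-\eta_2(s)\bigr)\bigl(\tilde S^\theta_{jj}(\tl_j-2)-\tilde\P_j\bigr),
\]
\[
\phi_{\uparrow^j}(s) = \tilde\P_j + \bigl(\eta_1(s)+\eta_2(s)\bigr)\bigl(\tilde S^\theta_{jj}(\tl_j+2)-\tilde\P_j\bigr),
\]
and the elementary identities $\eta_1-\eta_2=\tfrac12(1-\eta_-)$ and $\eta_1+\eta_2=\tfrac12(1-\eta_+)$ (with $\eta_\pm$ as in \eqref{eq_def_eta}) produce \eqref{eq:res1}--\eqref{eq:res2} once $\tilde\P_j$ is read as the identity of $\B(\tilde\P_j\C^N)$.

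The step demanding care will be the bookkeeping of the reversed orientation of $\downarrow_j=[+\infty,-\infty]$ and of the signs $\mp$ coming from $T$ evaluated at the two thresholds; once these are kept straight, the continuity conditions \eqref{eq:cont} are automatically verified, since $\eta_\pm(-\infty)=-1$ forces $\phi_{\downarrow_j}(-\infty)=\tilde S^\theta_{jj}(\tl_j-2)$ and $\phi_{\uparrow^j}(-\infty)=\tilde S^\theta_{jj}(\tl_j+2)$, matching the channel entries of the horizontal-edge value $S^\theta$ at the corresponding threshold.
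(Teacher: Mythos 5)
Your proposal is correct and follows essentially the same route as the paper: rewrite $\F^\theta(W_-^\theta-1)(\F^\theta)^*$ via \eqref{eq_W_Hrond} and Lemma \ref{lem_tanh} as two terms of the form $\eta(D^\theta)a$ plus a compact, then apply the quotient formulas \eqref{eq:q1}--\eqref{eq:q2} from the proof of Proposition \ref{prop_quotient}, using $\eta_1(-\infty)=1$, $\eta_2(-\infty)=0$ on the horizontal edges and the evaluation $\tfrac{\lambda-\tl_j}{2}\big|_{\lambda=\tl_j\pm2}=\pm1$ on the vertical ones. Your explicit identities $\eta_1\mp\eta_2=\tfrac12(1-\eta_\mp)$ just spell out the sign bookkeeping the paper summarises in one line, and the signs are correct.
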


\begin{proof}
Let us start by looking at the expression for
$\F^\theta \big(W_-^\theta-1\big)(\F^\theta)^*$, as provided for example in 
\eqref{eq_W_Hrond}, and by taking the content of Lemma \ref{lem_tanh}
into account. It then follows that $\F^\theta \big(W_-^\theta-1\big)(\F^\theta)^*$
can be rewritten as
\begin{equation}\label{eq:2terms}
\tfrac12\big(1-\tanh(\pi D^\theta)\big)\big(S^\theta(X^\theta)-1\big)
-i\tfrac{1}{2}\cosh(\pi D^\theta)^{-1} \sum_j
\tfrac{X^\theta-\lambda^\theta_j}{2}\P^\theta_j 
\big(S^\theta(X^\theta)-1\big)
+k^\theta
\end{equation}
with $k^\theta\in \K\big(\Hrond^\theta\big)$.
Thus, one ends up with two generic elements of $\EE^\theta$ which have been carefully
studied in the proof of Proposition \ref{prop_quotient}. The only additional necessary tricky observation is that
$$
\tfrac{\lambda-\lambda^\theta_j}{2}\Big|_{\lambda = \lambda_j^\theta\pm 2} = \pm 1
$$
which explains the appearance of the two functions $\eta_\pm$.
Note also that 
$$
\lim_{s\to -\infty}\tfrac12\big(1-\tanh(\pi s)\big) = 1 \quad \hbox{ and }\quad
\lim_{s\to \infty}\tfrac12\big(1-\tanh(\pi s)\big) = 0, 
$$
while $\lim_{s\to \pm \infty}\cosh(\pi s)^{-1} = 0$.
The rest of the proof is just a special instance of the proof of Proposition \ref{prop_quotient}
for the two main terms exhibited in \eqref{eq:2terms}, since the compact term verifies
$q^\theta(k^\theta)=0$.
\end{proof}

%--------------------------------------------------------------------------------------
\section{\texorpdfstring{$K$}{K}-theory for the quotient algebra}\label{sec:Kth}
\setcounter{equation}{0}
%--------------------------------------------------------------------------------------

In this section, we compute the $K$-groups of the quotient algebra. Since this
computation is of independent interest and does not rely on the details of the model,
we provide a self-contained proof with simpler notations (the longest proofs are given in the Appendix). 
The main difficulty in the quotient algebra is the appearance of continuous functions with values in matrices of different sizes. However, the continuity conditions are very strict, and allow us to 
determine the $K$-theory for this algebra.

Proposition \ref{prop_quotient} tells us that the quotient algebra is matrix-valued functions on the union of the boundaries of the squares $I_j^\theta\cap I_{j'}^\theta\times[-\infty,\infty]$. Proposition \ref{prop_quotient} also describes how the rank of the matrices increases with $\lambda$ until we reach $\tilde\lambda^\theta_1+2$ when it decreases again. The changes of rank happen continuously as the ``extra rank'' appears from (or disappears to) infinity via the ``vertical'' functions $\phi_{\downarrow_j}$ and $\phi_{\uparrow^j}$. 
This section repeats this construction in a more induction friendly way, so that the $K$-theory can be computed. Interval by interval the isomorphism of the two constructions will be clear, and that the gluing of the pieces is the same can be checked using Proposition \ref{prop_quotient}. 

Before starting the construction, let us recall a result which will be constantly used, see
\cite[Ex.~4.10.22]{HR}.
Consider three $C^*$-algebras $A, B$, and $C$, two surjective $*$-homomorphisms 
$\pi_1: A\to C$ 
and $\pi_2: B\to C$, and  the pullback diagram
\[
\xymatrix{ A\oplus_CB\ar[r]^{}\ar[d]^{} & A\ar[d]^{\pi_1}\\
B\ar[r]_{\pi_2} & C}
\]
with the pullback algebra $A\oplus_CB=\big\{(a,b)\in A\oplus B\mid \pi_1(a)=\pi_2(b)\big\}$.
Then, the Mayer-Vietoris sequence 
\begin{equation}\label{eq_MV}
\begin{split}
\xymatrix{K_0(A\oplus_CB)\ar[r]^{} & K_0(A)\oplus K_0(B)\ar[r]^-{\pi_{1*}-\pi_{2*}} & K_0(C)\ar[d]\\
K_1(C)\ar[u] & K_1(A)\oplus K_1(B)\ar[l]^-{\pi_{1*}-\pi_{2*}} & K_1(A\oplus_CB)\ar[l]^{}
}
\end{split}
\end{equation}
allows us to compute the $K$-theory of the pullback algebra in terms of the building blocks.
The map $\pi_{1*}-\pi_{2*}$ is often referred to as the difference homomorphism.

Let us also introduce two easy lemmas, whose proofs are elementary exercises in $K$-theory.
For shortness, we shall use the notation $I$ for $(0,1]$, the half-open interval.

\begin{Lemma}\label{lem:K1}
One has $K_0\big(C_0(I)\big)=K_1\big(C_0(I)\big)=0$.
\end{Lemma}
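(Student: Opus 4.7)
The plan is to show that $C_0(I)$ is homotopy equivalent (as a $C^*$-algebra) to the zero algebra, and then appeal to the homotopy invariance of $K$-theory. Concretely, $C_0(I)=C_0((0,1])$ is the cone over $\C$, and cones have vanishing $K$-theory.

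The homotopy I would write down is the standard scaling homotopy. For each $t\in[0,1]$, define $\alpha_t:C_0(I)\to C_0(I)$ by $\alpha_t(f)(s):=f(ts)$ for $s\in(0,1]$. This is well-defined because any $f\in C_0(I)$ extends continuously to $[0,1]$ with $f(0)=0$, so $\alpha_0(f)=0$ and, for $t\in(0,1]$, $\alpha_t(f)$ still vanishes at $0$. Each $\alpha_t$ is obviously a $*$-homomorphism. Uniform continuity of $f$ on $[0,1]$ gives that $t\mapsto\alpha_t(f)$ is norm-continuous in $t$ for every fixed $f$, so $\{\alpha_t\}_{t\in[0,1]}$ is a continuous path of $*$-homomorphisms joining $\alpha_1=\mathrm{id}_{C_0(I)}$ to $\alpha_0=0$.

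By homotopy invariance of $K$-theory, $(\alpha_1)_* = (\alpha_0)_*$ on $K_i(C_0(I))$ for $i=0,1$; that is, the identity on $K_i(C_0(I))$ equals the zero map, which forces $K_0(C_0(I))=K_1(C_0(I))=0$.

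There is essentially no obstacle here: the only technical point is continuity of $t\mapsto\alpha_t(f)$ at $t=0$, which is immediate from uniform continuity of the continuous extension of $f$ to $[0,1]$. An alternative route, if one prefers to avoid invoking homotopy invariance directly, would be to use the short exact sequence $0\to C_0((0,1))\to C_0((0,1])\to\C\to 0$ (evaluation at $1$) together with the six-term sequence, but the contractibility argument above is shorter and cleaner.
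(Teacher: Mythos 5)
Your proof is correct: $C_0(I)=C_0((0,1])$ is the cone over $\C$, the scaling homotopy $\alpha_t(f)(s)=f(ts)$ is a point-norm continuous path of $*$-homomorphisms from the zero map to the identity, and homotopy invariance then forces $K_0=K_1=0$. The paper does not spell out a proof (it declares the lemma an elementary exercise), and your contractibility argument is precisely the standard one intended; your alternative via the extension $0\to C_0((0,1))\to C_0((0,1])\to\C\to 0$ would also work but needs the $K$-theory of the suspension as input, so the homotopy argument is indeed the cleaner route.
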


For the next statement, we introduce the algebra $D_j$ for $j\geq 2$ by
$$
D_j:=\big\{f:[0,1]\to \B(\C^j)\mid f(0)\in \B(\C^{j-1})\oplus\C\big\}.
$$
In other words, $D_j$ is made of function on $[0,1]$ with values in $\B(\C^j)$ with the only condition that $f(0)$ is block diagonal, with one block of size $(j-1)\times (j-1)$ and one block
of size $1\times 1$. Note that we identity $\C$ with $\B(\C)$.

\begin{Lemma}\label{lem:K2}
The short exact sequence
$$
0\to C_0\big((0,1)\big)\ox \B(\C^j)\to D_j\to \B(\C^{j-1})\oplus\C\oplus \B(\C^j)\to 0
$$
gives $K_0(D_j)=\Z^{2}$, $K_1(D_j)=0$.
\end{Lemma}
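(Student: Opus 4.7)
The plan is to apply the six-term exact sequence in $K$-theory associated with the given short exact sequence and reduce the lemma to identifying a single boundary map. First I would record the $K$-theory of the two outer algebras. For the ideal $C_0((0,1))\otimes\B(\C^j)$, stability of $K$-theory under the matrix factor combined with the standard suspension identity $K_*(C_0(\R))=(0,\Z)$ gives $K_0=0$ and $K_1=\Z$, the latter represented by winding number. For the quotient $\B(\C^{j-1})\oplus\C\oplus\B(\C^j)$, being a finite direct sum of matrix algebras, one has $K_0=\Z^3$ (with generators the rank in each summand) and $K_1=0$.

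Feeding these values into the six-term sequence, two of the six terms vanish and the sequence collapses to
\begin{equation*}
0\longrightarrow K_0(D_j)\longrightarrow\Z^3\xrightarrow{\ \partial\ }\Z\longrightarrow K_1(D_j)\longrightarrow 0,
\end{equation*}
so the lemma follows once the boundary map $\partial$, which here is the exponential map, is shown to be surjective with kernel $\cong\Z^2$.

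To identify $\partial$ on a triple $(n,m,p)\in\Z^3$ encoding the ranks in $\B(\C^{j-1})$, $\C$, and $\B(\C^j)$ respectively, the recipe is to choose a self-adjoint lift $f\in M_k(D_j)$ (after stabilisation if needed) with $f(0)$ a block-diagonal projection of total rank $n+m$ and $f(1)$ a rank-$p$ projection; a piecewise-affine path in the self-adjoint matrices suffices. Since both endpoint values have spectrum in $\Z$, the unitary $\e^{2\pi i f}$ reduces to the identity at $t=0$ and $t=1$, hence lies in the unitisation of $C_0((0,1))\otimes\B(\C^j)$, and its class under the winding-number isomorphism with $\Z$ equals $\Tr(f(1))-\Tr(f(0))=p-(n+m)$. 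The main delicate point, and the step I expect to take the most care, is verifying independence of this formula from the choice of lift; this follows from homotopy invariance applied to the difference of two lifts, whose exponentials are connected by a path of unitaries in the unitised ideal.

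With $\partial(n,m,p)=p-(n+m)$ in hand, surjectivity is immediate from $\partial(0,0,1)=1$, while $\ker\partial=\{(n,m,p)\mid p=n+m\}\cong\Z^2$. The exact sequence then yields $K_0(D_j)\cong\Z^2$ and $K_1(D_j)=0$, as claimed.
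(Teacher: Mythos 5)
Your proof is correct and follows exactly the route the paper intends: the lemma is stated through this short exact sequence and its proof is left as an elementary exercise, and your six-term sequence argument, with the exponential map computed as $\partial(n,m,p)=\pm\big(p-(n+m)\big)$, surjective with kernel $\cong\Z^{2}$, supplies precisely the omitted details. As a side remark, the same conclusion also follows at once from the homotopy equivalence given by evaluation at $0$, which retracts $D_j$ onto $\B(\C^{j-1})\oplus\C$, so that $K_*(D_j)=K_*\big(\B(\C^{j-1})\oplus\C\big)=(\Z^{2},0)$.
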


Let us now construct inductively some algebras. Again, $I$ denotes $(0,1]$, 
and let $\{P_j\}_{j=1}^N$ be the rank one projections based on the standard basis of 
$\C^N$.
For $j\in \{1, \dots,N\}$ set $A_j:=C_0\big(I;\B(P_j\C^N)\big)$, and let us also 
fix $A(1):=A_1$. For $j\geq 2$ let $A(j)$ be the algebra obtained by gluing
$A(j-1)\oplus A_j$ to $D_j$, namely 
\[
A(j)=\{(f\oplus g,h)\in A(j-1)\oplus A_j \oplus D_j:\,f(1)\oplus g(1) = h(0)\}.
\]
Note that once the algebras $A(j-1)\oplus A_j$ and $D_j$ are glued
together, the resulting algebra $A(j)$ is considered again as matrix-valued functions
defined on $[0,1]$. It means that a reparametrization of the interval is performed at 
each step of the iterative process. 
Note also that the algebra $A(j)$ can be thought as an algebra of functions with values in a set of matrices
of increasing size. However, the change of dimension from $k-1$ to $k$ is performed concomitantly with the addition of the new path $A_k$ linking $0$ to the new diagonal entry of the $k\times k$ matrix.   

\begin{Proposition}\label{prop:step1}
For any $j\in \{1,\dots,N\}$ one has 
$K_0\big(A(j)\big)=0$ and $K_1\big(A(j)\big)=0$.
\end{Proposition}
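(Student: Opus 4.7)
The plan is to induct on $j$, with the six-term Mayer–Vietoris sequence doing the real work at each stage.

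For the base case $j=1$, since $P_1\C^N$ is one-dimensional, $\B(P_1\C^N)\cong\C$ and thus $A(1)=A_1\cong C_0(I)$. Lemma \ref{lem:K1} then gives $K_0(A(1))=K_1(A(1))=0$.

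For the inductive step, I would assume $K_0(A(j-1))=K_1(A(j-1))=0$ and read the defining gluing of $A(j)$ as the pullback
\[
\xymatrix{ A(j)\ar[r]^{}\ar[d]^{} & D_j\ar[d]^{\pi_2}\\
A(j-1)\oplus A_j\ar[r]_-{\pi_1} & \B(\C^{j-1})\oplus\C}
\]
with $\pi_1(f\oplus g) := f(1)\oplus g(1)$ and $\pi_2(h):=h(0)$. Both maps are surjective: $\pi_2$ is surjective by construction of $D_j$; $\pi_1$ is surjective because, inductively, evaluation of $A(j-1)$ at its right endpoint hits all of $\B(\C^{j-1})$, and $g\in A_j=C_0(I;\B(P_j\C^N))$ may take any prescribed value at the point $1$. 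Plugging the Lemmas \ref{lem:K1} and \ref{lem:K2} and the inductive hypothesis into \eqref{eq_MV}, the sequence collapses to
\[
0\to K_0(A(j))\to 0\oplus\Z^2\xrightarrow{-\pi_{2*}}\Z^2\to K_1(A(j))\to 0\to 0,
\]
since $K_*(A(j-1)\oplus A_j)=0$, $K_0(D_j)=\Z^2$, $K_1(D_j)=0$, $K_0\bigl(\B(\C^{j-1})\oplus\C\bigr)=\Z^2$ and $K_1\bigl(\B(\C^{j-1})\oplus\C\bigr)=0$. Thus the claim reduces to showing that $\pi_{2*}\colon K_0(D_j)\to K_0\bigl(\B(\C^{j-1})\oplus\C\bigr)$ is an isomorphism: this forces both $K_0(A(j))$ and $K_1(A(j))$ to vanish by exactness.

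Verifying this last isomorphism is the only delicate step, and it is the place I would expect to spend a moment being careful. Following the short exact sequence used in Lemma \ref{lem:K2}, a class in $K_0(D_j)$ is determined (via evaluation at the two endpoints) by a triple $(a,b,c)\in\Z^3$: the ranks of the two diagonal blocks of $p(0)$ together with the rank of $p(1)$, constrained by the homotopy inside $\B(\C^j)$ on $[0,1]$ to satisfy $a+b=c$. This identifies $K_0(D_j)\cong\{(a,b,c)\in\Z^3 : a+b=c\}\cong\Z^2$, with generators realised by constant projections of rank $1$ in one of the two blocks at $0$. The homomorphism $\pi_{2*}$, being evaluation at $0$, sends these generators onto the two standard generators of $K_0\bigl(\B(\C^{j-1})\oplus\C\bigr)=\Z^2$, hence is an isomorphism. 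The induction then concludes, giving $K_0(A(j))=K_1(A(j))=0$ for every $j\in\{1,\dots,N\}$.
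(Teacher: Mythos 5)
Your proof is correct and follows essentially the same route as the paper: induction on $j$, with the Mayer--Vietoris sequence for the pullback defining $A(j)$ (using Lemmas \ref{lem:K1} and \ref{lem:K2} and the inductive hypothesis) reducing everything to the bijectivity of the evaluation map $K_0(D_j)\to K_0\big(\B(\C^{j-1})\oplus\C\big)$. The only difference is cosmetic: the paper checks this by exhibiting explicit generators of $K_0(D_j)$ (a rotation-type path projection minus a constant, plus a constant projection) and verifying surjectivity, whereas you identify $K_0(D_j)$ with the rank triples $\{(a,b,c)\in\Z^3: a+b=c\}$ and observe that constant rank-one projections give a basis mapped to a basis; both verifications are sound.
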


The proof of this statement is provided in the Appendix.
Note that the same result holds for the matrices $B(j)$ obtained by reversing the construction
(and by taking care of the subspaces of $\C^N$ involved). For this reverse construction, 
we firstly set for $j\in \{1,\dots,N\}$
$$
E_j:=\big\{f:[0,1]\to \B\big((\C^{N-j})^\bot\big)\mid f(1)\in \C\oplus \B\big((\C^{N-j+1})^\bot\big)\big\}
$$
and $B_j:=C_0\big([0,1);\B(P_{N-j+1}\C^N)\big)$.
We also fix $B(1):= B_1$.
For $j\geq 2$ let $B(j)$ be the algebra obtained by gluing
$E_j$ to $B_j\oplus B(j-1)$, namely for  $h\in E_j$ and for $f\oplus g\in B_j\oplus B(j-1)$
we set $h(1)= f(0)\oplus g(0)$.
Note that once the algebras $E_j$ and $B_j\oplus B(j-1)$ are glued
together, the resulting algebra $B(j)$ is considered again as matrix-valued functions
defined on $[0,1]$. It means that a reparametrization of the basis is performed at 
each step of the iterative process. 
The algebra $B(j)$ can be thought as an algebra of functions with values in a set of matrices
of decreasing size. However, the change of dimension from $k$ to $k-1$ is performed concomitantly with the addition of the new path $B_k$ linking the removed entry to $0$.

By exactly the same proof, we get the same result:

\begin{Proposition}\label{prop:step1bis}
For any $j\in \{1,\dots,N\}$ one has 
$K_0\big(B(j)\big)=0$ and $K_1\big(B(j)\big)=0$.
\end{Proposition}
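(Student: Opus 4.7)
The plan is to proceed by induction on $j$, mirroring verbatim the argument for Proposition \ref{prop:step1} with the roles of the two endpoints of $[0,1]$ reversed. The algebras $E_j$ and $B_j$ are, up to this reflection, exactly analogous to $D_j$ and $A_j$, so the same pullback decomposition and the same Mayer-Vietoris calculation should apply.

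For the base case $j=1$, $B(1) = B_1 = C_0([0,1);\B(P_N\C^N))$ is, after a reparametrization of the interval, isomorphic to $C_0(I)$, which has trivial $K$-theory by Lemma \ref{lem:K1}. For the inductive step, I would view $B(j)$ as the pullback of $E_j$ and $B_j \oplus B(j-1)$ over $\C \oplus \B(\C^{j-1})$, with $\pi_1 = \mathrm{ev}_1$ on $E_j$ and $\pi_2 = \mathrm{ev}_0$ on the second summand. Since $B_j$ is again a cone (and hence has trivial $K$-theory) and since $K_*(B(j-1)) = 0$ by the inductive hypothesis, the six-term Mayer-Vietoris sequence \eqref{eq_MV} collapses to
$$0 \to K_0(B(j)) \to K_0(E_j) \xrightarrow{\pi_{1*}} K_0\bigl(\C \oplus \B(\C^{j-1})\bigr) \to K_1(B(j)) \to 0.$$

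The main obstacle, as in the forward case, is to establish that the displayed map $\pi_{1*}$ is an isomorphism. The first ingredient is the computation $K_0(E_j) = \Z^2$ and $K_1(E_j) = 0$, which follows from the same short exact sequence argument as Lemma \ref{lem:K2} applied to $E_j$ (merely swapping the roles of $0$ and $1$). The second ingredient, and the delicate point, is to identify two explicit generators of $K_0(E_j)$ that evaluate at $1$ to the canonical $\Z$-basis of $K_0(\C \oplus \B(\C^{j-1})) = \Z^2$. I would take them to be the constant projection-valued functions on $[0,1]$ equal to (a) the rank-one projection onto the distinguished one-dimensional summand, and (b) a minimal projection supported entirely in the $(j-1)$-dimensional block; both automatically satisfy the boundary condition at $1$ and their classes form a basis of $K_0(E_j)$. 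Once $\pi_{1*}$ is recognised as an isomorphism, exactness of the displayed sequence yields $K_0(B(j)) = K_1(B(j)) = 0$ and closes the induction.
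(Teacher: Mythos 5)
Your proposal is correct and is essentially the paper's argument: the paper proves Proposition \ref{prop:step1bis} by simply running the induction and Mayer--Vietoris computation of Proposition \ref{prop:step1} with the interval reflected, which is exactly what you do (base case a cone with trivial $K$-theory, inductive step via the pullback over $\C\oplus\B(\C^{j-1})$ and the collapsed six-term sequence). The only cosmetic difference is your choice of two constant projections as generators of $K_0(E_j)$, where the paper's mirrored argument uses a rotation-path projection together with a constant one; both choices evaluate at the gluing point to a generating set of $K_0\big(\C\oplus\B(\C^{j-1})\big)\cong\Z^2$, so the evaluation map is surjective, hence bijective, and the conclusion follows either way.
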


The next step will be to glue together an algebra of the family $A(j)$ with an algebra of the family $B(j)$. Before this, we need another lemma, whose proof is again an elementary exercise.

\begin{Lemma}\label{lem:K3}
For 
$$
C_N:=\big\{f:[0,1]\to \B(\C^N) \mid f(0)\in \B(\C^{N-1})\oplus\C,\ f(1)\in \C\oplus \B\big((\C^1)^\bot\big)\big\},
$$
the exact sequence
$$
0\to C_0\big((0,1)\big)\ox \B(\C^N)\to C_N\to \B(\C^{N-1})\oplus\C\oplus \C\oplus \B\big((\C^1)^\bot\big)\to 0
$$
gives $K_0(C_N)=\Z^{3}$ and $K_1(C_N)=0$.
\end{Lemma}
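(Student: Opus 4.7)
The plan is to apply the standard six-term exact sequence in $K$-theory to the given extension
$$0\to J\to C_N\to Q\to 0,$$
with $J:=C_0\big((0,1)\big)\otimes\B(\C^N)$ and $Q:=\B(\C^{N-1})\oplus\C\oplus\C\oplus\B\big((\C^1)^\bot\big)$. Since $C_0\big((0,1)\big)$ is the suspension of $\C$, Morita invariance of $K$-theory yields $K_0(J)=K_1(\B(\C^N))=0$ and $K_1(J)=K_0(\B(\C^N))=\Z$. As a direct sum of four full matrix algebras, $Q$ satisfies $K_0(Q)=\Z^{4}$, with generators the minimal projections $e_1,\dots,e_4$ in the four summands read left to right, and $K_1(Q)=0$. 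The six-term sequence therefore collapses to
$$0\longrightarrow K_0(C_N)\longrightarrow \Z^{4}\stackrel{\partial}{\longrightarrow}\Z\longrightarrow K_1(C_N)\longrightarrow 0,$$
so all that remains is to compute the exponential boundary map $\partial$.

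I compute $\partial$ on each generator by exhibiting an explicit self-adjoint lift in $C_N$ and identifying the class of its exponential in $K_1(J)\cong\Z$. For $e_1$ (respectively $e_2$), a rank-one projection in the $\B(\C^{N-1})$ (respectively $\C$) summand at $t=0$, take $f_j(t):=(1-t)\tilde e_j$, where $\tilde e_j$ is the corresponding rank-one projection viewed in $\B(\C^N)$; then $f_j(0)$ satisfies the boundary condition at $0$ tautologically, and $f_j(1)=0\in\C\oplus\B\big((\C^1)^\bot\big)$. For $e_3$ (respectively $e_4$), symmetrically set $f_j(t):=t\,\tilde e_j$. The unitary $u_j:=\exp(2\pi i f_j)$ satisfies $u_j(0)=u_j(1)=1$, hence $u_j-1\in J$ and $u_j$ defines a unitary in the unitisation $J^+$. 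Its class in $K_1(J)\cong\Z$ is the winding number of the non-trivial phase $t\mapsto e^{2\pi i\alpha_j(t)}$ along the $\tilde e_j$-eigenline, where $\alpha_j(t)=1-t$ for $j\in\{1,2\}$ and $\alpha_j(t)=t$ for $j\in\{3,4\}$, and this equals $\pm 1$ in each case.

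Consequently $\partial:\Z^{4}\to\Z$ is surjective, so $K_1(C_N)=0$, and its kernel is free of rank $3$, so $K_0(C_N)=\Z^{3}$, as claimed. The mildly delicate step is the identification of the exponential boundary map with a winding number, but this is standard for suspended matrix algebras; everything else reduces to checking that the $f_j$'s lie in $C_N$, which is immediate from the construction.
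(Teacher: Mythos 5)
Your argument is correct and follows exactly the route the paper sets up: the paper states the extension and leaves the computation as an elementary exercise, and you fill it in with the six-term sequence, which collapses to $0\to K_0(C_N)\to\Z^4\stackrel{\partial}{\to}\Z\to K_1(C_N)\to 0$, together with an explicit check that the exponential map hits $\pm1$ on each generator. The lifts $f_j$ do lie in $C_N$ and the identification of the class of $\exp(2\pi i f_j)$ in $K_1\big(C_0((0,1))\otimes\B(\C^N)\big)\cong\Z$ with the winding number of its determinant is standard, so the conclusion $K_0(C_N)=\Z^3$, $K_1(C_N)=0$ follows as claimed.
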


We can now collect the information obtained so far, and provide the $K$-theory
for the main algebra of this section. The proof of the statement is provided in 
the Appendix.

\begin{Proposition}\label{prop:step2}
For 
$$
Q_N=\big(A(N-1)\oplus A_N\big)\oplus_{\B(\C^{N-1})\oplus\C}C_N
\oplus_{\C\oplus \B((\C^1)^\bot)}\big(B_N\oplus B(N-1)\big)
$$
one has
$K_0\big(Q_N^+\big)=\Z$ and $K_1\big(Q_N^+\big)=\Z$.
\end{Proposition}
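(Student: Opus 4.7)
The plan is to compute $K_*(Q_N)$ by iterating the Mayer-Vietoris sequence \eqref{eq_MV} twice, treating $Q_N$ as a nested pullback. First I form
\[
F:=(A(N-1)\oplus A_N)\oplus_{\B(\C^{N-1})\oplus\C}C_N,
\]
and then realise $Q_N=F\oplus_{\C\oplus\B((\C^1)^\bot)}(B_N\oplus B(N-1))$. The key inputs are that the attached summands both have trivial $K$-theory: Proposition \ref{prop:step1} gives $K_*(A(N-1))=0$ and Lemma \ref{lem:K1} gives $K_*(A_N)=0$, and symmetrically $K_*(B(N-1))=0$ and $K_*(B_N)=0$. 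Since the evaluation maps $A(N-1)\oplus A_N\to\B(\C^{N-1})\oplus\C$ and $C_N\to\B(\C^{N-1})\oplus\C$ (and their analogues at the other endpoint) are surjective, Mayer-Vietoris applies and in each step degenerates to a short exact sequence of the form
\[
0\longrightarrow K_0(\bullet)\longrightarrow K_0(\text{middle})\longrightarrow K_0(\text{boundary})\longrightarrow K_1(\bullet)\longrightarrow 0.
\]
The problem therefore reduces to computing two endpoint evaluation maps, first on $K_0(C_N)$ and then on its subgroup $K_0(F)$.

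My first step is to produce explicit generators of $K_0(C_N)=\Z^3$. Writing
\[
C_N=C\bigl([0,1];\B(\C^N)\bigr)\oplus_{\B(\C^N)\oplus\B(\C^N)}\bigl((\B(\C^{N-1})\oplus\C)\oplus(\C\oplus\B((\C^1)^\bot))\bigr),
\]
a direct Mayer-Vietoris calculation in the bases of rank-one projection classes gives the difference map $d:\Z\oplus\Z^4\to\Z^2$ with matrix
\[
d=\begin{pmatrix}1 & -1 & -1 & 0 & 0\\ 1 & 0 & 0 & -1 & -1\end{pmatrix}.
\]
This is surjective, so $K_0(C_N)\cong\ker d\cong\Z^3$ and $K_1(C_N)=0$, recovering Lemma \ref{lem:K3}. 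A convenient parametrisation of $\ker d$ is
\[
(a,b,c)\in\Z^3\ \longmapsto\ (a+b,\,a,\,b,\,c,\,a+b-c)\in\Z\oplus\Z^4,
\]
which makes the two endpoint evaluations transparent:
\[
\pi_L(a,b,c)=(a,b),\qquad \pi_R(a,b,c)=(c,\,a+b-c).
\]

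Applying Mayer-Vietoris to $F$, the vanishing of $K_*(A(N-1)\oplus A_N)$ and $K_1(C_N)$ leaves
\[
0\to K_0(F)\to K_0(C_N)\xrightarrow{-\pi_L}K_0(\B(\C^{N-1})\oplus\C)\to K_1(F)\to 0.
\]
Since $\pi_L$ is surjective with kernel the line $\Z\cdot(0,0,1)$, we obtain $K_0(F)=\Z$ and $K_1(F)=0$. Applying Mayer-Vietoris now to $Q_N$ and using the vanishing of $K_*(B_N\oplus B(N-1))$ and $K_1(F)$ yields
\[
0\to K_0(Q_N)\to K_0(F)\xrightarrow{\pi_R}K_0(\C\oplus\B((\C^1)^\bot))\to K_1(Q_N)\to 0,
\]
and the generator of $K_0(F)\subset K_0(C_N)$ is the class corresponding to $(0,0,1)$, whose image under $\pi_R$ is $(1,-1)\in\Z^2$. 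This map is injective with cokernel isomorphic to $\Z$, so $K_0(Q_N)=0$ and $K_1(Q_N)=\Z$. Finally, $Q_N$ is non-unital (the outermost teeth $A_1=C_0((0,1];\C)$ and $B_1=C_0([0,1);\C)$ each vanish at one endpoint, preventing a global unit), so the standard identity for the minimal unitization gives $K_0(Q_N^+)=K_0(Q_N)\oplus\Z=\Z$ and $K_1(Q_N^+)=K_1(Q_N)=\Z$.

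The main obstacle is the bookkeeping required to track the generator of $K_0(F)\subset K_0(C_N)$ through the first pullback and then to evaluate it at the opposite endpoint of $C_N$ without ever constructing an explicit projection. The three-parameter description of $K_0(C_N)$ obtained in Step~1 sidesteps this by reducing both endpoint evaluations to linear maps $\Z^3\to\Z^2$, so the entire computation collapses to two kernel/cokernel calculations in finite-rank abelian groups.
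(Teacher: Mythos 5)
Your argument is correct and lands on the same intermediate answer as the paper ($K_0(Q_N)=0$, $K_1(Q_N)=\Z$, hence $K_0(Q_N^+)=K_1(Q_N^+)=\Z$), but it is organised differently. The paper applies Mayer--Vietoris once, gluing $C_N$ against $\big(A(N-1)\oplus A_N\big)\oplus\big(B_N\oplus B(N-1)\big)$ over the four-block boundary $\B(\C^{N-1})\oplus\C\oplus\C\oplus\B\big((\C^1)^\bot\big)$, obtaining $0\to K_0(Q_N)\to\Z^3\to\Z^4\to K_1(Q_N)\to 0$, and then exhibits explicit projection-valued paths generating $K_0(C_N)$ whose evaluations at $t=0$ and $t=1$ show the difference homomorphism is injective with image spanned by three members of a basis of $\Z^4$, whence cokernel $\Z$. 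You instead treat the iterated pullback as two successive gluings and, beforehand, re-derive $K_0(C_N)\cong\Z^3$ from the two-endpoint pullback of $C\big([0,1];\B(\C^N)\big)$ against the boundary blocks; this gives coordinates $(a,b,c)$ on $K_0(C_N)$ in which the two evaluation maps $\pi_L$ and $\pi_R$ are explicit integer matrices, and the two Mayer--Vietoris steps collapse to $\ker\pi_L\cong\Z$ and the injective map $1\mapsto(1,-1)$ with cokernel $\Z$. What your route buys is that no explicit projections in $C_N$ need to be constructed or tracked through the gluing -- beyond the K-triviality of $A(\cdot)$, $B(\cdot)$, $A_N$, $B_N$ everything is linear algebra over $\Z$ -- at the cost of one extra Mayer--Vietoris application and of the (easily justified, and implicit in your write-up) identification of the coordinate projections of $\ker d$ with the endpoint evaluation maps on $K_0(C_N)$, which holds because the isomorphism $K_0(C_N)\cong\ker d$ is implemented by the pullback projections. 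Both proofs rely on surjectivity of the evaluation maps onto the gluing algebras, which you correctly note. One small remark: the unitization identities $K_0(Q_N^+)=K_0(Q_N)\oplus\Z$ and $K_1(Q_N^+)=K_1(Q_N)$ hold for the adjoined unit whether or not $Q_N$ is unital, so your non-unitality argument, though correct, is not needed.
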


 \begin{figure}[h]
     \centering
     \includegraphics[width=15cm]{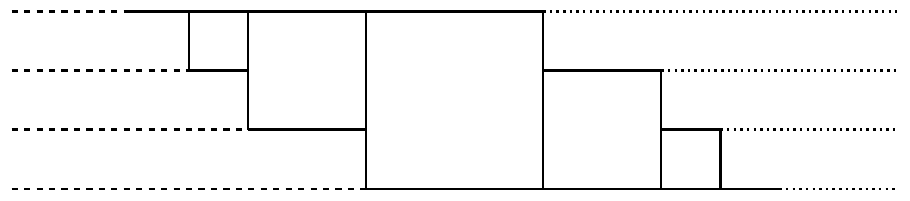}
     \caption{Representation of the algebra $Q_4$, with the biggest square representing $C_4$, while the part of the left of this square corresponds to $A(3)\oplus A_4$, and the part on the right corresponds to $B_4\oplus B(3)$. For comparison with the algebra $\QQ^\theta$, the dashed lines represent the support of the functions $\phi_{\downarrow_j}$ while the dotted lines represent the support of the functions $\phi_{\uparrow^j}$. The remaining lines and squares correspond to the support of the scattering matrix.}
     \label{fig_Adam}
 \end{figure}

By construction, the algebras $\QQ^\theta$ and $Q_N^+$ are isomorphic, therefore they share
the same $K$-theory. Note that the isomorphism can also be inferred by comparing Figures 
\ref{fig_Rt} and \ref{fig_Adam}\;\!: the vertical lines of Figure \ref{fig_Rt} are symbolised horizontally in Figure \ref{fig_Adam}, with dashed or dotted lines. Furthermore, since these lines represent the support of a block diagonal matrix-valued function, they can be rescaled independently.
Thanks to this isomorphism, the $K$-theory of $\QQ^\theta$ can be
directly deduced.

\begin{Corollary}
One has $K_0\big(\QQ^\theta\big)=\Z$ and $K_1\big(\QQ^\theta\big)=\Z$.
\end{Corollary}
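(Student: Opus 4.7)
The plan is to reduce the computation to Proposition \ref{prop:step2} by constructing an explicit $*$-isomorphism between $\QQ^\theta$ and $Q_N^+$, and then invoking the functoriality of $K$-theory under $*$-isomorphism. No new $K$-theoretic input is required beyond this matching.

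First I would set up the identification piece by piece, using the description of $\QQ^\theta$ provided in Proposition \ref{prop_quotient}. The horizontal edges $\rightarrow_j$ for $j\in\{1,\dots,N-1\}$ are intervals carrying values in $\B(\tH_j)$, where the fibre dimension increases by one at each threshold $\tl_j-2$; these are identified with the pieces $A(j)$ built inductively on the left side of $Q_N^+$, after reparametrizing each sub-interval to $(0,1]$. The central horizontal edge $\rightarrow_N$ on $[\tl_N-2,\tl_1+2]$ carries full $\B(\C^N)$-valued functions with block-diagonal boundary values at the two endpoints, which matches exactly the algebra $C_N$ of Lemma \ref{lem:K3}. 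The horizontal edges $\rightarrow^j$ for $j\in\{2,\dots,N\}$ have fibre dimension decreasing by one at each threshold $\tl_j+2$ and are identified with the pieces $B(j)$. The vertical edges $\downarrow_j$ and $\uparrow^j$, each a copy of $[-\infty,+\infty]$ carrying values in $\B(\tP_j\C^N)\cong \B(\C)$, correspond after reparametrization to the paths $A_j$ and $B_j$ in the iterative construction.

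Next I would verify that the gluings agree. The continuity conditions \eqref{eq:cont} say that at each opening threshold $\tl_j-2$ the function $\phi_{\rightarrow_j}(\tl_j-2)$ equals the block-sum of the left limit of $\phi_{\rightarrow_{j-1}}$ and the boundary value $\phi_{\downarrow_j}(-\infty)$ of the newly-opened channel. This is precisely the defining condition
\[
f(1)\oplus g(1)=h(0)\in \B(\C^{j-1})\oplus\C
\]
used in passing from $A(j-1)\oplus A_j$ to $A(j)$ through $D_j$. The symmetric conditions at the closing thresholds $\tl_j+2$ match the gluing used in the $B(j)$ construction, and the two-sided conditions at $\tl_N-2$ and $\tl_1+2$ match the defining conditions of $C_N$. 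Finally, the common trace condition \eqref{eq:unit} prescribing a common value $c\in\C$ at all points at infinity is exactly what adjoining a unit to $Q_N$ produces, so that $\QQ^\theta\cong Q_N^+$ as $C^*$-algebras.

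The main obstacle, beyond routine bookkeeping, is keeping track of the identifications of the various one-dimensional fibres $\tP_j\C^N$ with the standard line $P_j\C^N\subset\C^N$ used to build $Q_N^+$. Since the projections $\tP_j$ are all of rank one and the gluing at each threshold only sees the new rank-one direction being added or removed, any unitary from $\tP_j\C^N$ to $P_j\C^N$ induces compatible gluings, and the resulting $*$-isomorphism $\Phi:\QQ^\theta\to Q_N^+$ is well-defined. Applying Proposition \ref{prop:step2} and functoriality of the $K$-groups then yields
\[
K_0(\QQ^\theta)\cong K_0(Q_N^+)=\Z\qquad\text{and}\qquad K_1(\QQ^\theta)\cong K_1(Q_N^+)=\Z,
\]
which is the claim.
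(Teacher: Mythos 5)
Your proposal is correct and follows essentially the same route as the paper: the paper also obtains the corollary by observing that $\QQ^\theta$ and $Q_N^+$ are isomorphic by construction (matching the horizontal and vertical edges of the comb, after independent rescaling, with the pieces $A(j)$, $C_N$, $B(j)$ and the paths $A_j$, $B_j$, with the unit accounting for the common value at infinity) and then invoking Proposition \ref{prop:step2}. Your piece-by-piece verification of the gluings via \eqref{eq:cont} and \eqref{eq:unit} is just a more explicit write-up of that same identification.
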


%--------------------------------------------------------------------------------------
\section{Topological Levinson's theorem}\label{sec:Lev}
\setcounter{equation}{0}
%--------------------------------------------------------------------------------------

In this section we provide the topolgical version of Levinson's theorem, linking
the number of bound states to an expression involving the image of $W^\theta_-$
in the quotient algebra. First of all, we provide a statement about the behavior
of the scattering matrix at thresholds. It shows that the expression obtained for 
the wave operator $W_-^\theta$ is rather rigid and imposes a strict behaviour at thresholds.
For its statement we define the scalar valued function $\tilde \s_{jj}^\theta$ by the relation 
$\tilde S^\theta(\lambda)_{jj}=:\tilde\s_{jj}^\theta(\lambda)\tilde\P_j^\theta$. 

\begin{Lemma}\label{lem:11}
For any $j\in \{1,\dots,N\}$ one has $\tilde\s^\theta_{jj}(\tilde\lambda^\theta_j\pm 2)\in \{-1,1\}$.
\end{Lemma}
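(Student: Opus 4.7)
The plan is to exploit the fact that the quotient image $\phi = q^\theta(\F^\theta W_-^\theta (\F^\theta)^*)$ must be a unitary element of $\QQ^\theta$. Indeed, $W_-^\theta$ differs from a unitary by a finite-rank operator (its defect projection equals the projection onto the finite-dimensional pure point subspace of $H^\theta$, since $V$ is finite rank), so $\F^\theta W_-^\theta(\F^\theta)^* \equiv$ unitary modulo $\K(\Hrond^\theta)$, and therefore $\phi^*\phi = \phi\phi^* = 1$ in $\QQ^\theta$.

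Reading off Lemma \ref{lem:restrictions}, the restriction $\phi_{\downarrow_j}$ takes values in $\B(\tilde\P_j\C^N) \cong \C$, so unitarity collapses to $|\phi_{\downarrow_j}(s)|=1$ for every $s\in\R$. Setting $z := \tilde\s^\theta_{jj}(\tilde\lambda^\theta_j-2)$ and $u(s) := 1-\eta_-(s)$, one must verify
\[
\bigl|\,1 + \tfrac{1}{2}u(s)(z-1)\,\bigr| = 1 \quad\text{for all } s\in\R.
\]
A first observation, using $\tanh^2(\pi s) + \cosh^{-2}(\pi s) = 1$, is that $|\eta_\pm(s)| = 1$; more precisely, the explicit computation
\[
u(s) = (1-\tanh(\pi s)) + i\cosh^{-1}(\pi s) = 1 + e^{i\theta'(s)}
\]
shows that $u$ parametrises a semicircle of radius $1$ centred at $1$ as $\theta'(s)$ runs through $(0,\pi)$. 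I would then square the unitarity condition to obtain
\[
4\,\re\bigl(u(s)(z-1)\bigr) + |u(s)|^{2}|z-1|^{2} = 0,
\]
which, after completing the square in $v := u(s)(z-1)$, becomes $(\re v + 2)^{2} + (\im v)^{2} = 4$.

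The geometric heart of the argument is the following: as $s$ varies, $v(s)$ traces the image of the $u$-semicircle under multiplication by $z-1$, namely an arc on the circle of radius $|z-1|$ centred at $z-1$, which passes through the origin. This arc must lie on the fixed circle of radius $2$ centred at $-2$ (also passing through the origin). Since any non-degenerate arc determines its circle uniquely, either the $u$-circle collapses to a point (forcing $z=1$) or the two circles coincide, which gives $z-1 = -2$, i.e.\ $z=-1$. The identical argument with $\eta_+$ and $\phi_{\uparrow^j}$ handles the threshold $\tilde\lambda^\theta_j + 2$. The main obstacle is really just formulating cleanly why the intermediate-$s$ values (not only the boundary values $s=\pm\infty$, which merely force $|z|=1$) lock the scalar into $\{\pm 1\}$; the geometric circle-coincidence argument above is, I expect, the cleanest way to dispatch this.
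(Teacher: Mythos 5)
Your proposal is correct and takes essentially the same route as the paper: both pass to the quotient, use that $q^\theta\big(\F^\theta W_-^\theta(\F^\theta)^*\big)$ is unitary, and impose $\big|1+\tfrac12\big(1-\eta_\mp(s)\big)(z-1)\big|=1$ for all $s\in\R$ on the teeth $\downarrow_j$ and $\uparrow^j$ via Lemma \ref{lem:restrictions}. The only difference is in the final elementary step: the paper expands the identity to get $\Im\big(\eta_-(s)\big)\Im(z)=0$, hence $z\in\R$, and then invokes $|z|=1$, whereas your circle-coincidence argument yields $z\in\{1,-1\}$ directly (and without needing $|z|=1$ as a separate input).
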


\begin{proof}
Since $W_-^\theta$ is a Fredholm operator of norm 1, the image
$q^\theta \big(\F^\theta W_-^\theta(\F^\theta)^*\big)$ is a unitary operator, and therefore
its restrictions on all components of $\QQ^\theta$ must be unitary.
The restrictions on $\rightarrow_j$, $\rightarrow_N$, and $\rightarrow^j$ do not impose
any conditions, since the scattering operator is unitary valued, see  Lemma \ref{lem:restrictions}.
On the other hand, by checking that the restrictions on $\downarrow_j$ and 
on $\uparrow^j$, the conditions appearing in the statement have to be imposed. 

Starting with \eqref{eq:res1}, the restriction on $\downarrow_j$, we can rewrite this operator
as a scalar function multiplying a rank one projection. By imposing that the operator is unitary valued, one infers that
$$
\Big(1+ \tfrac12\big(1-\eta_-(s)\big)\big({\tilde{\s}}^\theta_{jj}(\tilde\lambda^\theta_j-2)-1\big)\Big)
\Big(1+ \tfrac12\big(1-\eta_-(s)\big)\big(\tilde{\s}^\theta_{jj}(\tilde\lambda^\theta_j-2)-1\big)\Big)^*=1
$$
for all $s\in \R$. Some direct computations lead then to the condition
$\Im\big(\eta_-(s)\big)\Im\big({\tilde{\s}}^\theta_{jj}(\tilde\lambda^\theta_j-2)\big)=0$
for any $s\in \R$, meaning that ${\tilde{\s}}^\theta_{jj}(\tilde\lambda^\theta_j-2)\in \R$.
Since ${\tilde{\s}}^\theta_{jj}(\tilde\lambda^\theta_j-2)$ is also unitary valued, 
the only solutions are the ones given in the statement. A similar argument holds for 
${\tilde{\s}}^\theta_{jj}(\tilde\lambda^\theta_j+2)$, starting with the restriction on 
$\uparrow^j$.
\end{proof}

The topological Levinson's theorem corresponds to an index theorem in scattering theory.
By considering the $C^*$-algebras introduced in Section \ref{sec:C*} we can consider the short
exact sequence of $C^*$-algebras
$$
0 \longrightarrow \K\big(\Hrond^\theta\big)\longrightarrow \EE^\theta \stackrel{q^\theta}{\longrightarrow}  \QQ^\theta \longrightarrow 0.
$$
Since $\F^\theta W_-^\theta(\F^\theta)^*$ belongs to $\EE^\theta$ and is a Fredholm operator
we infer the equality
\begin{equation}\label{eq:Lev1}
\ind\Big([q^\theta \big(\F^\theta W_-^\theta(\F^\theta)^*\big)]_1\Big)  
=-\big[E_{\rm p}(H^\theta)\big]_0,
\end{equation}
where $\ind$ denotes the index map from $K_1\big(\QQ^\theta\big)$ to $K_0\big( \K\big(\Hrond^\theta\big)\big)$ and where $E_{\rm p}(H^\theta)$ corresponds to the projection on the subspace spanned by the eigenfunctions of $H^\theta$.
Note that this projection appears from the standard relation 
$$
\big[1-(W^\theta_-)^* W_-^\theta\big]_0-\big[1-W_-^\theta (W_-^\theta)^*\big]_0 
=-\big[E_{\rm p}(H^\theta)\big]_0.
$$

The equality \eqref{eq:Lev1} can be directly deduced from \cite[Prop.~4.3]{Ri}.
Let us emphasise that this equality corresponds to the topological version of Levinson's theorem: it is a relation (by the index map) between the equivalence class in $K_1$
of quantities related to scattering theory, as described in Lemma \ref{lem:restrictions}, and the equivalence class in $K_0$ of the projection on the bound states of $H$. 
However, the standard formulation of Levinson's theorem is an equality between numbers. 
Thus, our final task is to extract a numerical equality from \eqref{eq:Lev1}.

In the next statement, the notation 
$\Var \big(\lambda \mapsto \det S^\theta(\lambda)\big)$ 
should be understood as  the total variation of the argument of the piecewise continuous function 
\begin{equation*}
I^\theta \ni \lambda \mapsto \det S^\theta(\lambda) \in \S^1.
\end{equation*}
where we compute the argument increasing with increasing $\lambda$. Our convention is also that the increase of the argument is counted clockwise.

\begin{Theorem}\label{thm:Adam}
For any $\theta \in (0,\pi)$ the following equality holds:
\begin{equation}\label{eq:ouf}
\# \sigma_{\rm p}(H^\theta)=N-\frac{\#\big\{j\mid \s_{jj}(\lambda^\theta_j\pm 2)=1\big\}}2+ \Var \big(\lambda \to \det S^\theta(\lambda)\big).
\end{equation}
\end{Theorem}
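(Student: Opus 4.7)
The strategy is to turn the $K$-theoretic identity \eqref{eq:Lev1} into the numerical identity \eqref{eq:ouf} by evaluating both sides through the isomorphisms $K_0(\K(\Hrond^\theta)) \cong \Z$ (given by the trace) and $K_1(\QQ^\theta) \cong \Z$ (the Corollary after Proposition \ref{prop:step2}). The right-hand side of \eqref{eq:Lev1} clearly evaluates to $-\#\sigma_{\rm p}(H^\theta)$ since $E_{\rm p}(H^\theta)$ is the finite-rank projection onto the bound states. The bulk of the work is therefore to express the left-hand side as an explicit winding number.

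To this end, I would describe the generator of $K_1(\QQ^\theta)$ as a total winding functional on the comb-shaped support of $\QQ^\theta$ described in Proposition \ref{prop_quotient}: a unitary $u$ is assigned the integer obtained by summing (i) the piecewise total variation of $\arg\det u$ on the horizontal components $\rightarrow_j$, $\rightarrow_N$, $\rightarrow^j$, and (ii) the total variation of $\arg\det u$ on each vertical tooth $\downarrow_j$ and $\uparrow^j$, with orientations dictated by Proposition \ref{prop_quotient}. That this sum is always an integer (in particular, that half-integer contributions from the teeth always compensate the half-integer jumps of $\det u$ at thresholds, thanks to \eqref{eq:cont}) and that it defines a generator of $K_1(\QQ^\theta)$ should be verifiable by tracking through the inductive Mayer--Vietoris computation of Section \ref{sec:Kth}.

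Evaluating this functional at $[q^\theta(\F^\theta W_-^\theta(\F^\theta)^*)]_1$ then uses Lemmas \ref{lem:restrictions} and \ref{lem:11}: on every horizontal piece the restriction equals $S^\theta(\lambda)$, yielding exactly $\Var\big(\lambda \mapsto \det S^\theta(\lambda)\big)$ for~(i); on each vertical piece Lemma \ref{lem:11} forces $\tilde\s^\theta_{jj}(\tilde\lambda^\theta_j \pm 2) \in \{-1, +1\}$, and the explicit formulas \eqref{eq:res1}--\eqref{eq:res2} give a contribution of $0$ when the value is $+1$ and of $\tfrac12$ when the value is $-1$ (since then $\phi_{\downarrow_j} = \eta_-$ or $\phi_{\uparrow^j} = \eta_+$, each tracing a unit-circle semicircle from $-1$ to $+1$ whose clockwise argument variation equals $\pi$). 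With $k := \#\{j \mid \tilde\s^\theta_{jj}(\tilde\lambda^\theta_j \pm 2) = 1\}$ trivial teeth out of the $2N$ total, summing gives total winding $= \Var + \tfrac{2N-k}{2} = \Var + N - \tfrac{k}{2}$. Via the index pairing this coincides with $\#\sigma_{\rm p}(H^\theta)$, and since $k$ is invariant under the reindexing between $\tilde\s^\theta_{jj}$ and the $\s_{jj}$ of the statement, \eqref{eq:ouf} follows.

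The main obstacle is the second step: constructing the total-winding homomorphism $K_1(\QQ^\theta) \to \Z$ explicitly and proving that it is an isomorphism with the correct sign. This requires unpacking the inductive proofs of Propositions \ref{prop:step1}, \ref{prop:step1bis} and \ref{prop:step2} to identify a generator of $K_1(Q_N^+)$ concretely, and then carefully matching orientations and signs with the index map of \eqref{eq:Lev1}. Once this identification is in place, the remaining calculations reduce to the elementary computation of the argument increment of $\eta_\pm$ along $\R$, combined with direct application of Lemmas \ref{lem:restrictions} and \ref{lem:11}.
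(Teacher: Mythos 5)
Your proposal follows essentially the same route as the paper: evaluate both sides of \eqref{eq:Lev1} numerically, with the trace of $E_{\rm p}(H^\theta)$ on one side and a winding number on the other, the horizontal pieces contributing $\Var\big(\lambda\mapsto\det S^\theta(\lambda)\big)$ and each vertical tooth contributing $0$ or $\tfrac12$ according to whether $\tilde\s^\theta_{jj}(\tilde\lambda^\theta_j\pm2)$ equals $+1$ or $-1$; your bookkeeping $\Var+\tfrac{2N-k}{2}=\Var+N-\tfrac k2$ is exactly the paper's. The one place where you diverge is the step you yourself flag as the main obstacle: identifying the total-winding functional with the index map of \eqref{eq:Lev1}. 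You propose to do this by unpacking the Mayer--Vietoris induction of Propositions \ref{prop:step1}--\ref{prop:step2} to exhibit a generator of $K_1(Q_N^+)$ and match orientations by hand; this should work, but it is more laborious than what the paper does, namely observe that $Q_N^+$ embeds (after rescaling the teeth and intervals) into $C_0\big(\R;\B(\C^N)\big)^+$, for which the identity ``Fredholm index $=$ winding number of the pointwise determinant of the symbol'' is already available from \cite[Prop.~7]{KR08wind}. With that citation the sign and normalisation questions you worry about are settled at once, and the continuity conditions \eqref{eq:cont} then play exactly the role you assign them: they guarantee that the half-integer tooth contributions recombine with the jumps of $\det S^\theta$ so that the total is an integer. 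So your plan is sound and would yield the same computation, but as written the key identification is announced rather than proved; adopting the embedding argument closes that gap immediately.
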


\begin{proof}
The proof starts by evaluating both sides of \eqref{eq:Lev1} with 
the operator trace to obtain a numerical equation.
For the right hand side, we obtain (minus) the number of bound states, or more precisely
$\Tr\big(E_{\rm p}(H^\theta)\big)= \# \sigma_{\rm p}(H^\theta)$ if the multiplicity of the eigenvalues is taken into account.

For the left hand side, recall that $\QQ^\theta$ is isomorphic to $Q_N^+$, with $Q_N$ introduced in 
Proposition \ref{prop:step2}, and that this algebra is naturally embedded (after rescaling) in $C_0\big(\R;\B(\C^N)\big)^+$. Thus, the index of $W_-$ is computed by the winding number of the pointwise determinant of $q(W_-)$, see for example \cite[Prop.~7]{KR08wind}. The various contributions for this computation can be inferred either from Figure
\ref{fig_Rt} or from Figure \ref{fig_Adam}, but the functions to be considered are provided by
Lemma \ref{lem:restrictions}. If we use the representation of the quotient algebra provided in 
Figure \ref{fig_Rt}, then all horizontal contributions $\phi_{\to j}$ can be encoded in the expression 
$\Var \big(\lambda \to \det S^\theta(\lambda)\big)$. For $\lambda\mapsto S^\theta(\lambda)$ piecewise $C^1$ these contributions can be computed analytically by the formula
\begin{align*}
&- \frac{1}{2\pi i} \sum_{k=1}^{N-1}\int_{\tl_k-2}^{\tl_{k+1}-2}\Tr\big((S^\theta)^*(S^\theta) '\big)(\lambda)\,\d\lambda
- \frac{1}{2\pi i} \int_{\tl_N-2}^{\tl_{1}+2}\Tr\big((S^\theta)^*(S^\theta) '\big)(\lambda)\,\d\lambda \\
&\quad - \frac{1}{2\pi i} \sum_{k=1}^{N-1}\int_{\tl_k+2}^{\tl_{k+1}+2}\Tr\big((S^\theta)^*(S^\theta) '\big)(\lambda)\,\d\lambda
\end{align*}
where $S^\theta$ is $C^1$ on each interval.

For the vertical contributions, recall that the functions $\eta_\pm$ have been introduced in \eqref{eq_def_eta}.
These contributions have to be computed from $+\infty$
to $-\infty$ for the intervals with one endpoint at $\tl_j-2$, while they have to be computed from $-\infty$ to $+\infty$
for the intervals having an endpoint at $\tl_j+2$. 

In both cases, if $\tilde S^\theta_{jj}(\tilde\lambda^\theta_j\pm 2)=1$, then the contribution is $0$, as a straightforward consequence of \eqref{eq:res1} and
\eqref{eq:res2}. On the other hand, if  $\tilde S^\theta_{jj}(\tilde\lambda^\theta_j- 2)=-1$,
then $\phi_{\downarrow_j}= \eta_-$, and this leads to a contribution of $\frac{1}{2}$, with our clockwise convention for the increase of the variation.
Similarly, if  $\tilde S^\theta_{jj}(\tilde\lambda^\theta_j+ 2)=-1$, then 
$\phi_{\uparrow^{j}}= \eta_+$ and this leads again to a contribution of $\frac{1}{2}$,
because of the change of orientation of the path. As a consequence, if
$\tilde S^\theta_{jj}(\tilde\lambda^\theta_j\pm 2)=-1$, the corresponding contribution is $\frac{1}{2}$, no matter if it is the opening or the closing of a channel of scattering.
Our presentation in \eqref{eq:ouf}, which takes the content of Lemma \ref{lem:11} into account,  reflects the genericity of the value $-1$ at thresholds over the value $1$.
\end{proof}

%--------------------------------------------------------------------------------------
\section{Explicit computations for \texorpdfstring{$N=2$}{N=2}}\label{sec:N=2}
\setcounter{equation}{0}
%--------------------------------------------------------------------------------------

In this section, we concentrate on the case $N=2$ and make the computations
as explicit as possible.
We firstly recall some notations restricted to the case $N=2$ and $\theta\in (0,\pi)$. For the analysis of $H^\theta$, the main spectral result is a necessary and sufficient condition for the existence of an eigenvalue for the operator $H^\theta$.
For that purpose, we decompose the matrix $\diag(v):=\diag\big(v(1),v(2)\big)$
as a product $\diag(v)=\u\;\!\v^2$,
where $\v:=|\diag(v)|^{1/2}$ and $\u$ is the diagonal matrix
with components
$$
\u_{jj}=
\begin{cases}
+1 &\hbox{if $v(j)\ge0$}\\
-1 &\hbox{if $v(j)<0$,}
\end{cases}
$$
for $j\in\{1,2\}$.
For simplicity, we set  $v(1)=u_1 a^2$, $v(2)=u_2b^2$ and assume that $a\geq b\geq 0$ 
and $a>0$. We also impose the following condition: if $u_1=u_2$, 
then $a>b$, since otherwise the function $v$ would be $1$--periodic and not $2$--periodic.
With these notations one has
$\v=\left(\begin{smallmatrix}a & 0 \\ 0 & b
\end{smallmatrix}\right)$ and $
\u=\left(\begin{smallmatrix}
u_1 & 0 \\ 0 & u_2
\end{smallmatrix}\right)$.
Let us also set
\begin{equation*}
\varrho:=u_2a^2+u_1b^2
\end{equation*}
since this expression will often appear in the sequel.

For $N=2$, obseve that  
$\lambda_1^\theta=-2\cos(\tfrac{\theta}{2})$ and $\lambda_2^\theta=2\cos(\tfrac{\theta}{2})$, $\xi_1^\theta=\left(\begin{smallmatrix}-\e^{i\theta/2} \\ \e^{i\theta}
\end{smallmatrix}\right)$  and $ \xi_2^\theta=\left(\begin{smallmatrix}\e^{i\theta/2} \\ \e^{i\theta}\end{smallmatrix}\right)$.
We then set 
\begin{equation*}
\Xi_+ (\theta):=\big(\cos (\tfrac{\theta}{2})+ \cos^2 (\tfrac{\theta}{2})\big)^{\frac12}>0, \quad  \Xi_- (\theta):=\big(\cos (\tfrac{\theta}{2})-\cos^2 (\tfrac{\theta}{2})\big)^{\frac12}>0.
\end{equation*}
For later use, it is useful to observe that $\Xi_+$ is a strictly decreasing function on $(0,\pi)$
with range in $(0,\sqrt{2})$. Finally, for $z\in\C$ and $ j\in\{1,2\}$ we introduce the expression
\begin{equation*}
\beta_j^\theta(z):=\big|\big(z-\lambda_j^\theta\big)^2-4\big|^{1/4}.
\end{equation*}

%--------------------------------------------------------------------------------------
\subsection{Values of \texorpdfstring{$S^\theta(\lambda)$}{S(lambda)} at thresholds}\label{subsec:thresholds}
%--------------------------------------------------------------------------------------

In this section, we provide information about the scattering matrix at thresholds.
For conciseness, we restrict our attention to  $\lambda\in (-4,0)$.
In this setting, only two thresholds appear: one at $\lambda_1^\theta-2$ and 
one at $\lambda_2^\theta-2$. The first one corresponds to the threshold at the bottom of the essential spectrum, while the second one corresponds to an embedded threshold. Since we we already know from \cite[Thm.~3.9]{NRT} that $S^\theta_{11}$ is continuous at $\lambda_2^\theta-2$ we postpone its study to Section \ref{subsec:inside}. 

\begin{Proposition}\label{S(lambda)N=2}
Let $\theta \in(0,\pi)$ and consider $a\geq b \geq 0$ with $a>0$.
Then the following equalities hold:
\begin{align}
\label{eq:k1} S^\theta(\lambda_1^\theta-2)_{11} & =
\P_1^{\theta} \quad \hbox{ if }\  2\varrho\Xi_+(\theta)+a^2b^2=0
\quad \hbox{(resonant case)}, \\
\label{eq:k3} S^\theta(\lambda_1^\theta-2)_{11} & =
 -\P_1^{\theta} \quad \hbox{ otherwise } \qquad \hbox{(generic case)}, 
\end{align}
and 
\begin{equation*}   
S^\theta(\lambda_2^\theta-2)_{22}=-\P_2^\theta.  
\end{equation*}
\end{Proposition}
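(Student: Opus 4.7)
The plan is to use the stationary formula for the scattering matrix available in this finite rank setting. Since $V=\diag(v)P_0=\u\v^2P_0$ and $H_0^\theta$ is explicitly diagonalised by $\F^\theta$, a standard Birman--Kato reduction yields an expression of the form
\[
S^\theta(\lambda)_{jj}=\P_j^\theta-2\pi i\,\P_j^\theta\,\gamma^\theta(\lambda)\,\v\bigl(1+M^\theta(\lambda)\bigr)^{-1}\u\v\,\gamma^\theta(\lambda)^*\,\P_j^\theta,
\]
where $\gamma^\theta(\lambda)$ is the trace of $\F^\theta$ at the energy $\lambda$ on the $2$-dimensional range of $\v P_0$, and $M^\theta(\lambda):=\v P_0 R_0^\theta(\lambda+i0)\u\v P_0$ is a $2\times 2$ matrix built from the boundary value of the free resolvent. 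I would first make this expression fully explicit in terms of the eigenvectors $\xi_1^\theta,\xi_2^\theta$, the constants $\Xi_\pm(\theta)$, and the spectral density factors $\pi^{-1}\beta_j^\theta(\lambda)^{-2}$, reducing the whole problem to concrete $2\times 2$ linear algebra.

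For the lower threshold $\lambda_1^\theta-2$, only channel $1$ is open and $\pi^{-1}\beta_1^\theta(\lambda)^{-2}$ diverges like $(\lambda-\lambda_1^\theta+2)^{-1/2}$. I would expand $M^\theta(\lambda)$ to leading order at this point and compute $\det\bigl(1+M^\theta(\lambda)\bigr)$: a direct calculation should show that this determinant vanishes at the threshold exactly when $2\varrho\,\Xi_+(\theta)+a^2b^2=0$, which pins down the resonance condition appearing in \eqref{eq:k1}. In the non-resonant case, $(1+M^\theta(\lambda))^{-1}$ remains bounded, so the divergent prefactor coming from $\gamma^\theta(\lambda)$ forces the rank-one correction onto $\P_1^\theta\C^2$ to equal exactly $-2\P_1^\theta$, yielding \eqref{eq:k3}. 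In the resonant case, the inverse absorbs the divergence in such a way that $\P_1^\theta\gamma^\theta\,\v(1+M^\theta)^{-1}\u\v\gamma^{\theta*}\P_1^\theta$ tends to zero, producing the value $\P_1^\theta$ of \eqref{eq:k1}.

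For the embedded threshold $\lambda_2^\theta-2$, the channel $2$ is just opening while the channel $1$ contribution to $M^\theta(\lambda)$ remains regular and contributes only a bounded perturbation. The same $2\times 2$ analysis then applies with $\xi_1^\theta$ replaced by $\xi_2^\theta$. The analogue of the resonance coefficient now involves $\Xi_-(\theta)$ together with the regular channel $1$ contribution, and under the assumption $a\geq b\geq 0$ with $a>0$ (and $a>b$ when $u_1=u_2$) it has a fixed, non-vanishing sign. Hence no cancellation is possible and only the generic value $-\P_2^\theta$ is reached, giving the unconditional statement of the proposition.

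The main obstacle is the careful asymptotic expansion of $M^\theta(\lambda)$ at each threshold, and the identification of the precise algebraic coefficient whose vanishing is the resonance condition. One needs to expand $R_0^\theta(\lambda+i0)$ to next-to-leading order at the opening channel, keep track of the inner products between $\xi_1^\theta$ and $\xi_2^\theta$, and combine them with the diagonal $\u,\v$ and the off-diagonal phases $\e^{i\theta/2},\e^{i\theta}$ carried by $\xi_j^\theta$. This combinatorics is precisely where the specific expression $2\varrho\,\Xi_+(\theta)+a^2b^2$ materialises at $\lambda_1^\theta-2$, and where the fixed sign of the corresponding coefficient at $\lambda_2^\theta-2$ rules out any resonance and produces the unconditional $-\P_2^\theta$.
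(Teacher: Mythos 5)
Your plan follows a genuinely different route from the paper's. The paper evaluates $S^\theta(\lambda_1^\theta-2)_{11}$ and $S^\theta(\lambda_2^\theta-2)_{22}$ \emph{at} the thresholds using the expansion machinery of \cite[Thm~3.9.(b)]{NRT} (the operators $I_0(0)$, $S_0$, $S_1$, $S_2$, $M_1(0)$, $C_{10}'$), whereas you propose a stationary formula at nearby energies followed by a one-sided limit. The limiting strategy is legitimate in principle: by \cite[Thm~3.9]{NRT} the threshold value at an opening channel \emph{is} the limit from the right, and the paper itself uses exactly this strategy for the upper thresholds in Proposition \ref{S(lambda)N=2bis}, starting from \eqref{eq_S_lambda}, Lemma \ref{lemmaM1invertible} and Proposition \ref{prop:grosses_expressions}. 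What the expansion route buys in addition is the refined information of Corollary \ref{cor:resonantcase} (a pole of order one in $\kappa$, hence a resonance and not a threshold eigenvalue); that extra content is not needed for the Proposition itself, but it is used later. Also, the precise form of your stationary formula (the placement of $\u$ relative to $(1+M^\theta)^{-1}$) should be taken from \eqref{eq_S_lambda}, since with non-scalar $\v\P_j^\theta\v$ the ordering matters.

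The genuine gap is in the criterion you propose for detecting the resonance, which is where the real work lies. With the open-channel term $i\,\v\P_1^\theta\v/\beta_1^\theta(\lambda)^2$ present, the determinant of your $1+M^\theta(\lambda)$ (equivalently $Q^\theta(\lambda)$ of \eqref{eq:det}) does \emph{not} vanish at $\lambda_1^\theta-2$ in the resonant case: in the generic case it diverges like $\beta_1^\theta(\lambda)^{-2}$, while in the resonant case it stays bounded with a non-zero limit. The correct criterion is the vanishing of the renormalised quantity
\begin{equation*}
\lim_{\lambda\searrow\lambda_1^\theta-2}\beta_1^\theta(\lambda)^2\beta_2^\theta(\lambda)^2\,Q^\theta(\lambda)
= i\big(2\varrho\,\Xi_+(\theta)+a^2b^2\big),
\end{equation*}
not of the determinant itself. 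Similarly, the inference ``$(1+M^\theta)^{-1}$ bounded $+$ divergent prefactor $\Rightarrow$ correction $=-2\P_1^\theta$'' is not valid as stated: a bounded inverse times $\beta_1^\theta(\lambda)^{-2}$ would diverge; what happens is that the relevant matrix element $\langle\v\xi_1^\theta,(\cdot)^{-1}\v\xi_1^\theta\rangle$ vanishes at exactly the rate $\beta_1^\theta(\lambda)^{2}$, and in the resonant case both numerator and denominator of \eqref{eq:s11} vanish, the limit being decided by comparing the orders $O(\varepsilon^{1/2})$ and $O(\varepsilon)$ --- precisely the comparison carried out in the proof of Proposition \ref{S(lambda)N=2bis}. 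Once you replace the determinant criterion by the vanishing of the renormalised coefficient and do this rate comparison explicitly, your argument goes through; as written, the identification of the resonant case would fail. Your treatment of $\lambda_2^\theta-2$ is sound in spirit: the would-be resonance coefficient there is $-2a^2b^2+4i\varrho\,\Xi_-(\theta)$, whose real and imaginary parts cannot vanish simultaneously under the standing assumptions, so only the generic value $-\P_2^\theta$ occurs.
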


\begin{figure}
\centering
  \includegraphics[width=.7\textwidth]{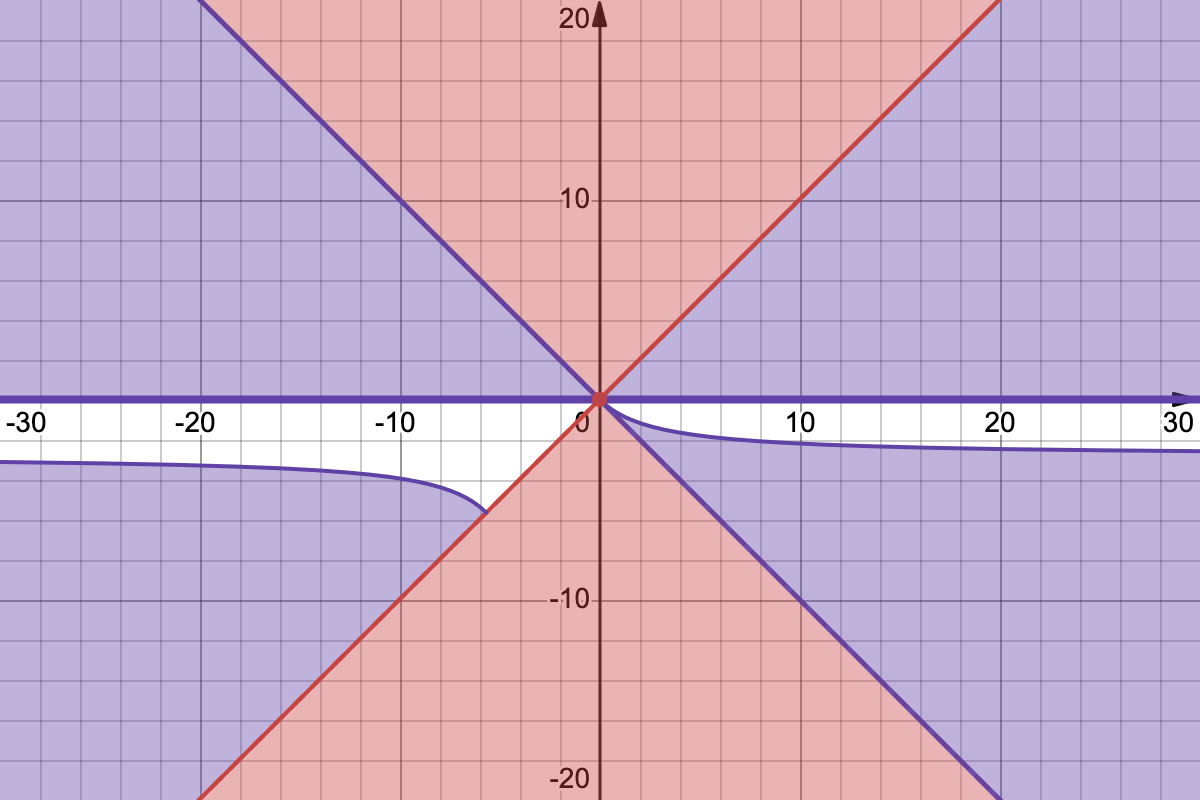}
\caption{The horizontal axis corresponds to $v(1)$ while the vertical axis corresponds to $v(2)$.  
The two white regions correspond to points $\big(v(1),v(2)\big)$ for which there exists a unique  $\theta_0\in (0,\pi)$
leading to a resonant case.}
\label{fig:SN2}
\end{figure}

Before giving the proof, let us briefly comment on this statement with the help of Figures \ref{fig:SN2} and \ref{fig:3D}. 
In Figure \ref{fig:SN2}, the $x$-axis corresponds to the values of $v(1)$ while the $y$-axis
corresponds to the values of $v(2)$. 
The red zones (2 open cones + the diagonal line in quadrants I and III) are the ones we can disregard since either $a< b$ or the system is $1$--periodic. The magenta zones (including the diagonal line in quadrants II and IV, the $x$-axis and the curved boundaries) correspond to combinations of $v(1)$ and $v(2)$ which lead to the generic case, namely to \eqref{eq:k3}. 
The open white region has two connected components and is given by points $\big(v(1),v(2)\big)$ such that there exists a unique $\theta_0\in (0,\pi)$ which verifies $v(2)=-\frac{2\Xi_+(\theta_0)v(1)}{v(1)+2\Xi_+(\theta_0)}$. This condition is equivalent to the condition leading to the resonant case given by \eqref{eq:k1}. In particular, it follows from the equality $2\varrho\Xi_+(\theta_0)=-a^2b^2$ that $\varrho<0$, which means
that either $u_1=u_2=-1$, corresponding to the left white region, or to $u_2=-1$ and $u_1=1$, corresponding to the right white region.

In order to understand the dependence of  $\theta_0$ on $v(1)$ and $v(2)$ 
for the resonant case, we have represented in Figure \ref{fig:3D} the value $\theta_0$ 
on the $z$--axis as function of $v(1)$ on the $x$--axis and of $v(2)$ 
on the $y$--axis. 
Note that $\theta_0$ is only shown on the left white region of Figure \ref{fig:SN2}
which is bounded on the $y$--axis and unbounded on the $x$--axis. 

The following proof is based on the statement \cite[Thm.~3.9]{NRT} which provides
the general formulas. Here, we compute much more explicit results in the special case $N=2$.
Let us stress that most of the notations are directly borrowed from \cite{NRT}. 
In addition, we shall use the notation $\dagger$ for the inverse of an operator defined on its range.

\begin{figure}
\centering
  \includegraphics[width=0.6\textwidth]{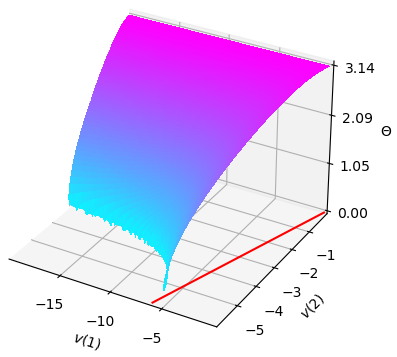}
\caption{The values $\theta_0$ that satisfy the resonant case (in the left white region of Figure \ref{fig:SN2})
is represented, with the value $v(1)$ and $v(2)$ on the $x$ and $y$ axes. The 2 scales are different, and the red line represent the diagonal on this quadrant. The $z$--axis corresponds to $\theta$ in $(0,\pi)$, and the surface represents the values of $\theta_0$, as a function of $v(1)$ and $v(2)$.}
\label{fig:3D}
\end{figure}

\begin{proof}[Proof of Proposition \ref{S(lambda)N=2}]
We start by introducing the new quantities:
\begin{equation*}
\zeta_1^\theta:=\frac{1}{\sqrt{a^2+b^2}}
\left(\begin{smallmatrix}b \\ a\e^{i\theta/2}\end{smallmatrix}\right)
\quad  \zeta_2^\theta:=\frac{1}{\sqrt{a^2+b^2}}
\left(\begin{smallmatrix}b \\ -a\e^{i\theta/2}\end{smallmatrix}\right).
\end{equation*}
It is then easily checked that the following relations hold:
\begin{align*}
& \langle\v \xi_j^\theta,\zeta_{j}^\theta\rangle=0\quad \hbox{and} \quad \langle \v\xi_1^\theta,\v\xi_2^\theta\rangle=b^2-a^2,\\
&\langle \zeta_1^\theta,\u \zeta_1^\theta \rangle=\langle \zeta_2^\theta,\u \zeta_2^\theta\rangle=\frac{\varrho}{a^2+b^2}, \\
& \langle \zeta_1^\theta,\v \xi_2^\theta\rangle=\frac{2ab \e^{i\theta/2}}{\sqrt{a^2+b^2}}=-\langle \zeta_2^\theta,\v \xi_1^\theta\rangle\quad \hbox{and} \quad \langle \u\zeta_1^\theta,\v \xi_1^\theta\rangle=\frac{ab \e^{i\theta/2}}{\sqrt{a^2+b^2}}(u_2-u_1)=-\langle \u \zeta_2^\theta,\v \xi_2^\theta\rangle.
\end{align*}

1) Let us start by considering the threshold $\lambda_1^\theta-2$ for $\theta\in(0,\pi)$. The starting point is the formula provided in \cite[Thm.~3.9.(b)]{NRT}, namely
\begin{equation}\label{eq:S11}
S^\theta(\lambda_1^\theta-2)_{11}
=\P_1^\theta-\P_1^\theta\v\big(I_0(0)+S_0\big)^{-1}\v\;\!\P_{1}^\theta
+\P_1^\theta\v\;\!C_{10}'(0)S_1\big(I_2(0)+S_2\big)^{-1}S_1C_{10}'(0)\v\;\!
\P_{1}^\theta.
\end{equation}
We firstly observe that
\begin{equation*}
I_0(0)=\frac{1}{4}|\v\xi_1^\theta\rangle \langle \v\xi_1^\theta|, \quad  S_0=|\zeta_1^\theta\rangle \langle \zeta_1^\theta|, \quad \beta_2^\theta(\lambda_1^\theta-2)^2=4\Xi_+(\theta), \quad  M_1(0)=\u + \frac{|\v \xi_2^\theta\rangle \langle \v \xi_2^\theta|}{8\Xi_+(\theta)}.
\end{equation*}
and hence
\begin{equation*}
(I_0(0)+S_0)^{-1}=I_0(0)^\dagger+S_0=\frac{4}{(a^2+b^2)^2}|\v\xi_1^\theta\rangle \langle \v\xi_1^\theta|+|\zeta_1^\theta\rangle \langle \zeta_1^\theta|.
\end{equation*}
It then follows that
\begin{equation*}
\P_1^\theta\v(I_0(0)+S_0)^{-1}\v\P_1^\theta=2\P_1^\theta.
\end{equation*}
Thus, if the  third term in the r.h.s.~of \eqref{eq:S11} vanishes, then 
$S^\theta(\lambda_1^\theta-2)_{11}= -\P_1^\theta$. 
This situation corresponds to the generic case.
Our next aim is thus to exhibit situations when the third
term in \eqref{eq:S11} is not trivial.

Let us recall that $S_1$ is defined as the projection on the kernel of $I_1(0)$ (inside the subspace $S_0\C^N$). Thus, in order to deal with the third term, we have to study $I_1(0)$:
\begin{align}\label{firstfactor}
\nonumber & I_1(0) = S_0 M_1(0) S_0 = S_0\u S_0+S_0\frac{|\v \xi_2^\theta\rangle \langle \v \xi_2^\theta|}{8\Xi_+(\theta)}S_0
=\bigg(\langle \zeta_1^\theta ,\u \zeta_1^\theta \rangle+\frac{|\langle\v \xi_2^\theta,\zeta_1^\theta \rangle|^2}{8\Xi_+(\theta)}  \bigg)S_0\\
& =\bigg(\frac{\varrho}{a^2+b^2}+\frac{\frac{4a^2b^2}{a^2+b^2}}{8\Xi_+(\theta)}\bigg)S_0
=\bigg(\frac{2\varrho\Xi_+(\theta)+a^2b^2}{2\Xi_+(\theta)(a^2+b^2)}  \bigg)S_0.
\end{align}
If the first factor in \eqref{firstfactor} does not vanish, then $S_1=0$. 
Clearly, if $S_1=0$, then the generic case takes place. Thus, in order not to be in the generic case the following necessary condition must hold:
\begin{equation}\label{HypA}
2\varrho\Xi_+(\theta)+a^2b^2=0.
\end{equation} 

Let us now study $C_{10}'$ under the condition \eqref{HypA}. Recalling that $S_1=S_0$ we get
\begin{align*}
C'_{10}=&\big(I_0(0)+S_0\big)^{-1}\big(M_1(0)S_1-S_1M_1(0)\big)\big(I_0(0)+S_0\big)^{-1}\\
=&\big(I_0(0)^\dagger+S_0\big)\big(M_1(0)S_0-S_0M_1(0)\big)\big(I_0(0)^\dagger+S_0\big)\\
=&I_0(0)^\dagger M_1(0)S_0-S_0M_1(0)I_0(0)^\dagger.
\end{align*}
Hence, from the explicit expressions computed above one has
\begin{align*}
S_0M_1(0)I_0(0)^\dagger=&|\zeta_1^\theta\rangle\langle\zeta_1^\theta|\Big( \u+\frac{|\v\xi_2^\theta\rangle\langle \v\xi_2^\theta\rangle|}{8\Xi_+(\theta)}  \Big)
 \frac{4}{(a^2+b^2)^2}|\v \xi_1^\theta\rangle\langle \v \xi_1^\theta|\\
=&\frac{1}{(a^2+b^2)^2}\Big(4\langle\u\zeta_1^\theta,\v\xi_1^\theta \rangle+\frac{\langle\zeta_1^\theta , \v \xi_2^\theta \rangle\langle \v\xi_2^\theta,\v\xi_1^\theta\rangle}{2\Xi_+(\theta)}\Big)|\zeta_1^\theta\rangle\langle \v \xi_1^\theta|\\
=&\frac{ab}{(a^2+b^2)^{\frac52}}\Big(4(u_2-u_1)+ \frac{b^2-a^2}{\Xi_+(\theta)}\Big)\e^{i\theta/2} |\zeta_1^\theta\rangle\langle \v \xi_1^\theta|.
\end{align*}
We thus set 
$$
\gamma^{\theta}:=
\frac{ab}{(a^2+b^2)^{\frac52}}\Big(4(u_2-u_1)+ \frac{b^2-a^2}{\Xi_+(\theta)}\Big)
$$
which means that 
$$
S_0M_1(0)I_0(0)^\dagger 
= \gamma^{\theta} \e^{i\theta/2} |\zeta_1^\theta\rangle\langle \v \xi_1^\theta|.
$$
Analogously we also obtain
\begin{equation*}
I_0(0)^\dagger M_1(0)S_0=\gamma^\theta \e^{-i\theta/2} |\v \xi_1^\theta\rangle\langle\zeta_1^\theta |, 
\end{equation*}
and consequently
$$
C'_{10} = \gamma^\theta\Big(
\e^{-i\theta/2} |\v \xi_1^\theta\rangle\langle\zeta_1^\theta |
-  \e^{i\theta/2} |\zeta_1^\theta\rangle\langle \v \xi_1^\theta|\Big).
$$
Thus, $C_{10}'$ vanishes if and only if $\gamma^\theta=0$, and if so, the generic case  takes place again. However, observe that  \eqref{HypA} and the definition of $\varrho$ imply that $b^2>0$, that $\varrho\neq0$, and then that
\begin{equation*}
\Xi_+(\theta)=-\frac{a^2b^2}{2\varrho}.
\end{equation*} 
Observe then that the condition $\gamma^\theta=0$ is equivalent to
\begin{align*}
 	& 4(u_2-u_1)+\frac{b^2-a^2}{\Xi_+(\theta)}=0\\
\Leftrightarrow\  & 2(u_2-u_1)a^2b^2 +(a^2-b^2)(a^2u_2+b^2u_1)=0\\
\Leftrightarrow\  & u_2a^2b^2-u_1a^2b^2+a^4u_2-b^4u_1=0\\
\Leftrightarrow\  & (b^2+a^2)(u_2a^2-u_1b^2)=0.
\end{align*}
However, from the fact that $v$ is assumed to be
$2$-periodic and not $1$-periodic, one infers that $\gamma^\theta=0$
does not take place under assumption \eqref{HypA}.

Now, it follows that under \eqref{HypA} one has
$S_1=S_0$, $I_1(0)= S_0M_1(0)S_0=0$, and then
\begin{align*}
I_2(0)=&-S_1 M_1(0)(I_0(0)+S_0)^{-1}M_1(0)S_1 \\
=&-S_0 M_1(0)I_0(0)^\dagger M_1(0)S_0 \\
=&-\gamma^{\theta}\e^{i\theta/2} |\zeta_1^\theta\rangle\langle \v \xi_1^\theta| \ I_0(0)\  \gamma^\theta \e^{-i\theta/2} |\v \xi_1^\theta\rangle\langle\zeta_1^\theta |\\
=&\frac{-\big(\gamma^\theta(a^2+b^2)\big)^2}4 S_0.
\end{align*}
Since $S_2$ corresponds to the projection on the kernel of $I_2(0)$, one infers that
\begin{equation}\label{eq:S_2=0}
S_2=0\quad \text{ and }\quad (I_2(0)+S_2)^{-1}=\frac{-4}{\big(\gamma^\theta(a^2+b^2)\big)^2} S_0.
\end{equation}

We have now all the ingredients for computing the third term in \eqref{eq:S11},
still under the conditions provided by \eqref{HypA}\;\!:
\begin{align*}
&\P_1^\theta\v\;\!C_{10}'(0)S_1\big(I_2(0)+S_2\big)^{-1}S_1C_{10}'(0)\v\;\!
\P_{1}^\theta \\
& = \frac{| \xi_1^\theta\rangle\langle\v\xi_1^\theta|}{2} 
 \gamma^\theta\Big( \e^{-i\theta/2} |\v \xi_1^\theta\rangle\langle\zeta_1^\theta |
-  \e^{i\theta/2} |\zeta_1^\theta\rangle\langle \v \xi_1^\theta|\Big)
\frac{-4S_0 }{\big(\gamma^\theta(a^2+b^2)\big)^2}   \\
& \qquad \times 
 \gamma^\theta\Big( \e^{-i\theta/2} |\v \xi_1^\theta\rangle\langle\zeta_1^\theta |
-  \e^{i\theta/2} |\zeta_1^\theta\rangle\langle \v \xi_1^\theta|\Big)
\frac{|\v\xi_1^\theta\rangle\langle\xi_1^\theta|}{2}\\
&=\frac{| \xi_1^\theta\rangle\langle\v\xi_1^\theta|}{2} \e^{-i\theta/2} |\v \xi_1^\theta\rangle\langle\zeta_1^\theta |\frac{4S_0 }{(a^2+b^2)^2}  \e^{i\theta/2} |\zeta_1^\theta\rangle\langle \v \xi_1^\theta|\frac{|\v\xi_1^\theta\rangle\langle\xi_1^\theta|}{2}\\
&=\frac{1}{(a^2+b^2)^2}| \xi_1^\theta\rangle \|\v\xi_1^\theta \|^2 
\langle\zeta_1^\theta,S_0\zeta_1^\theta\rangle \|\v \xi_1^\theta\|^2 
\langle\xi_1^\theta| \\
&=2\P_1^\theta.
\end{align*}
Thus, whenever  \eqref{HypA} hold, one has
$S^\theta(\lambda_1^\theta-2)=\P_1^\theta$.

2) We now consider the threshold $\lambda_2^\theta-2$ for $\theta\in (0,\pi)$. 
For this value one has
\begin{equation*}
I_0(0)=\frac{1}{4}|\v\xi_2^\theta\rangle\langle\v\xi_2^\theta|, \quad  S_0=|\zeta_2^\theta\rangle\langle\zeta_2^\theta|,  \quad 
\beta_1^\theta(\lambda_2^\theta-2)^2=4\Xi_{-}(\theta), \quad M_1(0)=u+\frac{i|\v\xi_1^\theta\rangle\langle\v\xi_1^\theta|}{8\Xi_-(\theta)}.
\end{equation*}
We can then compute
\begin{equation*}
I_1(0)=\Big(\langle\zeta_2^\theta, \u \zeta_2^\theta\rangle+i\frac{|\langle\zeta_2^\theta,\v\xi_1^\theta\rangle|^2}{8\Xi_-(\theta)}\Big)S_0
=\Big(\frac{\varrho}{a^2+b^2}+i\frac{a^2b^2}{2\Xi_-(\theta)(a^2+b^2)} \Big)S_0.
\end{equation*}
A quick inspection at the first factor shows that it never vanishes, which means that $S_1=0$. Then, for this threshold, the expression for $S^\theta(\lambda_2^\theta-2)_{22}$ can be computed by using \cite[Thm.~3.9.(b)]{NRT}.
Note that the off diagonal terms vanish, as shown in \cite[Thm.~3.9.(b)]{NRT}. For the expression $S^\theta(\lambda_2^\theta-2)_{22}$ we obtain
\begin{align*}
S^\theta(\lambda_2^\theta-2)_{22}&=\P_2^\theta-\P_2^\theta\v(I_0(0)+S_0)^{-1} \v \P_2^\theta \\
&=\P_2^\theta-\frac14 \;\!\frac{4}{(a^2+b^2)^2}|\langle \v\xi_2^\theta, \v\xi_2^\theta\rangle|^2 |\xi_2^\theta\rangle\langle \xi_2^\theta|\\
&=-\P_2^\theta,
\end{align*}
which concludes the proof.
\end{proof}

Up to now, only the values of the scattering matrix at thresholds have been provided.
Additional information can be deduced from this computation, but some notations have
to be introduced.
Firstly, let $G:\h\to\C^2$ be the bounded operator defined on $\f\in\h$ by 
\begin{equation*}
(G\f)_j:=\v_{jj}\int_0^\pi \f_j(\omega)\;\!\tfrac{\d\omega}\pi
=|v(j)|^{1/2}\int_0^\pi \f_j(\omega)\;\!\tfrac{\d\omega}\pi,
\end{equation*}
and set for $\lambda, \varepsilon\in \R$ and $\varepsilon \neq 0$
\begin{equation*}
M^\theta(\lambda+i\varepsilon):=\big(\u+G(H_0^\theta-\lambda-i \varepsilon)^{-1}G^*\big)^{-1}.
\end{equation*}
A precise (and rather long) asymptotic expansion of this has been provided in \cite[Prop.~3.4]{NRT}
as $\varepsilon$ approaches $0$ for each fixed $\lambda$ in the spectrum of $H^\theta$.
Based on the previous proof, more precise information can be obtained in our current
framework, namely for $N=2$.

\begin{Corollary}\label{cor:resonantcase}
Let $\theta \in(0,\pi)$ and consider $a\geq b \geq 0$ with $a>0$. Let us assume that
\begin{equation}\label{eq:genericassumption}
\Xi_+(\theta)\neq\frac{a^2b^2}{2\varrho}.
\end{equation}
Then the map $\kappa \mapsto M^\theta(\lambda^\theta_1-2-\kappa^2)$ admits an holomorphic extension in some sufficiently small neighbourhood of $0$ for 
$-\kappa^2\in \overline{\C_+}$ (the closure of $\C_+$).
If, in contrast
\begin{equation}\label{eq:resonanceassumption}
\Xi_+(\theta)=\frac{a^2b^2}{2\varrho},
\end{equation}
the map $\kappa \mapsto  M^\theta(\lambda^\theta_1-2-\kappa^2)$ admits an meromorphic extension in some sufficiently small neighbourhood of $0$ in 
$ \overline{\C_+}$, with its unique pole located at $0$ and of order $1$ (in the variable $\kappa$).
\end{Corollary}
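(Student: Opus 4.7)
The plan is to write $M^\theta(\lambda^\theta_1 - 2 - \kappa^2)^{-1} = F(\kappa)$ with $F(\kappa) := \u + G R_0(\lambda^\theta_1 - 2 - \kappa^2) G^*$, derive a Laurent expansion of the operator-valued family $F(\kappa)$ at $\kappa = 0$, and then invert it by a Schur-complement / Jensen-Nenciu iteration whose algebraic content has essentially been carried out already inside the proof of Proposition \ref{S(lambda)N=2}.

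First I would establish the expansion of $F(\kappa)$. Using the spectral decomposition of $H^\theta_0$ in the eigenbasis $\{\xi^\theta_1, \xi^\theta_2\}$ of $A^\theta$, one has $R_0(z) = \sum_{j=1,2} (2\cos(\Omega) + \lambda^\theta_j - z)^{-1} \otimes \P^\theta_j$. For $\theta \in (0,\pi)$ and $z = \lambda^\theta_1 - 2 - \kappa^2$, the $j=2$ contribution is holomorphic in $\kappa$ near $0$ because $\lambda^\theta_1 - 2 < \lambda^\theta_2 - 2$, while the $j=1$ contribution carries the entire threshold singularity and, through the explicit integral $\int_0^\pi \frac{\d\omega/\pi}{2\cos\omega + 2 + \kappa^2} = (\kappa\sqrt{4+\kappa^2})^{-1}$ with the branch of the square root selected by limiting absorption on the sheet $-\kappa^2 \in \overline{\C_+}$, yields
\[
F(\kappa) = \kappa^{-1} I_0(0) + M_1(0) + O(\kappa),
\]
with $I_0(0) = \tfrac14|\v\xi^\theta_1\rangle\langle\v\xi^\theta_1|$ and $M_1(0) = \u + \tfrac{1}{8\Xi_+(\theta)}|\v\xi^\theta_2\rangle\langle\v\xi^\theta_2|$, matching the quantities already used in Section~\ref{subsec:thresholds}.

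Next I would invert $F(\kappa)$ by Schur complement with respect to the orthogonal decomposition $\C^2 = \C\v\xi^\theta_1 \oplus \C\zeta^\theta_1$. On the $\v\xi^\theta_1$-block, the leading term $\kappa^{-1} I_0(0)$ is invertible with inverse of order $\kappa$, so the inversion of $F(\kappa)$ reduces to that of the Schur complement $X(\kappa)$ acting on $S_0\C^2$. Tracking orders in $\kappa$ and using the identities between the auxiliary objects in the proof of Proposition \ref{S(lambda)N=2} gives
\[
X(\kappa) = I_1(0) + \kappa\, I_2(0) + O(\kappa^2),
\]
where $I_1(0) = \tfrac{2\varrho\Xi_+(\theta) + a^2b^2}{2\Xi_+(\theta)(a^2+b^2)}\,S_0$ and $I_2(0) = -\tfrac14(\gamma^\theta(a^2+b^2))^2\,S_0$ are the scalars already computed there.

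From here the two regimes follow directly. Under \eqref{eq:genericassumption}, the scalar multiplying $S_0$ in $I_1(0)$ is non-zero, so $X(\kappa)$ is invertible near $\kappa = 0$ with holomorphic inverse, and hence $F(\kappa)^{-1}$ is holomorphic near $0$. Under \eqref{eq:resonanceassumption}, that scalar vanishes, so $X(\kappa) = \kappa I_2(0) + O(\kappa^2)$; the proof of Proposition \ref{S(lambda)N=2} has already shown $\gamma^\theta \neq 0$ under the $2$-periodicity (and not $1$-periodicity) assumption on $v$, so $I_2(0)$ is invertible on $S_0\C^2$, $X(\kappa)^{-1}$ has a simple pole at $0$, and so does $F(\kappa)^{-1}$. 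The main obstacle is establishing the holomorphy (rather than mere continuity) of the expansion of $F(\kappa)$ on a full neighbourhood of $\kappa = 0$ in $\{-\kappa^2 \in \overline{\C_+}\}$: this requires recasting the limiting-absorption arguments of \cite[Sec.~3]{NRT} as the existence of an analytic extension in the uniformizing variable $\kappa$ and not only as one-sided boundary values; once this is in place, the Jensen-Nenciu bookkeeping from Section~\ref{subsec:thresholds} translates directly into the claimed holomorphic and simple-pole behaviour.
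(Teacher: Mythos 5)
Your argument is correct, and it reaches the paper's conclusion by re-deriving the threshold expansion rather than citing it. The paper's own proof is essentially two lines: it identifies \eqref{eq:genericassumption} with the generic case ($S_1=0$) and \eqref{eq:resonanceassumption} with the resonant case ($S_1\neq 0$, $S_2=0$, see \eqref{eq:S_2=0}), and then simply reads off from the general expansion \cite[(3.17)]{NRT} which of the $\tfrac1\kappa$ and $\tfrac1{\kappa^2}$ singularities survive. You instead reconstruct that expansion from scratch for $N=2$: the explicit integrals give $F(\kappa)=\kappa^{-1}I_0(0)+M_1(0)+O(\kappa)$ with the same $I_0(0)$ and $M_1(0)$ as in the proof of Proposition \ref{S(lambda)N=2}, and the Feshbach/Schur-complement inversion along $\C\v\xi^\theta_1\oplus\C\zeta^\theta_1$ reduces everything to $X(\kappa)=I_1(0)+\kappa I_2(0)+O(\kappa^2)$, whose invertibility (generic case) or simple zero with $I_2(0)=-\tfrac14\big(\gamma^\theta(a^2+b^2)\big)^2S_0\neq0$ (resonant case, using $\gamma^\theta\neq0$) yields respectively a holomorphic inverse or a pole of order exactly one. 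This is the same dichotomy, governed by the same quantities $I_1(0)$, $I_2(0)$ (equivalently $S_1$, $S_2$), but your version is self-contained and has the advantage of making the analytic — rather than merely asymptotic boundary-value — nature of the extension in the uniformizing variable $\kappa$ completely explicit, which the paper's citation glosses over; the price is redoing the Jensen--Nenciu bookkeeping that \cite{NRT} already packages. Two small points to tighten: (i) the coefficient of $\kappa$ in your Schur complement is $S_0M_2S_0-S_0M_1(0)I_0(0)^\dagger M_1(0)S_0$, and you should say explicitly that $S_0M_2S_0=0$ because the only odd-order term of $F(\kappa)$ is proportional to $\v\P^\theta_1\v$ and $\langle\zeta^\theta_1,\v\xi^\theta_1\rangle=0$; with that remark your claimed form of $X(\kappa)$ is exact, and note that the identity $-S_0M_1(0)I_0(0)^\dagger M_1(0)S_0=-\tfrac14\big(\gamma^\theta(a^2+b^2)\big)^2S_0$ holds without assuming \eqref{HypA}. (ii) The holomorphy you flag as the main obstacle is in fact immediate here: $\kappa\mapsto\big(\kappa\sqrt{4+\kappa^2}\big)^{-1}$ is $\kappa^{-1}$ times a function holomorphic near $0$, and the $j=2$ channel is evaluated away from its thresholds, so no recasting of the limiting absorption principle beyond fixing the branch consistent with $-\kappa^2\in\overline{\C_+}$ is needed. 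Finally, like the paper's own proof, you implicitly read the conditions \eqref{eq:genericassumption}--\eqref{eq:resonanceassumption} in the form $2\varrho\,\Xi_+(\theta)+a^2b^2\neq0$ versus $=0$ of \eqref{HypA}; this is the intended meaning, the sign discrepancy being in the statement rather than in your argument.
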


Before giving the proof, let us stress that the pole of order $1$ (in the variable $\kappa$) does not correspond to an eigenvalue of $H^\theta$ at $\lambda^\theta_1-2$, but corresponds to a resonance. Indeed, an eigenvalue would necessitate a pole of order $2$ in $\kappa$, 
as for the resolvent $\big(H^\theta-(\lambda^\theta_1-2-\kappa^2)\big)^{-1}$.  
In particular, this statement rules out the existence of an eigenvalue at the bottom
of the continuous spectrum of $H^\theta$.

\begin{proof}
We first observe that \eqref{eq:genericassumption} corresponds to the generic case,  and hence to $S_1=0$. Then, by examining the expansion of $M^\theta(\lambda^\theta_1-2-\kappa^2)$ provided in \cite[(3.17)]{NRT}, we infers that $S_1=0$ leads to a regular expression, without the singularities in $\tfrac1k$ and in $\tfrac1{k^2}$. 
In contrast, if we assume \eqref{eq:resonanceassumption} we have shown in the previous proof that $S_1\neq 0$ but $S_2=0$,  see \eqref{eq:S_2=0}. Hence, by looking again at  \cite[(3.17)]{NRT}, we deduce that the singularity in $\tfrac1{k^2}$ vanishes while the one in $\tfrac1k$ does not. 
\end{proof}

%--------------------------------------------------------------------------------------
\subsection{The scattering matrix \texorpdfstring{$S^\theta(\lambda)$}{S(lambda)}} \label{subsec:inside}
%--------------------------------------------------------------------------------------

In this section we provide explicit formulas for $S^\theta(\lambda)$. 
The starting point is a stationary representation of the scattering matrix.
Namely, let us also define for $j,j'\in\{1,\dots,N\}$ the operator
$\delta_{jj'}\in\B\big(\P_{j'}^\theta\;\!\C^N;\P_j^\theta\;\!\C^N\big)$ by
$\delta_{jj'}:=1$ if $j=j'$ and $\delta_{jj'}:=0$ otherwise. Then,  for
$
\lambda\in\big(I^\theta_j\cap I^\theta_{j'}\big)
\setminus\big(\T^\theta\cup\sigma_{\rm p}(H^\theta)\big)
$
the channel scattering matrix
$S^\theta(\lambda)_{jj'}:=\P^\theta_jS^\theta(\lambda)\P^\theta_{j'}$ satisfies the
formula
\begin{equation}\label{eq_S_lambda}
S^\theta(\lambda)_{jj'}
=\delta_{jj'}-2i\;\!\beta_j^\theta(\lambda)^{-1}\P_j^\theta\v\;\!
M^\theta(\lambda+i0)\v\;\!\P_{j'}^\theta\;\!\beta_{j'}^\theta(\lambda)^{-1},
\end{equation}
with 
$$
M^\theta(\lambda+i0) =\big(\u+G(H_0^\theta-\lambda-i0)^{-1}G^*\big)^{-1}.
$$
The operator $M^\theta(\lambda+i0)$ belongs to $\B(\C^2)$ for
$\lambda \notin \big(\tau^\theta \cup\sigma_{\rm p}(H^\theta) \big)$
and by \cite[Lem.~3.1]{NRT} one has
\begin{equation}
\u + G(H_0^\theta-\lambda-i0)^{-1}G^*=
\begin{cases}
\u+\frac{\v\P_1^\theta\v}{\beta_1^\theta(\lambda)^2}+\frac{\v\P_2^\theta\v}{\beta_2^\theta(\lambda)^2}& \hbox{ for }\ \lambda<\lambda_1^\theta-2, \\
\u+i\frac{\v\P_1^\theta\v}{\beta_1^\theta(\lambda)^2}+\frac{\v\P_2^\theta\v}{\beta_2^\theta(\lambda)^2}& \hbox{ for }\ \lambda \in (\lambda_1^\theta-2,\lambda_2^\theta-2),
\\
\u+i\frac{\v\P_1^\theta\v}{\beta_1^\theta(\lambda)^2}+i\frac{\v\P_2^\theta\v}{\beta_2^\theta(\lambda)^2}&\hbox{ for }\ \lambda \in (\lambda_2^\theta-2, \lambda_1^\theta+2),
\\
\u-\frac{\v\P_1^\theta\v}{\beta_1^\theta(\lambda)^2}+i\frac{\v\P_2^\theta\v}{\beta_2^\theta(\lambda)^2}&\hbox{ for }\ \lambda \in (\lambda_1^\theta+2,\lambda_2^\theta+2),
\\
\u-\frac{\v\P_1^\theta\v}{\beta_1^\theta(\lambda)^2}-\frac{\v\P_2^\theta\v}{\beta_2^\theta(\lambda)^2}&\hbox{ for }\ \lambda >\lambda_2^\theta+2.
\end{cases}
\end{equation}
Hence we shall look for an explicit expression for 
\begin{equation*}
M^\theta(\lambda+i0)=\Big(\u+c_1\frac{\v\P_1^\theta\v}{\beta_1^\theta(\lambda)^2}+c_2\frac{\v\P_2^\theta\v}{\beta_2^\theta(\lambda)^2}\Big)^{-1}
\end{equation*}
for $c_j\in\{1,-1,i\}$.
For convenience we set
\begin{equation*}
T_\pm^\theta(\lambda):=\frac{c_1}{\beta_1^\theta(\lambda)^2}\pm\frac{c_2}{\beta_2^\theta(\lambda)^2}.
\end{equation*}

\begin{Lemma}\label{lemmaM1invertible}
For all $\lambda\in (\lambda_1^\theta-2,\lambda_2^\theta+2)\backslash \{\lambda_2^\theta-2,\lambda_1^\theta+2\}$ the operator $M^\theta(\lambda+i0)$ belongs to $\B(\C^2)$ and one has
\begin{equation}\label{eq:inv}
\v M^\theta(\lambda+i0)\v= \frac{1}{Q^\theta(\lambda)} \begin{pmatrix}
u_2a^2+\frac{a^2b^2}2 T_+^\theta(\lambda) & \frac{a^2b^2 \e^{-i\frac{\theta}2}}2T_-^\theta(\lambda)\\
\frac{a^2b^2 \e^{i\frac{\theta}2}}2T_-^\theta(\lambda) & u_1b^2+\frac{a^2b^2}2 T_+^\theta(\lambda)
\end{pmatrix}
\end{equation}
with
\begin{equation}\label{eq:det}
Q^\theta(\lambda):=\frac{2u_1u_2\beta_1^\theta(\lambda)^2\beta_2^\theta(\lambda)^2+\big(c_1\beta_2^\theta(\lambda)^2+c_2\beta_1^\theta(\lambda)^2\big)(u_2a^2+u_1b^2)+2a^2b^2c_1c_2}{2\beta_1^\theta(\lambda)^2\beta_2^\theta(\lambda)^2} .
\end{equation}
\end{Lemma}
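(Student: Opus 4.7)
The proof is a direct algebraic manipulation at the level of $2\times 2$ matrices. Using the explicit eigenvectors $\xi_1^\theta=\bigl(\begin{smallmatrix}-\e^{i\theta/2}\\ \e^{i\theta}\end{smallmatrix}\bigr)$ and $\xi_2^\theta=\bigl(\begin{smallmatrix}\e^{i\theta/2}\\ \e^{i\theta}\end{smallmatrix}\bigr)$ together with $\P_j^\theta=\tfrac12|\xi_j^\theta\rangle\langle\xi_j^\theta|$ and $\v=\diag(a,b)$, I would first compute
\begin{equation*}
\v\P_1^\theta\v=\tfrac{1}{2}\begin{pmatrix}a^2 & -ab\,\e^{-i\theta/2}\\ -ab\,\e^{i\theta/2} & b^2\end{pmatrix},\qquad
\v\P_2^\theta\v=\tfrac{1}{2}\begin{pmatrix}a^2 & ab\,\e^{-i\theta/2}\\ ab\,\e^{i\theta/2} & b^2\end{pmatrix}.
\end{equation*}
Substituting these into $\u+c_1\tfrac{\v\P_1^\theta\v}{\beta_1^\theta(\lambda)^2}+c_2\tfrac{\v\P_2^\theta\v}{\beta_2^\theta(\lambda)^2}$ and using the definition of $T_\pm^\theta(\lambda)$ gives, by inspection, the matrix
\begin{equation*}
\mathcal{M}^\theta(\lambda):=\begin{pmatrix}u_1+\tfrac{a^2}{2}T_+^\theta(\lambda) & -\tfrac{ab\,\e^{-i\theta/2}}{2}T_-^\theta(\lambda)\\[1mm] -\tfrac{ab\,\e^{i\theta/2}}{2}T_-^\theta(\lambda) & u_2+\tfrac{b^2}{2}T_+^\theta(\lambda)\end{pmatrix}.
\end{equation*}

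The second step is to identify $\det\mathcal{M}^\theta(\lambda)$ with $Q^\theta(\lambda)$. Expanding directly yields
\begin{equation*}
\det\mathcal{M}^\theta(\lambda)=u_1u_2+\frac{T_+^\theta(\lambda)\bigl(u_2a^2+u_1b^2\bigr)}{2}+\frac{a^2b^2}{4}\bigl(T_+^\theta(\lambda)^2-T_-^\theta(\lambda)^2\bigr),
\end{equation*}
and the identity $T_+^2-T_-^2=4c_1c_2/\bigl(\beta_1^\theta(\lambda)\beta_2^\theta(\lambda)\bigr)^2$, which follows at once from the definitions of $T_\pm^\theta$, transforms this expression into exactly \eqref{eq:det}. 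Once $Q^\theta(\lambda)\neq 0$, the cofactor formula inverts $\mathcal{M}^\theta(\lambda)$ (swap the diagonal, negate the off-diagonal, divide by $Q^\theta$); pre- and post-multiplication by $\v=\diag(a,b)$ then reinstates the factors $a^2$, $b^2$ and $a^2b^2$ appearing in \eqref{eq:inv}.

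The one genuine obstacle is proving that $Q^\theta(\lambda)\neq 0$ throughout the open range, so that $M^\theta(\lambda+i0)$ really belongs to $\B(\C^2)$. This requires a case analysis along the five piecewise choices of $(c_1,c_2)$ listed before the lemma. In the middle region where $c_1=c_2=i$, for example, the numerator of $Q^\theta$ is $2u_1u_2\beta_1^2\beta_2^2-2a^2b^2+i\varrho(\beta_1^2+\beta_2^2)$; its imaginary part forces $\varrho=0$ and then the real part forces $\beta_1^2\beta_2^2=u_1u_2a^2b^2$, and this simultaneous system places $\lambda$ in $\sigma_{\rm p}(H^\theta)$ via the spectral criterion of \cite{NRT}. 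The two intermediate regions where exactly one of $c_1,c_2$ equals $i$ are treated by the same real--imaginary split, while the two excluded thresholds $\lambda_2^\theta-2$ and $\lambda_1^\theta+2$ are precisely the points where $\beta_j^\theta(\lambda)$ develops a zero and where the formal expression for $Q^\theta(\lambda)$ degenerates; this singular behaviour has already been addressed separately in Proposition \ref{S(lambda)N=2}.
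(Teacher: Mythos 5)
Your algebraic steps reproduce the paper's computation: the matrix $\u+c_1\v\P_1^\theta\v/\beta_1^\theta(\lambda)^2+c_2\v\P_2^\theta\v/\beta_2^\theta(\lambda)^2$ you write down is exactly the one displayed in the paper's proof, your determinant expansion together with the identity $T_+^2-T_-^2=4c_1c_2/\big(\beta_1^\theta(\lambda)\beta_2^\theta(\lambda)\big)^2$ gives precisely $Q^\theta(\lambda)$ of \eqref{eq:det}, and the cofactor inversion sandwiched by $\v$ yields \eqref{eq:inv}. Up to that point there is nothing to object to.

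The genuine gap is in the step you yourself call the one obstacle: proving $Q^\theta(\lambda)\neq 0$. In the middle region you correctly reduce $Q^\theta(\lambda)=0$ to the system $\varrho=0$ and $\beta_1^\theta(\lambda)^2\beta_2^\theta(\lambda)^2=u_1u_2a^2b^2$, but then you conclude by saying this ``places $\lambda$ in $\sigma_{\rm p}(H^\theta)$ via the spectral criterion of \cite{NRT}''. That does not finish the proof: the lemma asserts boundedness of $M^\theta(\lambda+i0)$ for \emph{all} $\lambda$ in the stated range, with no exclusion of eigenvalues, so identifying a hypothetical zero of $Q^\theta$ with an embedded eigenvalue only shifts the burden to showing that embedded eigenvalues are absent --- and in this paper that absence is a \emph{corollary} of the present lemma, so your appeal is circular. (It is also not accurate as stated: the eigenvalue criterion of Proposition \ref{proposition_kernel} in the all-channels-open region involves $\ker(\u)\cap\ker(\P_1^\theta\v)\cap\ker(\P_2^\theta\v)$, not the vanishing of $Q^\theta$.) The paper instead derives an outright contradiction from the standing assumptions on $v$: in the middle region, $\varrho=u_2a^2+u_1b^2=0$ with $a\geq b\geq 0$, $a>0$ forces $b>0$, $a^2=b^2$ and $u_1u_2=-1$, and then the real-part equation reads $-\beta_1^\theta(\lambda)^2\beta_2^\theta(\lambda)^2=a^2b^2>0$, which is impossible; in the two side regions the analogous real/imaginary split leads to $a^2=b^2$ with $u_1=u_2$, excluded because $v$ is assumed genuinely $2$-periodic. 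You should replace the appeal to the spectral criterion by this direct contradiction (your own displayed system already contains it), and note that the hypotheses $a\geq b$, $a>0$ and the exclusion of the $1$-periodic case are exactly what is needed.
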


\begin{proof}
We first observe that
\begin{equation*}
\u + G(H_0^\theta-\lambda-i0)^{-1}G^*=\begin{pmatrix}
u_1+\frac{a^2}2T_+^\theta(\lambda) & -\frac{ab \e^{-i\frac{\theta}{2}}}2T_-^\theta(\lambda)\\
-\frac{ab \e^{i\frac{\theta}{2}}}2T_-^\theta(\lambda)& u_2+\frac{b^2}2T_+^\theta(\lambda) 
\end{pmatrix}.
\end{equation*}
Expressions \eqref{eq:inv} and \eqref{eq:det} are then readily obtained. 

In only remains to check that $Q^\theta(\lambda)\neq 0$, which corresponds to the absence of embedded eigenvalues. 
Since $\lambda \not \in \T^\theta$, the condition $Q^\theta(\lambda)= 0$ is equivalent to
\begin{equation}\label{eq:testM}
2u_1u_2\beta_1^\theta(\lambda)^2\beta_2^\theta(\lambda)^2+\big(c_1\beta_2^\theta(\lambda)^2+c_2\beta_1^\theta(\lambda)^2\big)(u_2a^2+u_1b^2)+2a^2b^2c_1c_2=0.
\end{equation}

\paragraph{Case: $\lambda \in (\lambda_1^\theta-2, \lambda_2^\theta-2)$.}
One has $c_1=i$ and $c_2=1$, and then \eqref{eq:testM} is equivalent to the system
\begin{equation*}
\begin{cases}
u_1u_2\beta_2^\theta(\lambda)^2+\frac12(u_2a^2+u_1b^2)=0\\
\frac12(u_2a^2+u_1b^2)\beta_2^\theta(\lambda)^2+a^2b^2=0.
\end{cases}
\end{equation*}
From the second equation we need $(u_2a^2+u_1b^2)<0$ and hence we have either $u_2=u_1=-1$, or $u_2=-1$ and $u_1=1$. From the first equation we need $u_1u_2=1$ and hence we can disregard the latter option. The new system becomes 
\begin{equation*}
\begin{cases}
\beta_2^\theta(\lambda)^2-\frac12(a^2+b^2)=0\\
-\frac12(a^2+b^2)\beta_2^\theta(\lambda)^2+a^2b^2=0.
\end{cases}
\end{equation*}
Isolating $\beta_2^\theta(\lambda)^2$ we obtain 
$\frac12(a^2+b^2)=\frac{2a^2b^2}{a^2+b^2}$
which is equivalent to $a^2=b^2$. As a result, $v(1)=v(2)$, which is excluded.  

\paragraph{Case: $\lambda \in (\lambda_2^\theta-2,\lambda_1^\theta+2)$.}
In this case one has $c_1=c_2=i$ and \eqref{eq:testM} becomes
\begin{equation*}
\begin{cases}
u_1u_2\beta_1^\theta(\lambda)^2\beta_2^\theta(\lambda)^2-a^2b^2=0\\
\big(\beta_2^\theta(\lambda)^2+\beta_1^\theta(\lambda)^2\big)(u_2a^2+u_1b^2)=0.
\end{cases}
\end{equation*}
From the second equation we infer that $a^2=b^2$ and $u_1u_2=-1$, but this immediately contradicts the first equation.

\paragraph{Case: $\lambda \in (\lambda_1^\theta+2,\lambda_2^\theta+2)$.}
In this case we have $c_1=-1$ and $c_2=i$ \eqref{eq:testM} becomes
\begin{equation*}
\begin{cases}
u_1u_2\beta_1^\theta(\lambda)^2-\frac12(u_2a^2+u_1b^2)=0\\
\frac12(u_2a^2+u_1b^2)\beta_1^\theta(\lambda)^2-a^2b^2=0
\end{cases}
\end{equation*}
which can be treated as the first case.
\end{proof}

As shown in the previous proof, let us emphasise the following statement:

\begin{Corollary}
For any $\theta \in (0,\pi)$, the operator $H^\theta$ has no embedded eigenvalue. Furthermore, a necesary and sufficent condition for $\lambda$ to be an eigenvalue is given for $\lambda<\lambda_1^\theta-2$ by
\begin{equation}\label{eq:eigenvaluebelow}
2u_1u_2\beta_1^\theta(\lambda)^2\beta_2^\theta(\lambda)^2+\big(\beta_2^\theta(\lambda)^2+\beta_1^\theta(\lambda)^2\big)(u_2a^2+u_1b^2)+2a^2b^2=0
\end{equation}
and for $\lambda > \lambda^\theta_2+2$ by
\begin{equation}\label{eq:eigenvalueabove}
2u_1u_2\beta_1^\theta(\lambda)^2\beta_2^\theta(\lambda)^2-\big(\beta_2^\theta(\lambda)^2+\beta_1^\theta(\lambda)^2\big)(u_2a^2+u_1b^2)+2a^2b^2=0.
\end{equation}
\end{Corollary}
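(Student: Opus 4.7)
The plan is to combine a Birman--Schwinger style reduction for eigenvalues lying outside the essential spectrum with the threshold analysis already carried out in Section \ref{subsec:thresholds} for the two embedded thresholds $\lambda_2^\theta-2$ and $\lambda_1^\theta+2$.

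First I would set up the reduction. Using the factorisation $V=G^*\u G$, one has for any $\lambda\notin\sigma(H_0^\theta)$ the equivalence that $\lambda\in\sigma_{\rm p}(H^\theta)$ if and only if the $2\times2$ matrix $\u + G(H_0^\theta - \lambda)^{-1}G^*$ has a non-trivial kernel. The direction $(\Rightarrow)$ follows by taking an eigenvector $\phi$ and setting $\zeta:=\u G\phi$, which is readily checked to satisfy $\big(\u + G(H_0^\theta - \lambda)^{-1}G^*\big)\zeta=0$; conversely, a non-zero kernel element $\zeta$ produces the eigenvector $\phi:=-(H_0^\theta-\lambda)^{-1}G^*\zeta$. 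Both steps are short and standard.

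Next I would identify the determinant. The matrix $\u + G(H_0^\theta - \lambda)^{-1}G^*$ has been written explicitly at the start of the proof of Lemma \ref{lemmaM1invertible}, and a direct expansion of its $2\times2$ determinant --- using the identity $(T_+^\theta)^2 - (T_-^\theta)^2 = 4c_1c_2/\big(\beta_1^\theta(\lambda)^2\beta_2^\theta(\lambda)^2\big)$ --- yields exactly the numerator of $Q^\theta(\lambda)$ given in \eqref{eq:det}, up to the strictly positive factor $1/\big(2\beta_1^\theta(\lambda)^2\beta_2^\theta(\lambda)^2\big)$. In the two exterior regions one has $c_1=c_2=1$ for $\lambda<\lambda_1^\theta-2$ and $c_1=c_2=-1$ for $\lambda>\lambda_2^\theta+2$; substituting these values into \eqref{eq:det} and setting the numerator equal to zero reproduces \eqref{eq:eigenvaluebelow} and \eqref{eq:eigenvalueabove} directly.

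Finally, for the absence of embedded eigenvalues, I would argue by cases. For $\lambda$ in the interior of $\sigma(H_0^\theta)$ and away from the two embedded thresholds, Lemma \ref{lemmaM1invertible} has already shown that $Q^\theta(\lambda)\neq 0$; since any embedded eigenvector would, via the stationary formalism of \cite{NRT}, force the boundary-value matrix $\u + G(H_0^\theta-\lambda-i0)^{-1}G^*$ to have a non-trivial kernel and hence $Q^\theta(\lambda)=0$, there can be no eigenvalues in the interior. At the two embedded thresholds themselves I would appeal to Proposition \ref{proposition_kernel} together with the calculation from the proof of Proposition \ref{S(lambda)N=2}: at $\lambda_2^\theta-2$ the scalar coefficient $\frac{\varrho}{a^2+b^2}+i\frac{a^2b^2}{2\Xi_-(\theta)(a^2+b^2)}$ appearing in $I_1(0)$ cannot vanish (its imaginary part is zero only if $ab=0$, in which case the real part equals $u_2\in\{-1,+1\}$ because $a>0$), forcing $S_1=0$ and hence ruling out both resonances and eigenvectors at this threshold; the analogous computation at $\lambda_1^\theta+2$ is obtained by the obvious symmetry. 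The main obstacle I expect is this embedded-threshold step, since the clean Birman--Schwinger reduction is unavailable there and one must rely on the more delicate threshold expansion of \cite{NRT}.
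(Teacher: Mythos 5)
Your proposal is correct, and at its core it follows the same route as the paper, whose proof of this corollary is simply a re-reading of the proof of Lemma \ref{lemmaM1invertible}: the determinant of $\u+c_1\frac{\v\P_1^\theta\v}{\beta_1^\theta(\lambda)^2}+c_2\frac{\v\P_2^\theta\v}{\beta_2^\theta(\lambda)^2}$ equals $Q^\theta(\lambda)$ of \eqref{eq:det}, the exterior cases $c_1=c_2=1$ and $c_1=c_2=-1$ give \eqref{eq:eigenvaluebelow} and \eqref{eq:eigenvalueabove}, and the non-vanishing of $Q^\theta(\lambda)$ in the interior excludes embedded eigenvalues, the link between eigenvalues and kernel conditions being supplied by Proposition \ref{proposition_kernel} (\emph{i.e.}~\cite[Prop.~1.2]{NRT}). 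You deviate in two places. For $\lambda$ outside the essential spectrum you rederive the kernel criterion by a self-contained Birman--Schwinger argument from $V=G^*\u G$ instead of quoting Proposition \ref{proposition_kernel}; this is a harmless, arguably more transparent, substitute. For interior $\lambda$ your bridging claim that an embedded eigenvalue forces the boundary-value matrix $\u+G(H_0^\theta-\lambda-i0)^{-1}G^*$ to have a non-trivial kernel should be made explicit: it follows from Proposition \ref{proposition_kernel}, since a non-zero $\zeta\in\KK$ satisfies $\P_j^\theta\v\zeta=0$ for the open channels, so the imaginary terms annihilate $\zeta$ and hence $Q^\theta(\lambda)=0$ (equivalently, take the imaginary part of $\langle\zeta,\cdot\;\!\zeta\rangle$ to see that $Q^\theta(\lambda)=0$ is in fact equivalent to the condition of Proposition \ref{proposition_kernel}). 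Finally, note that Proposition \ref{proposition_kernel} is stated only for $\lambda\notin\T^\theta$, so invoking it \emph{at} the embedded thresholds is inapposite; what actually rules out threshold eigenvalues is precisely the expansion argument you sketch ($S_1=0$ at $\lambda_2^\theta-2$, and its analogue at $\lambda_1^\theta+2$, cf.~the proof of Proposition \ref{S(lambda)N=2} and Corollary \ref{cor:resonantcase}), a point the paper handles through the analysis of Section \ref{subsec:thresholds} rather than inside this corollary, whose proof only covers $\lambda\notin\T^\theta$. So your treatment of the thresholds is a genuine (and welcome) addition rather than a repetition of the paper's argument.
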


Based on \Cref{lemmaM1invertible} we can obtain simple expressions for 
$S_{jj'}^\theta(\lambda)$. In the next statement, we concentrate on the diagonal elements
$S_{jj}^\theta(\lambda)$ since the off-diagonal element will not be necessary later on. 
We define a scalar valued function by the relation 
$S^\theta(\lambda)_{jj}=:\s_{jj}^\theta(\lambda)\P_j^\theta$. 

Let us also introduce new expressions which will be used subsequently:
\begin{align}
& A_1^\theta(\lambda):=\beta_1^\theta(\lambda)^2\big(2u_1u_2\beta_2^\theta(\lambda)^2+\varrho\big), &\quad &B_1^\theta(\lambda):=2a^2b^2+\varrho\beta_2^\theta(\lambda)^2, \label{A1B1}\\
&A_2^\theta(\lambda):=2u_1u_2\beta_1^\theta(\lambda)^2\beta_2^\theta(\lambda)^2-2a^2b^2,  & \quad  & B_2^\theta(\lambda):=\varrho\big(\beta_1^\theta(\lambda)^2+\beta_2^\theta(\lambda)^2\big),\label{C2D2}\\
& A_3^\theta(\lambda):=\beta_2^\theta(\lambda)^2\big(2u_1u_2\beta_1^\theta(\lambda)^2-\varrho\big), & \quad &B_3^\theta(\lambda):=-2a^2b^2+\varrho\beta_1^\theta(\lambda)^2 \label{A3B3} .
\end{align}

\begin{Proposition}\label{prop:grosses_expressions}
The following formulae hold:
\begin{equation}\label{eq:s11}
\s_{11}^\theta(\lambda)=\begin{cases}
\frac{A_1^\theta(\lambda)-iB_1^\theta(\lambda)}{A_1^\theta(\lambda)+iB_1^\theta(\lambda)}&\text{ for }\ \lambda \in (\lambda^\theta_1-2,\lambda^\theta_2-2),\\[2mm]
1+\frac{4a^2b^2-i2\varrho\beta_2^\theta(\lambda)^2}{A_2^\theta(\lambda)+iB_2^\theta(\lambda)}&\text{ for }\ \lambda \in (\lambda^\theta_2-2, \lambda^\theta_1+2),
\end{cases}
\end{equation}
and
\begin{equation}\label{eq:s22}
\s_{22}^\theta(\lambda)=\begin{cases}
1+\frac{4a^2b^2-i2\varrho\beta_1^\theta(\lambda)^2}{A_2^\theta(\lambda)+iB_2^\theta(\lambda)}&\text{ for }\ \lambda \in (\lambda^\theta_2-2,\lambda^\theta_1+2),\\[2mm]
\frac{A_3^\theta(\lambda)-iB_3^\theta(\lambda)}{A_3^\theta(\lambda)+iB_3^\theta(\lambda)}&\text{ for }\ \lambda \in (\lambda^\theta_1+2,\lambda^\theta_2+2).
\end{cases}
\end{equation}
\end{Proposition}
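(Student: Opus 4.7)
The plan is to derive both formulae from the stationary representation \eqref{eq_S_lambda} combined with the explicit inverse provided by Lemma \ref{lemmaM1invertible}. Since $S^\theta(\lambda)_{jj} = \s_{jj}^\theta(\lambda)\P_j^\theta$ and $\P_j^\theta = \frac12 |\xi_j^\theta\rangle\langle \xi_j^\theta|$ for $N=2$, taking the trace (or sandwiching by $\xi_j^\theta$) reduces the question to computing the scalar quantity
\[
\s_{jj}^\theta(\lambda)
= 1 - \frac{i}{\beta_j^\theta(\lambda)^2}\,\langle \xi_j^\theta, \v\, M^\theta(\lambda+i0)\,\v\, \xi_j^\theta\rangle
\]
in each relevant interval, using the explicit $2\times 2$ matrix \eqref{eq:inv}.

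First I would compute $\v M^\theta(\lambda+i0)\v\,\xi_j^\theta$ by direct matrix multiplication. Using the identities $T_+^\theta + T_-^\theta = 2c_1/\beta_1^\theta(\lambda)^2$ and $T_+^\theta - T_-^\theta = 2c_2/\beta_2^\theta(\lambda)^2$, the off-diagonal and diagonal contributions combine cleanly: the phases $e^{\pm i\theta/2}$ cancel upon pairing with $\xi_j^{\theta*}$, and one obtains
\[
\langle \xi_1^\theta, \v M^\theta(\lambda+i0)\v \xi_1^\theta\rangle
= \frac{\varrho\beta_2^\theta(\lambda)^2 + 2a^2b^2 c_2}{\beta_2^\theta(\lambda)^2\, Q^\theta(\lambda)},
\quad
\langle \xi_2^\theta, \v M^\theta(\lambda+i0)\v \xi_2^\theta\rangle
= \frac{\varrho\beta_1^\theta(\lambda)^2 + 2a^2b^2 c_1}{\beta_1^\theta(\lambda)^2\, Q^\theta(\lambda)}.
\]
Multiplying through by $\beta_1^\theta(\lambda)^2\beta_2^\theta(\lambda)^2/(\beta_1^\theta(\lambda)^2\beta_2^\theta(\lambda)^2)$ and using \eqref{eq:det} for $Q^\theta(\lambda)$ gives a compact rational expression whose numerator and denominator depend on the constants $c_1,c_2\in\{1,-1,i\}$ governing each spectral region.

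Next I would substitute the appropriate $(c_1,c_2)$ in each of the three sub-intervals and separate real from imaginary parts of the denominator $2u_1u_2\beta_1^2\beta_2^2 + (c_1\beta_2^2+c_2\beta_1^2)\varrho + 2a^2b^2c_1c_2$. For $\lambda\in(\lambda_1^\theta-2,\lambda_2^\theta-2)$ the choice $(c_1,c_2)=(i,1)$ makes the real part equal to $A_1^\theta(\lambda)$ and the imaginary part equal to $B_1^\theta(\lambda)$, while the numerator $-2i(\varrho\beta_2^2+2a^2b^2c_2)$ becomes $-2iB_1^\theta(\lambda)$; the identity $1-\frac{2iB_1^\theta}{A_1^\theta+iB_1^\theta}=\frac{A_1^\theta-iB_1^\theta}{A_1^\theta+iB_1^\theta}$ yields the first line of \eqref{eq:s11}. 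For the middle interval $(\lambda_2^\theta-2,\lambda_1^\theta+2)$ with $(c_1,c_2)=(i,i)$ the real part becomes $A_2^\theta(\lambda)$ and the imaginary part $B_2^\theta(\lambda)$, producing the common form for the second lines of \eqref{eq:s11} and \eqref{eq:s22}. For $(\lambda_1^\theta+2,\lambda_2^\theta+2)$ with $(c_1,c_2)=(-1,i)$ the real and imaginary parts of the denominator are $A_3^\theta(\lambda)$ and $B_3^\theta(\lambda)$, and the same algebraic simplification gives the second line of \eqref{eq:s22}.

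The analysis rests entirely on Lemma \ref{lemmaM1invertible}, which guarantees that $Q^\theta(\lambda)\neq 0$ on each open interval, so division is legitimate and no regularisation is needed. The only real obstacle is bookkeeping: keeping track of the phases $e^{\pm i\theta/2}$ when pairing with $\xi_j^\theta$, and correctly identifying the real/imaginary decompositions so that the shorthand constants $A_k^\theta, B_k^\theta$ of \eqref{A1B1}--\eqref{A3B3} appear naturally. Once these identifications are made, the formulae follow by a single algebraic manipulation in each interval.
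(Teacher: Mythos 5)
Your proposal is correct and follows essentially the same route as the paper: starting from the stationary formula \eqref{eq_S_lambda}, sandwiching $\v M^\theta(\lambda+i0)\v$ from Lemma \ref{lemmaM1invertible} with $\xi_j^\theta$ so that the phases cancel and only $\varrho$ and $T_\pm^\theta$ survive, then substituting $(c_1,c_2)$ in each interval and reading off $A_k^\theta\pm iB_k^\theta$. Your intermediate scalar expressions agree with the paper's \eqref{eq:s11general} and its analogue for $\s_{22}^\theta$, so nothing is missing.
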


\begin{proof}
We start with the computation for $\s^\theta_{11}(\lambda)$.
Let us firstly observe from \eqref{eq_S_lambda} that
\begin{equation*}
S^\theta(\lambda)_{11}=\P_1^\theta -\frac{2i}{\beta_1^\theta(\lambda)^2}\P_1^\theta \v M^\theta(\lambda+i0)\v \P_1^\theta
=\Big(1-\frac{i}{\beta_1^\theta(\lambda)^2}
\big\langle\xi^\theta_1,\v M^\theta(\lambda+i0)\v \xi_1^\theta\big\rangle\Big)\P_1^\theta.
\end{equation*}
By using Lemma \ref{lemmaM1invertible} and the explicit form of $\xi^\theta_1$ one easily infers that 
\begin{align*}
\big\langle \xi^\theta_1 , \v M(\lambda+i0) \v \xi_1^\theta \big\rangle 
=\frac{1}{Q^\theta(\lambda)}\Big(a^2b^2\big(T^\theta_+(\lambda)-T^\theta_-(\lambda)\big)+\varrho\Big),
\end{align*}
and by a few additional computations
\begin{align}
\s_{11}^\theta(\lambda)=&1-i\frac{a^2b^2\big(T^\theta_+(\lambda)-T^\theta_-(\lambda)\big)+\varrho}{\beta_1^\theta(\lambda)^2 Q^\theta(\lambda)}\nonumber \\
=&1-i\frac{4a^2b^2c_2+2\beta_2^\theta(\lambda)^2\varrho}{2u_1u_2\beta_1^\theta(\lambda)^2\beta_2^\theta(\lambda)^2+\big(c_1\beta_2^\theta(\lambda)^2+c_2\beta_1^\theta(\lambda)^2\big)\varrho+2a^2b^2c_1c_2}.\label{eq:s11general}
\end{align}

Let us now recall that for $\lambda \in (\lambda_1^\theta-2,\lambda_2^\theta-2)$ we have $c_1=i$ and $c_2=1$. It follows that for such $\lambda$ we have
\begin{align*}
\s_{11}^\theta(\lambda)=&1-i\frac{4a^2b^2+2\beta_2^\theta(\lambda)^2\varrho}{2u_1u_2\beta_1^\theta(\lambda)^2\beta_2^\theta(\lambda)^2+\big(i\beta_2^\theta(\lambda)^2+\beta_1^\theta(\lambda)^2\big)\varrho+i2a^2b^2}\\
=&1+\frac{i\big(-4a^2b^2-2\beta_2^\theta(\lambda)^2\varrho\big)}{\beta_1^\theta(\lambda)^2\big(2u_1u_2\beta_2^\theta(\lambda)^2+\varrho\big)+i\big(\beta_2^\theta(\lambda)^2\varrho+2a^2b^2\big)},
\end{align*}
from where the first formula of \eqref{eq:s11} follows. For $\lambda \in (\lambda_2^\theta-2,\lambda_1^\theta+2)$ the result can also be directly obtained from \eqref{eq:s11general},  recalling that in this case $c_1=c_2=i$.

We now sketch the computation for $S^\theta(\lambda)_{22}$. From  \eqref{eq_S_lambda}
one has
\begin{equation*}
S^\theta(\lambda)_{22}=\P_2^\theta -\frac{2i}{\beta_2^\theta(\lambda)^2}\P_2^\theta \v M^\theta(\lambda+i0)\v \P_2^\theta
=\Big(1-\frac{i}{\beta_2^\theta(\lambda)^2}
\big\langle\xi^\theta_2,\v M^\theta(\lambda+i0)\v \xi_2^\theta\big\rangle\Big)\P_2^\theta.
\end{equation*}
Recalling that $ \xi_2^\theta=\left(\begin{smallmatrix}\e^{i\theta/2} \\ \e^{i\theta}\end{smallmatrix}\right)$ and using \eqref{eq:inv} we easily obtain 
\begin{equation*}
\s_{22}^\theta(\lambda)
=1-i\frac{4a^2b^2c_1+2\beta_1^\theta(\lambda)^2\varrho}{2u_1u_2\beta_1^\theta(\lambda)^2\beta_2^\theta(\lambda)^2+\big(c_1\beta_2^\theta(\lambda)^2+c_2\beta_1^\theta(\lambda)^2\big)\varrho+2a^2b^2c_1c_2}.
\end{equation*}
As before, the first formula of \eqref{eq:s22} can be obtained directly by replacing $c_1=c_2=i$. For the second one, we replace $c_1=-1$ and $c_2=i$ and compute 
\begin{align}
\s_{22}^\theta(\lambda)=&1-i\frac{-4a^2b^2+2\beta_1^\theta(\lambda)^2\varrho}{2u_1u_2\beta_1^\theta(\lambda)^2\beta_2^\theta(\lambda)^2-\beta_2^\theta(\lambda)^2\varrho+i(\beta_1^\theta(\lambda)^2\varrho-2a^2b^2)}\nonumber \\
=&\frac{2u_1u_2\beta_1^\theta(\lambda)^2\beta_2^\theta(\lambda)^2-\beta_2^\theta(\lambda)^2\varrho-i\big(\beta_1^\theta(\lambda)^2\varrho-2a^2b^2\big)}{2u_1u_2\beta_1^\theta(\lambda)^2\beta_2^\theta(\lambda)^2-\beta_2^\theta(\lambda)^2\varrho+i\big(\beta_1^\theta(\lambda)^2\varrho-2a^2b^2\big)},
\end{align}
which leads to the expected result.
\end{proof}

For completeness, let us show that the previous expressions also allow us to compute the values
of the scattering matrix at thresholds. For the thresholds $\lambda^\theta_1-2$ and $\lambda^\theta_2-2$, these values were already presented in Proposition \ref{S(lambda)N=2}.
We therefore only provide the missing information, namely the values at $\lambda_1^\theta+2$ and at $\lambda^\theta_2+2$. Observe however that the treatement presented in Section \ref{subsec:thresholds} contains extra information: it shows for example that eigenvalues
located at thresholds do not exist, while the argument presented below does not 
rule out this possibility.

\begin{Proposition}\label{S(lambda)N=2bis}
Let $\theta \in(0,\pi)$ and consider $a\geq b \geq 0$ with $a>0$.
Then the following equalities hold:
\begin{numcases}{S^\theta(\lambda_2^\theta+2)_{22}=}
 \P_2^{\theta} & if\  $2\varrho\Xi_+(\theta)-a^2b^2=0$, \label{eq:k1bis} \\
 -\P_2^{\theta} & otherwise, \label{eq:k3bis}
\end{numcases}
and
\begin{equation*}   
S^\theta(\lambda_1^\theta+2)_{11}=-\P_1^\theta. 
\end{equation*}
\end{Proposition}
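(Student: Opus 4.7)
The plan is to read off both identities directly from the explicit formulas of Proposition \ref{prop:grosses_expressions} by taking one-sided limits from the left at the two upper thresholds, these being the only side available since channels $1$ and $2$ close at $\lambda_1^\theta+2$ and $\lambda_2^\theta+2$ respectively. The relevant branches are thus the second line of \eqref{eq:s11}, valid on $(\lambda_2^\theta-2,\lambda_1^\theta+2)$, and the second line of \eqref{eq:s22}, valid on $(\lambda_1^\theta+2,\lambda_2^\theta+2)$. The required threshold values of $\beta^\theta_j(\cdot)^2$ are elementary: using $\lambda_1^\theta=-\lambda_2^\theta=-2\cos(\theta/2)$ one finds $\beta_1^\theta(\lambda_1^\theta+2)^2=0$, $\beta_2^\theta(\lambda_1^\theta+2)^2=4\Xi_-(\theta)$, and symmetrically $\beta_2^\theta(\lambda_2^\theta+2)^2=0$, $\beta_1^\theta(\lambda_2^\theta+2)^2=4\Xi_+(\theta)$.

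For $S^\theta(\lambda_1^\theta+2)_{11}$ the strategy is to substitute these values into \eqref{C2D2} and \eqref{eq:s11}. This gives $A_2^\theta(\lambda_1^\theta+2)=-2a^2b^2$ and $B_2^\theta(\lambda_1^\theta+2)=4\varrho\,\Xi_-(\theta)$, and one observes that the numerator $4a^2b^2-2i\varrho\,\beta_2^\theta(\lambda_1^\theta+2)^2$ is exactly $-2$ times the denominator. Thus $\s_{11}^\theta(\lambda_1^\theta+2)=1-2=-1$ without any case distinction. One only needs to check that the denominator is non-zero: its real part $-2a^2b^2$ vanishes only when $b=0$, and in that case its imaginary part $4u_2a^2\Xi_-(\theta)$ is non-zero since $a>0$ and $\Xi_-(\theta)>0$.

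For $S^\theta(\lambda_2^\theta+2)_{22}$ the substitution yields the cleaner $A_3^\theta(\lambda_2^\theta+2)=0$ and $B_3^\theta(\lambda_2^\theta+2)=2\bigl(2\varrho\,\Xi_+(\theta)-a^2b^2\bigr)$. In the generic case the latter is non-zero, so \eqref{eq:s22} reduces immediately to $(-iB_3^\theta)/(iB_3^\theta)=-1$, proving \eqref{eq:k3bis}. The resonant case is where the real work lies: both $A_3^\theta$ and $B_3^\theta$ vanish at the threshold, so a Taylor expansion in $\epsilon:=\lambda-(\lambda_2^\theta+2)<0$ is needed. Since $\beta_2^\theta(\lambda)^2=\sqrt{|\epsilon|(4+\epsilon)|}\sim 2\sqrt{-\epsilon}$ while $\beta_1^\theta(\lambda)^2$ is smooth at the threshold with a non-zero derivative proportional to $(1+2\cos(\theta/2))/(2\Xi_+(\theta))$, one expects
\begin{equation*}
A_3^\theta(\lambda)=2\sqrt{-\epsilon}\,\bigl(8u_1u_2\Xi_+(\theta)-\varrho\bigr)+O(|\epsilon|^{3/2}),
\qquad
B_3^\theta(\lambda)=\varrho\,\tfrac{1+2\cos(\theta/2)}{2\Xi_+(\theta)}\,\epsilon+O(\epsilon^2).
\end{equation*}
Hence $B_3^\theta/A_3^\theta=O(\sqrt{-\epsilon})\to 0$, and the ratio in \eqref{eq:s22} tends to $1$, yielding \eqref{eq:k1bis}.

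The one obstacle is ensuring that the leading coefficient $8u_1u_2\Xi_+(\theta)-\varrho$ in the expansion of $A_3^\theta$ does not itself vanish under the resonance condition $2\varrho\,\Xi_+(\theta)=a^2b^2$, since otherwise a higher-order analysis would be required. I would rule this out by contradiction: the simultaneous vanishing forces $16u_1u_2\Xi_+(\theta)^2=a^2b^2$, hence $u_1u_2>0$ (when $b>0$) and therefore $u_1=u_2$; then $\varrho=u_1(a^2+b^2)=8u_1\Xi_+(\theta)$ together with $ab=4\Xi_+(\theta)$ implies $(a-b)^2=0$, contradicting the standing assumption that $v$ be genuinely $2$-periodic.
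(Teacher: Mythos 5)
Your proposal is correct and follows essentially the same route as the paper: one-sided limits of the explicit formulas of Proposition \ref{prop:grosses_expressions}, with the case distinction governed by the vanishing of $B_3^\theta(\lambda_2^\theta+2)=2\big(2\varrho\Xi_+(\theta)-a^2b^2\big)$. You are in fact slightly more careful than the paper in the resonant case, where concluding $B_3^\theta/A_3^\theta\to 0$ requires the leading coefficient $8u_1u_2\Xi_+(\theta)-\varrho$ of $A_3^\theta$ to be non-zero — a point the paper leaves implicit but which your contradiction argument settles (modulo a harmless sign slip: the two vanishings give $\varrho=8\Xi_+(\theta)$ since $u_1u_2=1$, rather than $8u_1\Xi_+(\theta)$, but either way one arrives at $(a-b)^2=0$, contradicting the standing assumption that $a>b$ when $u_1=u_2$).
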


\begin{proof}
Starting from \eqref{eq:s11} for $\lambda \in (\lambda_2^\theta-2, \lambda^\theta_1+2)$
and by observing that $\lim_{\lambda \nearrow \lambda^\theta_1+2}\beta^\theta_1(\lambda)^2=0$, 
one easily infers that $\lim_{\lambda \nearrow \lambda^\theta_1+2}\s^\theta_{11}(\lambda)=-1$.
This result leads direly to $S^\theta(\lambda_1^\theta+2)_{11}=-\P_1^\theta$.

On the other hand, by starting from  \eqref{eq:s22} for $\lambda \in (\lambda_1^\theta+2, \lambda^\theta_1+2)$ and by observing that $\lim_{\lambda \nearrow \lambda^\theta_2+2}
\beta^\theta_2(\lambda)^2=0$, one has to be more careful for the limit
$\lim_{\lambda \nearrow \lambda^\theta_2+2}\s^\theta_{22}(\lambda)$.
If $B^\theta_3(\lambda^\theta_2+2)\neq 0$, then
$\lim_{\lambda \nearrow \lambda^\theta_2+2}\s^\theta_{22}(\lambda)=-1$.
On the other hand, if  $B^\theta_3(\lambda^\theta_2+2)= 0$, observe that
for $\varepsilon>0$ one has
$A^\theta_3(\lambda^\theta_2+2-\varepsilon)=O(\varepsilon^{1/2})$ as $\varepsilon \searrow 0$, 
while $B_3(\lambda^\theta_2+2-\varepsilon)=O(\varepsilon)$ as $\varepsilon \searrow 0$. 
In this case, we then 
infer that $\lim_{\lambda \nearrow \lambda^\theta_2+2}\s^\theta_{22}(\lambda)=1$.
It only remains to observe that 
$$
B^\theta_3(\lambda^\theta_2+2)= 0 \Longleftrightarrow
2\varrho\Xi_+(\theta)-a^2b^2=0,
$$
which leads to \eqref{eq:k1bis}.
\end{proof}

Let us briefly compare Proposition \ref{S(lambda)N=2} with Proposition \ref{S(lambda)N=2bis}. 
By comparing \eqref{eq:k1} and  \eqref{eq:k1bis} one deduces that it is impossible to have simultaneously a resonance at  $\lambda_1^\theta-2$ and at $\lambda_2^\theta+2$. 
Here, simultaneously means for the same potential $V$.
In fact, looking at Figure \ref{fig:SN2} one infers that the regions compatible with a resonance at $\lambda_2^\theta+2$ are the image of the white regions obtained by a symmetry with respect to the origin.
 
%--------------------------------------------------------------------------------------
\subsection{Computations of some total variations}\label{subsec:windinghorizontal}
%--------------------------------------------------------------------------------------

In this section, we compute some winding numbers for some values of $u_1, u_2, a^2, b^2$, and
$\theta$.
We know from Theorem \ref{thm:Adam} and Proposition \ref{S(lambda)N=2} that 
\begin{equation}\label{levinsonformularesonances}
\# \sigma_{\rm p}(H^\theta)=2-\frac{\#\{ \text{resonances of }H^\theta \}}2+ \Var \big(\lambda \to \det S^\theta(\lambda)\big).
\end{equation}
The total variation is computed as the sum of the three partial contributions on 
$(\lambda_1^\theta-2,\lambda_2^\theta-2)$, $(\lambda_2^\theta-2,\lambda_1^\theta+2)$, 
and on $(\lambda_1^\theta+2,\lambda_2^\theta+2)$. 
We stress that the increase of the variation is counted clockwise.

We start by a simple statement providing the expression for $\det S^\theta(\lambda)$.

\begin{Lemma}\label{lem:calculodetslambda}
For any $\theta \in (0,\pi)$ the following equalities hold:
\begin{equation*}
\det S^\theta(\lambda)=\begin{cases}
\s^\theta_{11}(\lambda)& \hbox{ for }\  \lambda \in (\lambda_1^\theta-2,\lambda_2^\theta-2), \\[2mm]
\frac{\s^\theta_{22}(\lambda)}{\overline{\s^\theta_{11}(\lambda)}}&\hbox{ for }\ \lambda \in (\lambda_2^\theta-2,\lambda_1^\theta+2), \\[2mm]
\s^\theta_{22}(\lambda)&\hbox{ for }\ \lambda \in (\lambda_1^\theta+2,\lambda_2^\theta+2),
\end{cases}
\end{equation*}
where the r.h.s.~should be understood as the limit
$
\lim_{\tilde{\lambda}\to\lambda}\frac{\s^\theta_{22}(\tilde{\lambda})}{\overline{\s^\theta_{11}(\tilde{\lambda})}}$
if $\s^\theta_{11}(\lambda)=0$.
\end{Lemma}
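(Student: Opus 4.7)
The plan is to treat each of the three spectral regions separately, as they correspond to different numbers of open scattering channels and hence to different dimensions of the fiber Hilbert space $\Hrond^\theta(\lambda)$.

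On the first interval $\lambda \in (\lambda_1^\theta-2,\lambda_2^\theta-2)$, only the first channel is open, so $\Hrond^\theta(\lambda)=\P_1^\theta\;\!\C^2$ is one-dimensional. Since $S^\theta(\lambda)$ is unitary on this fiber and reduces to $\s^\theta_{11}(\lambda)\P_1^\theta$, the determinant is simply $\s^\theta_{11}(\lambda)$. The third interval $\lambda \in (\lambda_1^\theta+2,\lambda_2^\theta+2)$ is handled identically, with only channel 2 open.

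For the middle interval $\lambda \in (\lambda_2^\theta-2,\lambda_1^\theta+2)$, both channels are open and $\Hrond^\theta(\lambda)=\C^2$. In the orthonormal basis $\{\tfrac{1}{\sqrt{2}}\xi_1^\theta,\tfrac{1}{\sqrt{2}}\xi_2^\theta\}$ of eigenvectors of $A^\theta$, the scattering matrix takes the form
\[
S^\theta(\lambda) = \begin{pmatrix} \s^\theta_{11}(\lambda) & s^\theta_{12}(\lambda) \\ s^\theta_{21}(\lambda) & \s^\theta_{22}(\lambda)\end{pmatrix},
\]
with diagonal entries given by the scalar channel functions $\s^\theta_{jj}$ introduced earlier. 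The key step is to invoke the identity $S^\theta(\lambda)^{-1}=S^\theta(\lambda)^*$ valid for any $2\times 2$ unitary: writing the inverse by Cramer's rule and equating the $(1,1)$ entry to $\overline{\s^\theta_{11}(\lambda)}$ yields
\[
\det S^\theta(\lambda)=\frac{\s^\theta_{22}(\lambda)}{\overline{\s^\theta_{11}(\lambda)}},
\]
provided $\s^\theta_{11}(\lambda)\neq 0$.

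The only delicate point—and the main (mild) obstacle—is the treatment of the isolated values of $\lambda$ at which $\s^\theta_{11}(\lambda)=0$. Unitarity forces $|\s^\theta_{11}|=|\s^\theta_{22}|$, so $\s^\theta_{22}(\lambda)=0$ there as well, and the ratio $\s^\theta_{22}/\overline{\s^\theta_{11}}$ is of indeterminate form $0/0$. However, $\det S^\theta(\lambda)=-s^\theta_{12}(\lambda)s^\theta_{21}(\lambda)$ is well defined and unimodular, and $\lambda\mapsto\det S^\theta(\lambda)$ is continuous on the open interval where the coefficients of $S^\theta(\lambda)$ are continuous (by Proposition \ref{prop:grosses_expressions} and the continuity properties recalled in Section \ref{sec:themodel}). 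This continuity justifies interpreting the right-hand side as the limit $\lim_{\tilde\lambda\to\lambda}\s^\theta_{22}(\tilde\lambda)/\overline{\s^\theta_{11}(\tilde\lambda)}$, which exists and agrees with $\det S^\theta(\lambda)$, thereby completing the proof.
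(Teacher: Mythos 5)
Your argument follows the same route as the paper on the easy parts: the outer intervals are one-dimensional, and on $(\lambda_2^\theta-2,\lambda_1^\theta+2)$ the relation $\det S^\theta(\lambda)=\s^\theta_{22}(\lambda)/\overline{\s^\theta_{11}(\lambda)}$ for $\s^\theta_{11}(\lambda)\neq 0$ is exactly the paper's appeal to the general form of a unitary $2\times2$ matrix, with the extension to zeros of $\s^\theta_{11}$ done by continuity in both cases.

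There is, however, a genuine gap at the one point where the lemma requires work. You speak of ``the isolated values of $\lambda$ at which $\s^\theta_{11}(\lambda)=0$'', but you never show that these zeros are isolated, and your continuity argument needs this: if $\s^\theta_{11}$ vanished on a whole subinterval, the quotient $\s^\theta_{22}(\tilde\lambda)/\overline{\s^\theta_{11}(\tilde\lambda)}$ would be undefined for every $\tilde\lambda$ near such a point, the limit in the statement would be meaningless, and continuity of $\lambda\mapsto\det S^\theta(\lambda)$ cannot exclude this scenario; nor does unitarity, since $|\s^\theta_{11}|=|\s^\theta_{22}|$ a priori allows both diagonal entries to vanish together on a set with nonempty interior. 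The paper's proof is devoted precisely to this verification: using \eqref{eq:s11}, the condition $\s^\theta_{11}(\lambda)=0$ on the middle interval is equivalent to $a^2b^2=-u_1u_2\beta_1^\theta(\lambda)^2\beta_2^\theta(\lambda)^2$ together with either $\varrho=0$ or $\beta_1^\theta(\lambda)^2=\beta_2^\theta(\lambda)^2$; in the first case the equation $a^4=\beta_1^\theta(\lambda)^2\beta_2^\theta(\lambda)^2$ has at most two solutions, and in the second case $\beta_1^\theta(\lambda)^2=\beta_2^\theta(\lambda)^2$ holds only at $\lambda=0$. Hence $\s^\theta_{11}$ vanishes at finitely many points, and only then does the continuity argument yield the stated limit. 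To complete your proof you must supply this finiteness (or at least nowhere-density) argument, either by the computation above from Proposition \ref{prop:grosses_expressions}, or by real-analyticity of $\s^\theta_{11}$ away from $\T^\theta$ combined with the observation that it is not identically zero (for instance because $\s^\theta_{11}(\lambda)\to-1$ as $\lambda\nearrow\lambda_1^\theta+2$).
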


\begin{proof}
We only need to check the statement for $\lambda \in (\lambda_2^\theta-2,\lambda_1^\theta+2)$, 
and this statement can be deduced easily from the general form of a unitary $2\times 2$ matrix,  as long as $\s_{11}^\theta(\lambda )\neq 0$.
Then, we only need to check when $\s^\theta_{11}(\lambda)$ vanishes. 
From \eqref{eq:s11} we see that this only occurs when
\begin{equation*}
\begin{cases}
4a^2b^2=-A_2^\theta(\lambda)\\
2\varrho\beta_2^\theta(\lambda)^2=B_2^\theta(\lambda)
\end{cases}
\Longleftrightarrow
\begin{cases}
a^2b^2=-u_1u_2\beta_1^\theta(\lambda)^2\beta_2^\theta(\lambda)^2\\
\beta_2^\theta(\lambda)^2=\beta_1^\theta(\lambda)^2 \quad \hbox{ or }\quad \varrho=0.
\end{cases}.
\end{equation*}
Let us first suppose that $\varrho=0$ and notice that the first equation becomes
\begin{equation*}
a^4=\beta_1^\theta(\lambda)^2\beta_2^\theta(\lambda)^2
\end{equation*}
which can be satisfied for at most two values of $\lambda$. If $\varrho\neq0$
one observes that $\beta_2^\theta(\lambda)^2=\beta_1^\theta(\lambda)^2$ only for $\lambda=\frac{1}{2}(\lambda_2^\theta+\lambda_1^\theta)=0$.
The result follows then from the continuity of the map $\lambda \mapsto S^\theta(\lambda)$ outside of $\T^\theta$.
\end{proof}

The computation of $\Var \big(\lambda \to \det S^\theta(\lambda)\big)$ for 
$\lambda \in (\lambda_1^\theta-2,\lambda_2^\theta-2)$ and for $\lambda \in (\lambda_1^\theta+2,\lambda_2^\theta+2)$ can be directly inferred from Lemma \ref{lem:calculodetslambda} and from 
\eqref{eq:s11} and \eqref{eq:s22}.
Let us now deduce a convenient expression for $\lambda \in (\lambda_2^\theta-2,\lambda_1^\theta+2)$.

\begin{Lemma}
For $\lambda \in (\lambda_2^\theta-2,\lambda_1^\theta+2)$ one has
\begin{equation}\label{eq:Sint}
\frac{\s^\theta_{22}(\lambda)}{\overline{\s^\theta_{11}(\lambda)}}
=\frac{A^\theta_2(\lambda)-iB^\theta_2(\lambda)}{A^\theta_2(\lambda)+iB^\theta_2(\lambda)}.
\end{equation}
\end{Lemma}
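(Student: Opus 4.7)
My plan is to read off the explicit expressions for $\s^\theta_{11}(\lambda)$ and $\s^\theta_{22}(\lambda)$ on the middle interval $(\lambda_2^\theta-2,\lambda_1^\theta+2)$ from Proposition \ref{prop:grosses_expressions} and simply massage them into a common form. Concretely, starting from
\[
\s^\theta_{22}(\lambda)=1+\frac{4a^2b^2-i2\varrho\beta_1^\theta(\lambda)^2}{A^\theta_2(\lambda)+iB^\theta_2(\lambda)},
\]
I would combine the $1$ with the fraction to obtain a single rational expression with denominator $A^\theta_2(\lambda)+iB^\theta_2(\lambda)$. Similarly I would take the complex conjugate of
\[
\s^\theta_{11}(\lambda)=1+\frac{4a^2b^2-i2\varrho\beta_2^\theta(\lambda)^2}{A^\theta_2(\lambda)+iB^\theta_2(\lambda)}
\]
and write it as a single rational expression with denominator $A^\theta_2(\lambda)-iB^\theta_2(\lambda)$ (using that $A^\theta_2$ and $B^\theta_2$ are real-valued on this interval, which is obvious from their defining formulae \eqref{C2D2}).

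The key algebraic observation, which I expect to be the crux of the calculation, is that the two numerators coincide. Indeed, expanding with $A^\theta_2(\lambda)=2u_1u_2\beta_1^\theta(\lambda)^2\beta_2^\theta(\lambda)^2-2a^2b^2$ and $B^\theta_2(\lambda)=\varrho(\beta_1^\theta(\lambda)^2+\beta_2^\theta(\lambda)^2)$, the numerator of $\s^\theta_{22}(\lambda)$ becomes
\[
2u_1u_2\beta_1^\theta(\lambda)^2\beta_2^\theta(\lambda)^2+2a^2b^2+i\varrho\bigl(\beta_2^\theta(\lambda)^2-\beta_1^\theta(\lambda)^2\bigr),
\]
and the numerator of $\overline{\s^\theta_{11}(\lambda)}$, after the sign flips caused by conjugation, yields exactly the same expression: the $4a^2b^2$ terms combine with $-2a^2b^2$ to give $+2a^2b^2$, while the imaginary parts $-i\varrho(\beta_1^\theta(\lambda)^2+\beta_2^\theta(\lambda)^2)+i2\varrho\beta_2^\theta(\lambda)^2$ collapse to $i\varrho(\beta_2^\theta(\lambda)^2-\beta_1^\theta(\lambda)^2)$. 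Call this common numerator $N(\lambda)$.

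Once the numerators are shown to agree, the identity \eqref{eq:Sint} follows immediately, since
\[
\frac{\s^\theta_{22}(\lambda)}{\overline{\s^\theta_{11}(\lambda)}}=\frac{N(\lambda)/\bigl(A^\theta_2(\lambda)+iB^\theta_2(\lambda)\bigr)}{N(\lambda)/\bigl(A^\theta_2(\lambda)-iB^\theta_2(\lambda)\bigr)}=\frac{A^\theta_2(\lambda)-iB^\theta_2(\lambda)}{A^\theta_2(\lambda)+iB^\theta_2(\lambda)},
\]
provided $N(\lambda)\neq 0$. The division is legitimate on the open interval away from the (at most finitely many) zeros of $\s^\theta_{11}$ identified in the proof of Lemma \ref{lem:calculodetslambda}; the equality then extends to the zeros by the continuity convention adopted there. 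There is no real obstacle here — the entire statement is a short verification — but organising the bookkeeping so that the cancellation of the two numerators is transparent is the one place where care is needed.
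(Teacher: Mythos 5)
Your proof is correct and takes essentially the same route as the paper: a direct algebraic manipulation of the formulas from Proposition \ref{prop:grosses_expressions}, whose crux is the identity $B^\theta_2(\lambda)-2\varrho\beta_1^\theta(\lambda)^2=2\varrho\beta_2^\theta(\lambda)^2-B^\theta_2(\lambda)$, i.e.\ exactly the equality of your two numerators. Your additional remark about the zeros of the common numerator, handled via the continuity convention of Lemma \ref{lem:calculodetslambda}, is a minor refinement that the paper leaves implicit.
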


\begin{proof}
From the expressions contained in Proposition \ref {prop:grosses_expressions} one gets
\begin{equation*}
\frac{1+\frac{4a^2b^2-i2\varrho\beta_1^\theta(\lambda)^2}{A_2^\theta(\lambda)+iB_2^\theta(\lambda)}}{1+\frac{4a^2b^2+i2\varrho\beta_2^\theta(\lambda)^2}{A_2^\theta(\lambda)-iB_2^\theta(\lambda)}}
=\frac{A_2^\theta(\lambda)-iB_2^\theta(\lambda)}{A_2^\theta(\lambda)+iB_2^\theta(\lambda)}\times \frac{A^\theta_2(\lambda)+4a^2b^2+i\big(B^\theta_2(\lambda) -2\varrho\beta_1^\theta(\lambda)^2\big)}{A^\theta_2(\lambda)+4a^2b^2+i\big(2\varrho\beta_2^\theta(\lambda)^2-B^\theta_2(\lambda)\big)}, 
\end{equation*}
and one concludes by observing that the equality
\begin{equation*}
B^\theta_2(\lambda) -2\varrho\beta_1^\theta(\lambda)^2
=2\varrho\beta_2^\theta(\lambda)^2-B^\theta_2(\lambda)
\end{equation*}
always holds.
\end{proof}

For the subsequent computations, it is will be useful too keep in mind the behaviour of the functions
$\beta_j^\theta(\cdot)^2$. For example, $\beta_1^\theta(\cdot)^2$ vanishes at $\lambda_1^\theta\pm2$, takes the value $4\Xi_-(\theta)$ at $\lambda_2^\theta-2$, and the value
$4\Xi_+(\theta)$ at $\lambda_2^\theta+2$.
Similarly, $\beta_2^\theta(\cdot)^2$ vanishes at $\lambda_2^\theta\pm2$,
takes the value  $4\Xi_-(\theta)$ at $\lambda_1^\theta+2$, and the value
$4\Xi_+(\theta)$ at $\lambda_1^\theta-2$.
The function $\beta_1^\theta(\cdot)^2$ is also monotone increasing on the interval
$(\lambda^\theta_1+2, \lambda^\theta_2+2)$, while the function $\beta_2^\theta(\cdot)^2$
is monotone decreasing on the interval $(\lambda^\theta_1-2,\lambda^\theta_2-2)$.

Finally, for the computation of the left-hand side of \eqref{levinsonformularesonances}
let us recall the main statement about eigenvalues proved in 
\cite[Prop.~1.2]{NRT}~:

\begin{Proposition}\label{proposition_kernel}
A value $\lambda\in\R\setminus\T^\theta$ is an eigenvalue of $H^\theta$ if and only if
$$
\KK:=\ker\left(\u+\sum_{\{j\mid\lambda<\lambda_j^\theta-2\}}\frac{\v\;\!\P_j^\theta\v}
{\beta_j^\theta(\lambda)^2}-\sum_{\{j\mid\lambda>\lambda_j^\theta+2\}}
\frac{\v\;\!\P_j^\theta\v}{\beta_j^\theta(\lambda)^2}\right)
\bigcap\left(\cap_{\{j\mid\lambda\in I^\theta_j\}}\ker\big(\P_j^\theta\v\big)\right)
\ne\{0\},
$$
in which case the multiplicity of $\lambda$ equals the dimension of $\KK$.
\end{Proposition}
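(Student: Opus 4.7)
My plan is to prove this via a Birman--Schwinger type reduction, exploiting the finite-rank structure of the perturbation. Using the factorization $\diag(v)=\u\v^2$, introduce $G\colon\h\to\C^N$ by $(G\f)_i:=\v_{ii}\int_0^\pi\f_i(\omega)\,\d\omega/\pi$, so that $\diag(v)P_0=G^*\u G$. For any $\psi\in\ker(H^\theta-\lambda)$, set $\xi:=G\psi\in\C^N$; the eigenvalue equation rewrites as $(H_0^\theta-\lambda)\psi=-G^*\u\xi$. Since $H_0^\theta$ is purely absolutely continuous, any nonzero $\psi$ forces $\xi\neq 0$, so the task reduces to identifying those $\xi$ for which this equation admits a solution $\psi\in\h$ consistent with the self-consistency relation $G\psi=\xi$.

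To handle embedded values $\lambda\in I^\theta$, where $(H_0^\theta-\lambda)^{-1}$ is unbounded, I would pass to the spectral representation via $\zeta:=\F^\theta\psi$. Using $\F^\theta H_0^\theta(\F^\theta)^*=X^\theta$ and the explicit formula for $\F^\theta$ applied to the constant function $G^*\u\xi$, the equation becomes $(\mu-\lambda)\zeta(\mu)=-\pi^{-1/2}\sum_{\{j:\mu\in I_j^\theta\}}\bigl(4-(\mu-\lambda_j^\theta)^2\bigr)^{-1/4}\P_j^\theta\v\u\xi$ for a.e.~$\mu\in I^\theta$. Dividing by $(\mu-\lambda)$, the solution $\zeta$ lies in $\Hrond^\theta$ if and only if its fiber at $\mu=\lambda$ vanishes in each active channel, i.e.~$\P_j^\theta\v\u\xi=0$ for every $j$ with $\lambda\in I_j^\theta$ (the prefactor $(4-(\lambda-\lambda_j^\theta)^2)^{-1/4}$ being finite and nonzero because $\lambda\notin\T^\theta$, so any nonvanishing coefficient would generate a non-$L^2$ singularity at $\mu=\lambda$). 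Setting $\mathbf{x}:=\u\xi=\u G\psi$ and using $\u^2=1$, this reads $\P_j^\theta\v\mathbf{x}=0$, which is precisely the second kernel condition in the statement.

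The self-consistency $G\psi=\xi$ is then recast as a Birman--Schwinger equation on $\C^N$ via the limiting absorption principle. A direct computation yields $G(H_0^\theta-\lambda-i\varepsilon)^{-1}G^*=\sum_{j=1}^N I_j(\lambda,\varepsilon)\,\v\P_j^\theta\v$ with $I_j(\lambda,\varepsilon):=\int_0^\pi\frac{\d\omega/\pi}{2\cos\omega+\lambda_j^\theta-\lambda-i\varepsilon}$. As $\varepsilon\searrow 0$, the classical closed-form evaluation gives $I_j\to+\beta_j^\theta(\lambda)^{-2}$ when $\lambda<\lambda_j^\theta-2$ and $I_j\to-\beta_j^\theta(\lambda)^{-2}$ when $\lambda>\lambda_j^\theta+2$; for $\lambda\in I_j^\theta$ the limit combines a principal value and a Sokhotski residue, but the second kernel condition $\P_j^\theta\v\mathbf{x}=0$ annihilates the entire term $\v\P_j^\theta\v\mathbf{x}$ regardless of the precise value of $I_j(\lambda,+i0)$. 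The equation $(\u+G(H_0^\theta-\lambda-i0)^{-1}G^*)\mathbf{x}=0$ therefore reduces on the constrained subspace to exactly the first kernel condition in the statement.

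For the multiplicity count, I would verify that $\psi\mapsto\mathbf{x}=\u G\psi$ realises a linear bijection between $\ker(H^\theta-\lambda)$ and $\KK$: injectivity is immediate, since $\mathbf{x}=0\Rightarrow G\psi=0\Rightarrow(H_0^\theta-\lambda)\psi=0\Rightarrow\psi=0$ by the absolute continuity of $H_0^\theta$; surjectivity follows by reversing the construction, defining $\psi:=(\F^\theta)^*\zeta$ with $\zeta(\mu):=-(\mu-\lambda)^{-1}(\F^\theta G^*\mathbf{x})(\mu)$ and verifying directly that $G\psi=\u\mathbf{x}$ whenever the two kernel conditions hold. The main obstacle I anticipate lies in the simultaneous $L^2$ and limiting-absorption analysis at the embedded point $\lambda$: one must verify both that the constrained $\zeta$ has no singularity at $\mu=\lambda$ and that no residue is lost when replacing $G(H_0^\theta-\lambda-i\varepsilon)^{-1}G^*$ by its boundary value. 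Working channel-by-channel, as the decomposition above suggests, seems the cleanest way to keep all the bookkeeping transparent and to match the precise kernel structure claimed in the statement.
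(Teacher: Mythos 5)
Your proposal is essentially correct, but note that this paper does not prove the statement at all: Proposition \ref{proposition_kernel} is imported verbatim from \cite[Prop.~1.2]{NRT}, so there is no internal proof to compare against. Your Birman--Schwinger reduction is exactly the natural route (and is in the spirit of the resolvent analysis of \cite{NRT}, where the matrix $\u+G(H_0^\theta-z)^{-1}G^*$ is the central object). The key steps all check out: $\diag(v)P_0=G^*\u G$; purely absolutely continuous $H_0^\theta$ forces $\xi=G\psi\neq0$; in the spectral representation the requirement $\zeta\in\Hrond^\theta$ forces $\P_j^\theta\v\mathbf{x}=0$ exactly in the open channels (the prefactor $(4-(\lambda-\lambda_j^\theta)^2)^{-1/4}$ is finite and nonzero because $\lambda\notin\T^\theta$); and the closed-channel integrals $\frac1\pi\int_0^\pi\frac{\d\omega}{\lambda_j^\theta+2\cos\omega-\lambda}$ evaluate to $\pm\beta_j^\theta(\lambda)^{-2}$ with the signs matching the displayed formula for $\u+G(H_0^\theta-\lambda-i0)^{-1}G^*$ in Section \ref{subsec:inside}, so the self-consistency $G\psi=\xi$ becomes precisely the first kernel condition.

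Two points you flag as potential obstacles are in fact harmless and worth stating cleanly. First, no limiting absorption principle is needed: once $\P_j^\theta\v\mathbf{x}=0$ for every active channel, the channel-$j$ component of $\zeta$ is \emph{identically} zero (the numerator is a fixed vector times a scalar function), so $G\psi$ only receives contributions from the closed channels, where the integral is an elementary convergent one; no principal value or Sokhotski residue ever enters (incidentally, for this model the principal value part of the open-channel integral vanishes anyway). Second, for the multiplicity statement make the surjectivity step explicit: given $0\neq\mathbf{x}\in\KK$, the constructed $\psi$ satisfies $G\psi=\u\mathbf{x}$, hence $\u G\psi=\mathbf{x}\neq0$ and in particular $\psi\neq0$; conversely if $\v\mathbf{x}=0$ then membership in $\KK$ forces $\u\mathbf{x}=0$, and since $\u$ is an invertible diagonal sign matrix, $\mathbf{x}=0$. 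This shows $\psi\mapsto\u G\psi$ is a linear bijection from $\ker(H^\theta-\lambda)$ onto $\KK$, giving the equality of dimensions claimed in the statement.
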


%--------------------------------------------------------------------------------------
\subsubsection{The special case \texorpdfstring{$b^2=0$}{b=0}.}
%--------------------------------------------------------------------------------------

We firstly consider $b^2=0$ and arbitrary $\theta\in (0,\pi)$. In this case, $u_2=1$ and 
$\varrho=a^2$.

\paragraph{Variation on $(\lambda_1^\theta-2,\lambda_2^\theta-2)$.}

From \eqref{A1B1} we obtain
\begin{equation*}
A^\theta_1(\lambda)=2u_1\beta_1^\theta(\lambda)^2\beta_2^\theta(\lambda)^2
+a^2\beta_1^\theta(\lambda)^2 
\quad \hbox{and} \quad 
B^\theta_1(\lambda)=a^2\beta_2^\theta(\lambda)^2.
\end{equation*}
Then the curve $(\lambda_1^\theta-2,\lambda_2^\theta-2)\ni \lambda \mapsto 
A^\theta_1(\lambda)-iB^\theta_1(\lambda) \in \C$ starts at $-4ia^2\Xi_+(\theta)$, ends at $4a^2\Xi_-(\theta)$, and keeps a strictly imaginary negative part on this domain. From this one concludes that 
\begin{equation}\label{eq:b=0parte1}
\Var\left((\lambda_1^\theta-2,\lambda_2^\theta-2)\ni \lambda \mapsto \det S^\theta(\lambda)\right)=-\tfrac12.
\end{equation}

\paragraph{Variation on $(\lambda_2^\theta-2,\lambda_1^\theta+2)$ .}

From \eqref{C2D2} we obtain
\begin{equation*}
A_2^\theta(\lambda)=2u_1\beta_1^\theta(\lambda)^2\beta_2^\theta(\lambda)^2  
\quad \hbox{and} \quad 
B_2^\theta(\lambda)=a^2\big(\beta_1^\theta(\lambda)^2+\beta_2^\theta(\lambda)^2\big).
\end{equation*}
Then the curve  $(\lambda_2^\theta-2,\lambda_1^\theta+2) \ni \lambda \mapsto 
A_2^\theta(\lambda)-iB_2^\theta(\lambda) \in \C$ is a closed curve starting and ending at $-4ia^2\Xi_-(\theta)$. Since $A_2^\theta(\lambda)$ has the sign of $u_1$ and $B_2^\theta(\lambda)>0$, the closed curve lies either in quadrant III for $u_1=-1$,  or in quadrant IV for $u_1=1$. We then infer from Lemma \ref{lem:calculodetslambda} and from \eqref{eq:Sint}
that
\begin{equation}\label{eq:b=0parte2}
\Var \left((\lambda_2^\theta-2,\lambda_1^\theta+2)\ni \lambda \mapsto \det S^\theta(\lambda)\right)=0.
\end{equation}

\paragraph{Variation on $(\lambda_1^\theta+2,\lambda_2^\theta+2)$.}

From \eqref{A3B3} we obtain
\begin{equation*}
A^\theta_3(\lambda)=2u_1\beta_1^\theta(\lambda)^2\beta_2^\theta(\lambda)^2-a^2\beta_2^\theta(\lambda)^2 
\quad \hbox{and} \quad 
B^\theta_3(\lambda)=a^2\beta_1^\theta(\lambda)^2.
\end{equation*}
Then the curve $(\lambda_1^\theta+2,\lambda_2^\theta+2)\ni \lambda \mapsto A^\theta_3(\lambda)-iB^\theta_3(\lambda)\in \C$
starts at $-4a^2\Xi_-(\theta)$, ends at $-4ia^2\Xi_+(\theta)$, and keeps a strictly negative imaginary part on this domain.  From this one infers that 
\begin{equation}\label{eq:b=0parte3}
\Var \left((\lambda_1^\theta+2,\lambda_2^\theta+2)\ni \lambda \mapsto \det
S^\theta(\lambda)\right)=-\tfrac12.
\end{equation}

By summing the three contributions \eqref{eq:b=0parte1}, \eqref{eq:b=0parte2} and 
\eqref{eq:b=0parte3} we get that
$$
 \Var \big(\lambda \to \det S^\theta(\lambda)\big) = -\tfrac12+0-\tfrac12 = -1.
$$
Since the special case $b=0$ does not lead to any resonance, according to Proposition \ref{S(lambda)N=2}, it follows that \eqref{levinsonformularesonances} reads in this case
\begin{equation*}
\# \sigma_{\rm p}(H^\theta)=2-0+(-1)=1.
\end{equation*}
Indeed, by using Proposition \ref{proposition_kernel} one can check that $H^\theta$ has only one eigenvalue above its essential spectrum if $u_1=1$, and below its essential spectrum if $u_1=-1$, 
see also \cite[Example 3.2]{NRT}.

%--------------------------------------------------------------------------------------
\subsubsection{The special case \texorpdfstring{$a^2=b^2$}{a²=b²}.}
%--------------------------------------------------------------------------------------

We now turn now our attention to the special case $a^2=b^2>0$ and
arbitrary $\theta \in (0,\pi)$. In order to avoid a 
$1$-periodic system we assume without loss of generality that $u_1=1$ and $u_2=-1$. 
In this case, it follows that $\varrho=0$.

\paragraph{Variation on $(\lambda_1^\theta-2,\lambda_2^\theta-2)$.}

From \eqref{A1B1} we obtain
\begin{equation*}
A^\theta_1(\lambda)=-2\beta_1^\theta(\lambda)^2\beta_2^\theta(\lambda)^2
\quad \hbox{and} \quad 
B^\theta_1(\lambda)=2a^4.
\end{equation*}
Then the curve $(\lambda_1^\theta-2,\lambda_2^\theta-2)\ni \lambda \mapsto 
A^\theta_1(\lambda)-iB^\theta_1(\lambda) \in \C$ 
is a closed curve starting and ending at $-2ia^4$ and lying in quadrant III. 
From this one concludes that 
\begin{equation}\label{eq:a=bparte1}
\Var\left((\lambda_1^\theta-2,\lambda_2^\theta-2)\ni \lambda \mapsto \det S^\theta(\lambda)\right)=0.
\end{equation}

\paragraph{Variation on $(\lambda_2^\theta-2,\lambda_1^\theta+2)$ .}

From \eqref{C2D2} we obtain
\begin{equation*}
A_2^\theta(\lambda)=-2\beta_1^\theta(\lambda)^2\beta_2^\theta(\lambda)^2-2a^4
\quad \hbox{and} \quad 
B_2^\theta(\lambda)=0.
\end{equation*}
In this case one has $\frac{\s_{22}^\theta(\lambda)}{\overline{\s_{11}^\theta(\lambda)}}=1$ for all $\lambda \in (\lambda_2^\theta-2,\lambda_1^\theta+2)$ and hence
\begin{equation}\label{eq:a=bparte2}
\Var \left((\lambda_2^\theta-2,\lambda_1^\theta+2)\ni \lambda \mapsto \det S^\theta(\lambda)\right)=0.
\end{equation}

\paragraph{Variation on $(\lambda_1^\theta+2,\lambda_2^\theta+2)$.}

From \eqref{A3B3} we obtain
\begin{equation*}
A^\theta_3(\lambda)=-2\beta_1^\theta(\lambda)^2\beta_2^\theta(\lambda)^2
\quad \hbox{and} \quad 
B^\theta_3(\lambda)=-2a^4.
\end{equation*}
Then the curve $(\lambda_1^\theta+2,\lambda_2^\theta+2)\ni \lambda \mapsto A^\theta_3(\lambda)-iB^\theta_3(\lambda)\in \C$
is a closed curve starting and ending at $2ia^4$ and lying in quadrant II, and therefore
\begin{equation}\label{eq:a=bparte3}
\Var \left((\lambda_1^\theta+2,\lambda_2^\theta+2)\ni \lambda \mapsto \det
S^\theta(\lambda)\right)=0.
\end{equation}

By summing the three contributions \eqref{eq:a=bparte1}, \eqref{eq:a=bparte2} and 
\eqref{eq:a=bparte3} we get that
$$
 \Var \big(\lambda \to \det S^\theta(\lambda)\big) = 0+0+0 = 0.
$$
Since the special case $a^2=b^2$ does not lead to any resonance, according to Proposition \ref{S(lambda)N=2}, it follows that \eqref{levinsonformularesonances} reads in this case
\begin{equation*}
\# \sigma_{\rm p}(H^\theta)=2-0+0=2.
\end{equation*}
Indeed, by using Proposition \ref{proposition_kernel} one can check that  $H^\theta$ has one eigenvalue above and one eigenvalue below its essential spectrum: the condition for an eigenvalue
at $\lambda \not \in \overline{I^\theta}$ coming from \eqref{eq:eigenvaluebelow} and \eqref{eq:eigenvalueabove} simplifies in 
\begin{equation}\label{eq:vp2}
-\beta_1^\theta(\lambda)^{2}\beta_2^\theta(\lambda)^{2}+a^2b^2=0.
\end{equation}
Since $\beta_1^\theta(\lambda)^{2}\beta_2^\theta(\lambda)^{2}$ vanishes at $\lambda_1^\theta-2$ and at $\lambda_2^\theta+2$, 
$\lim_{\lambda \to \pm \infty}\beta_1^\theta(\lambda)^{2}\beta_2^\theta(\lambda)^{2} =\infty$,
and $\beta_1^\theta(\lambda)^{2}\beta_2^\theta(\lambda)^{2}$ behaves quadratically below and above the essential spectrum,  one infers that \eqref{eq:vp2} has exactly two solutions. 

%--------------------------------------------------------------------------------------
\subsubsection{A resonant case: \texorpdfstring{$a^2=1$, $b^2=\tfrac12$ and $u_1=u_2=-1$}{a²=1, b²=1/2, and u1=u2=-1}.}
%--------------------------------------------------------------------------------------

For this example we will fix $a^2=1$, $b^2=\tfrac12$, and $u_1=u_2=-1$, in order to be 
in the white region of Figure \ref{fig:SN2}. As a consequence, the behaviour at thresholds
will depend on $\theta \in (0,\pi)$. We observe that $\varrho=-\frac32$, and fix $\theta_0\in (0,\pi)$
by the condition 
\begin{equation}\label{eq:theta0}
\Xi_+(\theta_0)=\tfrac16.
\end{equation}
Note that this is possible since $\Xi_+$ is a strictly decreasing function on $(0,\pi)$ with 
range $(0,2)$.

\paragraph{Variation on $(\lambda_1^\theta-2,\lambda_2^\theta-2)$.}

From \eqref{A1B1} we obtain
\begin{equation*}
A^\theta_1(\lambda)=\beta_1^\theta(\lambda)^2\big(2\beta_2^\theta(\lambda)^2-\tfrac32\big)
\quad \hbox{and} \quad 
B^\theta_1(\lambda)=1-\tfrac32\beta_2^\theta(\lambda)^2.
\end{equation*}
We first observe that for any $\theta\in (0,\pi)$ the function $B_1^\theta$ is strictly increasing 
on  $(\lambda_1^\theta-2,\lambda_2^\theta-2)$, with 
$B^\theta_1(\lambda_1^\theta-2)=1-6\Xi_+(\theta_0)$ and 
$B_1^\theta(\lambda_2^\theta-2)=1$.
In addition, the following relations clearly hold: 
\begin{equation*}
B_1^\theta(\lambda_1^\theta-2)<0  \text{ for } \theta<\theta_0, \quad B_1^\theta(\lambda_1^\theta-2)=0  \text{ for } \theta=\theta_0, \quad B_1^\theta(\lambda_1^\theta-2)>0 \text{ for } \theta>\theta_0.
\end{equation*}
For $\theta <\theta_0$, we define $\lambda^\theta_B$  by the relation $B_1^\theta(\lambda^\theta_B)=0$.

We can also observe that $A_1^\theta(\lambda^\theta_1-2)=0$ and that the sign of $A_1^\theta(\lambda)$ is given by the sign of $2\beta_2^\theta(\lambda)^2-\tfrac32$. In particular, since $\beta_2^\theta$ is monotone decreasing on $(\lambda_1^\theta-2,\lambda_2^\theta-2)$, 
the function $A_1^\theta$ is negative on this interval if and only if
\begin{equation*}
2\beta_2^\theta(\lambda_1^\theta-2)^2-\tfrac32 \leq 0 \Longleftrightarrow \Xi_+(\theta) \leq \tfrac3{16}. 
\end{equation*}
On the other hand, for $\theta\in (0,\pi)$ verifying the condition $\Xi_+(\theta)> \tfrac3{16}$, the function $A_1^\theta$ on $(\lambda_1^\theta-2,\lambda_2^\theta-2)$ is firstly positive, then
equal to $0$ for some $\lambda^\theta_A$, and continues decreasing until it reaches 
the value $-6\Xi_-(\theta)$ at $\lambda_2^\theta-2$.
Note that since the function $\Xi_+$ is decreasing on $(0,\pi)$ and since $\tfrac3{16}>\tfrac16$, 
it follows that the parameters $\theta$ verifying the previous condition also satisfy 
$\theta<\theta_0$.

Let us now consider the $A_1^\theta-iB_1^\theta$ for $\theta$ satisfying 
$\Xi_+(\theta)> \tfrac3{16}$.
One readily observes that 
\begin{equation*}
\beta_2^\theta(\lambda^\theta_A)^2=\tfrac34 \quad\hbox{ and }\quad \beta_2^\theta(\lambda^\theta_B)^2=\tfrac23, 
\end{equation*} 
from which one deduces that $\lambda_A <\lambda_B$. Hence, the curve
\begin{equation}\label{eq:curve}
(\lambda_1^\theta-2,\lambda_2^\theta-2)\ni \lambda \mapsto A_1^\theta(\lambda)-iB_1^\theta(\lambda)
\end{equation}
remains in quadrants I, II, and III, starting at $(6\Xi_+(\theta)-1)i$ and ending at
$-6\Xi_-(\theta)-i$.
Then, we conclude that the map $\lambda\mapsto \s_{11}^\theta(\lambda)$ goes from $-1$ to $z:=\frac{6\Xi_-(\theta)+i}{6\Xi_-(\theta)-i}$ in the anticlockwise direction. 
Note that $z$ has positive imaginary part and hence $\Arg(z)\in (0,\pi)$. 
Observe also that the same result directly applies to all $\theta<\theta_0$ since the curve
\eqref{eq:curve} does not visit quadrant I if $\Xi_+(\theta)\leq \tfrac3{16}$.

Let us now consider the case $\theta>\theta_0$.
In this case, the curve \eqref{eq:curve} starts at $(6\Xi_+(\theta)-1)i$ in the lower 
part of the imaginary axis, and ends at $-6\Xi_-(\theta)-i$ while 
remaining in quadrant III. In this case, the map $\lambda \mapsto \s_{11}^\theta(\lambda)$ 
goes from $-1$ to $z$ in the clockwise direction. 

For $\theta=\theta_0$, we already know from \eqref{eq:k1} that
$\s_{11}^\theta(\lambda^\theta_1-2)=1$.
We also easily observe that the map 
$$
(\lambda_1^{\theta_0}-2,\lambda_2^{\theta_0}-2)\ni \lambda \mapsto A_1^{\theta_0}(\lambda)-iB_1^{\theta_0}(\lambda)
$$
stays in quadrant III, since $B_1^{\theta_0}(\lambda)>0$ for $\lambda>\lambda_1^{\theta_0}-2$. We then infer that the map
$\lambda \mapsto \s_{11}^{\theta_0}(\lambda)$
goes from $1$ to $z$ while keeping a positive imaginary part,  and hence in the anticlockwise direction.

If we summarise our findings, we have obtained that
\begin{equation}\label{eq:resonantparte1}
\Var\left((\lambda_1^\theta-2,\lambda_2^\theta-2)\ni \lambda \mapsto \det S^\theta(\lambda)\right)=
\begin{cases}
-\frac12-\frac{\Arg(z)}{2\pi} &\text{ for } \theta<\theta_0,\\
-\frac{\Arg(z)}{2\pi} &\text{ for } \theta=\theta_0,\\
\frac12-\frac{\Arg(z)}{2\pi} &\text{ for } \theta>\theta_0.
\end{cases}
\end{equation}

\paragraph{Variation on $(\lambda_2^\theta-2,\lambda_1^\theta+2)$ .}

From \eqref{C2D2} we get that
\begin{equation*}
A_2^\theta(\lambda)=2\beta_1^\theta(\lambda)^2\beta_2^\theta(\lambda)^2-1
\quad \hbox{and} \quad 
B_2^\theta(\lambda)=-\tfrac32(\beta_1^\theta(\lambda)^2+\beta_2^\theta(\lambda)^2).
\end{equation*}
It follows that the map $(\lambda_2^\theta-2,\lambda_1^\theta+2)\ni \lambda \mapsto A_2^\theta(\lambda)-iB_2^\theta(\lambda)$ is a closed curve on $-1+6\Xi_-(\theta)i$ and lying in quadrants I and II. We then conclude that 
\begin{equation}\label{eq:resonantparte2}
\Var \left((\lambda_2^\theta-2,\lambda_1^\theta+2)\ni \lambda \mapsto \det S^\theta(\lambda)\right)=0.
\end{equation}

\paragraph{Variation on $(\lambda_1^\theta+2,\lambda_2^\theta+2)$.}

From \eqref{A3B3} we obtain
\begin{equation*}
A^\theta_3(\lambda)=\beta_2^\theta(\lambda)^2\big(2\beta_1^\theta(\lambda)^2+\tfrac32\big)
\quad \hbox{and} \quad 
B^\theta_3(\lambda)=-1-\tfrac32\beta_1^\theta(\lambda)^2.
\end{equation*}
One readily infers that the map $(\lambda_1^\theta+2,\lambda_2^\theta+2)\ni \lambda \mapsto A^\theta_3(\lambda)-iB^\theta_3(\lambda)$ remains in quadrant I and hence $\s_{22}^\theta(\lambda)$ has a positive imaginary part. Since we also have
\begin{equation*}
\s_{22}^\theta(\lambda_1^\theta+2)=\frac{A^\theta_3(\lambda_1^\theta+2)-iB^\theta_3(\lambda_1^\theta+2)}{A^\theta_3(\lambda_1^\theta+2)+iB^\theta_3(\lambda_1^\theta+2)}=\frac{6\Xi_-(\theta)+i}{6\Xi_-(\theta)-i}=z,
\end{equation*}
and recalling that $\s_{22}^\theta(\lambda_2^\theta+2)=-1$ we deduce that 
\begin{equation}\label{eq:resonantparte3}
\Var \left((\lambda_1^\theta+2,\lambda_2^\theta+2)\ni \lambda \mapsto \det
S^\theta(\lambda)\right)
=-\tfrac12+\tfrac{\Arg(z)}{2\pi}.
\end{equation}
As before, by considering \eqref{levinsonformularesonances} together with \eqref{eq:resonantparte1}, \eqref{eq:resonantparte2}, and \eqref{eq:resonantparte3} and recalling from Proposition \ref{S(lambda)N=2} that for $\theta=\theta_0$ we have a resonance at the bottom of the spectrum we get 
\begin{equation}
\# \sigma_{\rm p}(H^\theta)=\begin{cases}
2-0+\left(-\frac12-\frac{\Arg(z)}{2\pi}+0-\frac12+\frac{\Arg(z)}{2\pi}\right)&=1 \quad\text{ if } \theta<\theta_0,\\
2-\frac12+\left(-\frac{\Arg(z)}{2\pi}+0-\frac12+\frac{\Arg(z)}{2\pi}\right)&=1 \quad\text{ if } \theta=\theta_0,\\
2-0+\left(\frac12-\frac{\Arg(z)}{2\pi}+0-\frac12+\frac{\Arg(z)}{2\pi}\right)&=2 \quad\text{ if } \theta>\theta_0.
\end{cases}
\end{equation}

We illustrate this case with Figure \ref{fig_Vic} which provides the information about 
the number of eigenvalues and their location below the essential spectrum of $H^\theta$.
More precisely, the sign of the determinant of the matrix
\begin{equation*}
\u+ \frac{\v\P_1^\theta \v}{\beta^\theta_1(\lambda)^2}+ \frac{\v\P_2^\theta \v}{\beta^\theta_2(\lambda)^2}.
\end{equation*}
is computed, as a function of $\lambda$ and $\theta$. Depending on its value,
either a red colour or a blue colour is assigned to the point $(\lambda, \theta)$.
Thus, interfaces between a blue region and a red region coincide with
eigenvalues below the essential spectrum, according to Proposition \ref{proposition_kernel}.
In Figure \ref{fig_Vic}, the black curve represents the bottom of the essential spectrum of $H^\theta$.
For most fixed $\theta$, there exists only one interface between the red region and the blue region, meaning that the corresponding operator $H^\theta$ possesses only one eigenvalue.
However, for $\theta$ close to $\pi$, a second interface appears, as emphasised in the magnified part.
Then the corresponding operators $H^\theta$ possess two eigenvalues below the essential spectrum. 
The minimal value $\theta_0$ above which a second eigenvalue appears can be determined by solving the equation \eqref{eq:theta0}.

\begin{figure}
    \centering
    \includegraphics[width=12cm]{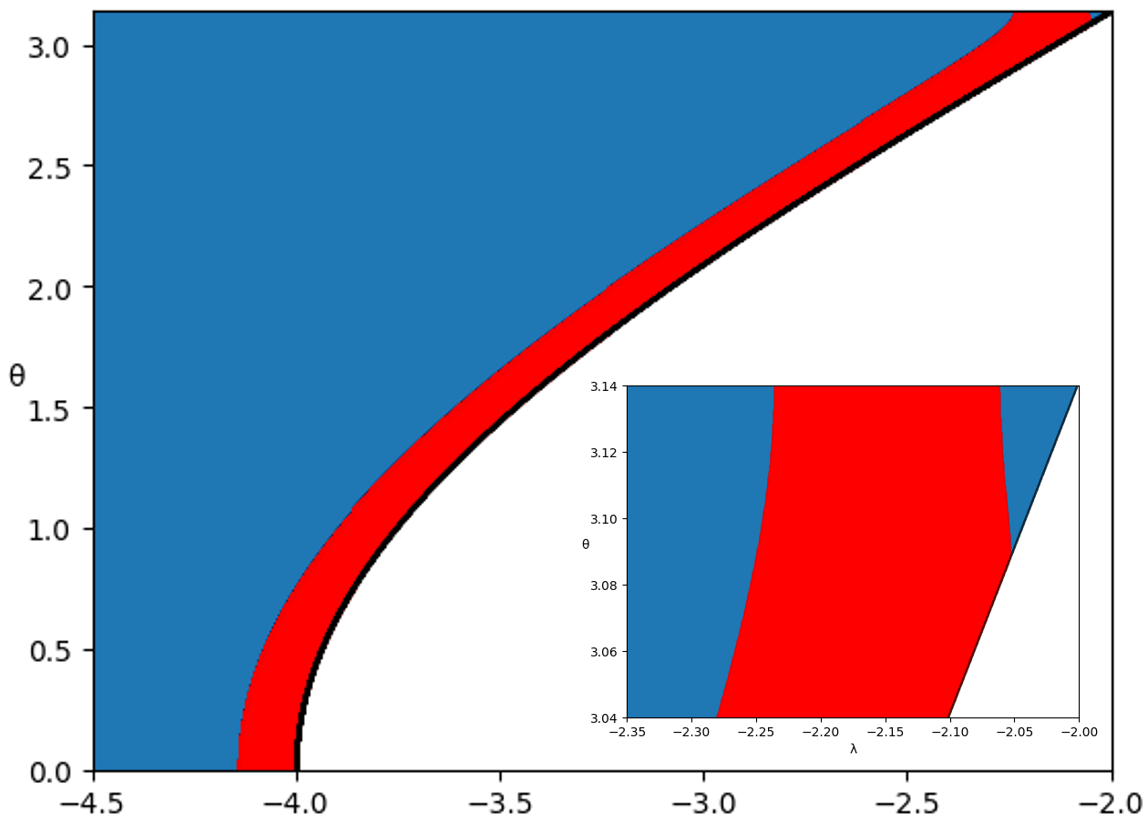}
    \caption{Visual representation of the number of eigenvalues below the essential spectrum of $H^\theta$: For each fixed $\theta$, each interface between a blue and a red region corresponds to an eigenvalue. The magnified picture represents the region close to $\theta=\pi$, where a second interface appears.}
    \label{fig_Vic}
\end{figure}

\section{Appendix}\label{Appendix}

\begin{proof}[Proof of Proposition \ref{prop:step1}]
For $j=1$, the statement corresponds to Lemma \ref{lem:K1}, since 
$$
A(1)\equiv A_1=C_0\big(I;\B(P_1\C^N)\big)=C_0(I)\otimes \B(P_1\C^N).
$$
We then look at the result for $j=2$. In this case the construction leads to the pullback diagram
$$
\xymatrix{ (A_{1}\oplus A_2)\oplus_{\C^2}D_2\ar[r]^{}\ar[d]^{} & A_{1}\oplus A_2\ar[d]\\
D_2\ar[r] & \C^2}
$$
Using the computations of the $K$-theory of  $A_1\oplus A_2$ and $D_2$ 
from Lemmas \ref{lem:K1} and \ref{lem:K2}, the Mayer-Vietoris sequence gives
$$
0\to K_0\big(A(2)\big)\to \Z^2\to \Z^2\to K_1\big(A(2)\big)\to 0.
$$
We will check that the difference homomorphism $K_0(D_2)=\Z^2\to\Z^2= K_0(\C^2)$
is a bijection, and therefore $K_0\big(A(2)\big)=0$ and $K_1\big(A(2)\big)= 0$.
For that purpose, we need the generators of $K_0(D_2)$. 
One choice consists in  the equivalence classes of projections
$$
 \left[t\mapsto\begin{pmatrix} t & \sqrt{t(1-t)}\\\sqrt{t(1-t)}&1-t\end{pmatrix}\right]-\left[t\mapsto \begin{pmatrix}1&0\\0&0\end{pmatrix}\right],\qquad  \left[t\mapsto\begin{pmatrix}0&0\\0&1\end{pmatrix}\right]
$$
for $t\in [0,1]$. By evaluating them at $t=0$ one gets
$$
\left[\begin{pmatrix}0&0\\0&1\end{pmatrix}\right]-\left[\begin{pmatrix}1&0\\0&0\end{pmatrix}\right],\qquad \left[\begin{pmatrix}0&0\\0&1\end{pmatrix}\right]
$$
and therefore we deduce the expected surjection $K_0(D_2)\to K_0(\C^2)$.

By induction, let us now suppose that the result is true for some $j$, and prove it for $j+1$. 
Then we consider the pullback algebra diagram determined by
$$
\xymatrix{ (A(j)\oplus A_{j+1})\oplus_{\B(\C^j)\oplus\C}D_{j+1}\ar[r]^{}\ar[d]^{} & D_{j+1}\ar[d]\\
A(j)\oplus A_{j+1}\ar[r] & \B(\C^j)\oplus\C}
$$
and get the Mayer-Vietoris sequence
$$
0\to K_0\big(A(j+1)\big)\to \Z^{2}\to\Z^{2}\to K_1\big(A(j+1)\big)\to 0.
$$
We again show that the difference homomorphism 
$$
K_0(D_{j+1})=\Z^{2}\to\Z^{2}= K_0\big(\B(\C^j)\oplus \C\big)
$$ 
is a bijection. For that purpose, we need the generators of $K_0(D_{j+1})$. 
Let us set $E_{1,j}\in \C^{j}$ with $E_{1,j}=(1,0,\dots,0)^T$, and let
$E_{j,1}$ denote its transpose. Then we build representative equivalence classes of projections.
These projections are block matrices (the first block being of size $j\times j$) and the choice of representatives are the maps
$$
 \left[t\mapsto\left(\begin{array}{cc}
        t P_1 & \sqrt{t(1-t)}E_{1,j} \\
 %       \hdashline
        \sqrt{t(1-t)}E_{j,1} &1-t
\end{array}\right)\right]
-
\left[t\mapsto\left(\begin{array}{cc}
         P_1 & 0 \\
%        \hdashline
        0 &0
\end{array}\right)\right]
,\qquad  
\left[t\mapsto\left(\begin{array}{cc}
         0 & 0 \\
%        \hdashline
        0 & 1
\end{array}\right)\right]
$$
for $t\in [0,1]$.  By evaluating them at $t=0$ one gets
$$
\left[\left(\begin{array}{cc}
        0& 0 \\
%        \hdashline
        0 &1
\end{array}\right)\right]
-
\left[\left(\begin{array}{cc}
         P_1 & 0 \\
%        \hdashline
        0 &0
\end{array}\right)\right]
,\qquad
\left[\left(\begin{array}{cc}
         0 & 0 \\
%        \hdashline
        0 & 1
\end{array}\right)\right]
$$
and therefore we deduce the expected surjection $K_0(D_{j+1})\to K_0\big(\B(C^j)\oplus \C\big)$. This completes the proof.
\end{proof}

\begin{proof}[Proof of Proposition \ref{prop:step2}]
We shall ignore the unit in the following proof, since adding the unit will only add a copy of 
$\Z$ to $K_0(Q_N)$. 
Let us consider the pullback diagram
$$
\xymatrix{Q_N\ar[r]^{}\ar[d]^{} & \big(A(N-1)\oplus A_N\big)\oplus\big(B_N\oplus B(N-1)\big)\ar[d]\\
C_N\ar[r] & \B(\C^{N-1})\oplus\C\oplus \C\oplus \B\big((\C^1)^\bot\big)}
$$
and the corresponding Mayer-Vietoris sequence
$$
0\to K_0(Q_N)\to  \Z^{3}\to\Z^{4}\to K_1(Q_N)\to 0
$$
where the content of Propositions \ref{prop:step1} and \ref{prop:step1bis}
and of Lemma \ref{lem:K3} have been used for the computation of the various $K$-groups.
By examining the generators, we shall show that the difference homomorphism 
$$
K_0(C_N)=\Z^{3}\to\Z^{4}=K_0\Big( \B(\C^{N-1})\oplus\C\oplus \C\oplus \B\big((\C^1)^\bot\big)\Big)
$$ 
is one-to-one, leading to $K_0(Q_N)=0$ and $K_1(Q_N)=\Z$.
In the construction of the generators of $K_0(C_N)$
we use the notation already introduced in the proof of Proposition \ref{prop:step1}.
These generators are equivalence classes of projections.
These projections are made of block matrices (the first block being of size $(N-1)\times (N-1)$) and one choice of representatives are the maps defined by 
\begin{align*}
& \left[t\mapsto \left(\begin{array}{cc}
        t P_1 & \sqrt{t(1-t)}E_{1,N-1} \\
%        \hdashline
        \sqrt{t(1-t)}E_{N-1,1} &1-t
\end{array}\right)\right]
-
\left[t\mapsto \left(\begin{array}{cc}
         P_1 & 0 \\
%        \hdashline
        0 &0
\end{array}\right)\right]  \\
&  
\left[t\mapsto \left(\begin{array}{cc}
         0 & 0 \\
%        \hdashline
        0 & 1
\end{array}\right)\right]
\qquad \hbox{ and }\qquad 
\left[t\mapsto \left(\begin{array}{cc}
         P_1 & 0 \\
%        \hdashline
        0 & 0
\end{array}\right)\right]
\end{align*}
for $t\in [0,1]$.  By evaluating them at $t=0$ one gets
$$
\left[\left(\begin{array}{cc}
        0& 0 \\
%        \hdashline
        0 &1
\end{array}\right)\right]
-
\left[\left(\begin{array}{cc}
         P_1 & 0 \\
%        \hdashline
        0 &0
\end{array}\right)\right]
,\quad
\left[\left(\begin{array}{cc}
         0 & 0 \\
%        \hdashline
        0 & 1
\end{array}\right)\right]
,\quad
\left[\left(\begin{array}{cc}
         P_1 & 0 \\
%        \hdashline
        0 & 0
\end{array}\right)\right]
$$
and evaluating at $t=1$ gives (with the first block of size $1\times 1$)
$$
\left[\left(\begin{array}{cc}
         0 & 0 \\
%        \hdashline
        0 & P_N
\end{array}\right)\right]
,\quad
\left[\left(\begin{array}{cc}
         1 & 0 \\
%        \hdashline
        0 & 0
\end{array}\right)\right].
$$
The image under the combined evaluation map evidently contains three independent generators of $\Z^4=K_0\Big( \B(\C^{N-1})\oplus\C\oplus \C\oplus \B\big((\C^1)^\bot\big)\Big)$, and that suffices to complete the proof.
\end{proof}

\end{document}